\def\enableLlncsCodeSlightlyImproved{}
\def\nicerThanLlncs{}
\title{Non-Interactive and Non-Destructive Zero-Knowledge Proofs on Quantum States and \publishedOnly{\\}Multi-Party Generation of \publishedOnly{\\}Authorized Hidden \GHZ{} States}
\date{}
\author{% Anonymous submission
  \arxivOnly{%
    Léo Colisson\,\orcidlink{0000-0001-8963-4656}\hspace*{0.05em}\inst{1}, Frédéric Grosshans\,\orcidlink{0000-0001-8170-9668}\hspace*{0.05em}\inst{1}, Elham Kashefi \inst{1,2}%
  }
}
\institute{% Anonymous submission
  \arxivOnly{%
    Laboratoire d'Informatique de Paris 6 (LIP6), Sorbonne Université, \\
    4 Place Jussieu, 75252 Paris CEDEX 05, France \\ \{\href{mailto:leo.colisson@lip6.fr}{leo.colisson}, \href{mailto:frederic.grosshans@lip6.fr}{frederic.grosshans}\}@lip6.fr
    \and
    School of Informatics, University of Edinburgh, \\
    10 Crichton Street, Edinburgh EH8 9AB, UK
  }%
}
\begin{document}
{\def\addcontentsline#1#2#3{}\maketitle} % Otherwise the title appears in the appendix
\begin{abstract}
  %% Let's try to write shorter abstracts :D
  We propose the first generalization of the famous Non-Interactive Zero-Knowledge (NIZK) proofs to \emph{quantum} languages (NIZKoQS) and we provide a protocol to prove advanced properties on a received quantum state non-destructively and non-interactively (a single message being sent from the prover to the verifier).

  In our second orthogonal contribution, we improve the costly Remote State Preparation protocols~\cite{CCKW_2019_qfactory,GV19_ComputationallySecureComposableRemote} that can classically fake a quantum channel (this is at the heart of our NIZKoQS protocol) by showing how to create a multi-qubits state from a single superposition.

  Finally, we generalize these results to a multi-party setting and prove that multiple parties can anonymously distribute a \GHZ{} state in such a way that only participants knowing a secret credential can share this state, which could have applications to quantum anonymous transmission, quantum secret sharing, quantum onion routing and more.

\end{abstract}
\keywords{Quantum Cryptography, Remote State Preparation, Zero-Knowledge, Learning With Errors}
\arxivOnly{\newpage \tableofcontents}

\newpage

\section{Introduction}

Due to the special no-cloning principle, quantum states appear to be very useful in cryptography. But this very same property also has drawbacks: when receiving a quantum state, it is nearly impossible for the receiver to efficiently check non-trivial properties on that state without destroying it. This allows a sender to send maliciously crafted states (potentially entangled with a larger system) without being detected.

To illustrate this, let us imagine the following simple goal. A receiver would like to obtain a quantum state $\ket{\psi}$ and verify, without destroying that state, that this state belongs to some ``\emph{quantum language}'', say the language composed of \BBHeightyFor{} states (so $\ket{\psi} \in \{\ket{0},\ket{1},\ket{+},\ket{-}\}$). Since any direct measurement would destroy that state, a first solution could be to use a generic quantum secure multiparty computing protocol  (QSMPC)~\cite{DNS12_ActivelySecureTwoParty,DGJ+20_SecureMultipartyQuantum,KKMO21_DelegatingMultiPartyQuantum} between the sender and the receiver in order to generate that state. However, these protocols are interactive and require at least $2$ messages (depending on the number of users and on the complexity of the prepared state, the number of rounds can increase significantly). Therefore, the following question was left open:
\begin{center}
  \emph{Is it possible to receive via a single message a quantum state and test non-trivial properties on it non-destructively?}
\end{center}

At the heart of our method are \emph{classical-client Remote State Preparation} (RSP) protocols~\cite{CCKW18,CCKW_2019_qfactory,GV19_ComputationallySecureComposableRemote}, \cite{GV19_ComputationallySecureComposableRemote} being itself based on the groundbreaking work of~\cite{mahadev2017classical}. These protocols can be used to fake a quantum channel using a purely classical channel. However, they are particularly heavy to run: a single qubit can be sent at a time at the expense of doing a very large and expensive quantum superposition. To prepare more complex states, the solution is typically to repeat the above protocol $n$ times before combining appropriately the different runs. Unfortunately, this method is very costly since the superposition needs to be recreated $n$ times. Therefore, a second question naturally arises, orthogonal to the first one:
\begin{center}
  \emph{Is it possible to classically send multi-qubits states while paying the cost of a single superposition?}
\end{center}

% \noindent\textbf{This work}: We answer the two above questions in the affirmative. We first answer the second interrogation by providing a method to classically generate, using a single superposition, states in which a \GHZ{}-like state is hidden in between $\ket{0}$ and $\ket{1}$ qubits. We then prove that this task can be distributed among multiple clients, in such a way that each client can choose whether or not they want to share a part of the final \GHZ{} state. Finally, we allow each client to prove to the server, in a Zero-Knowledge and Non-Destructive way, that if they share a qubit of the \GHZ{} state, then they know a secret ``password''. Note that nobody will be able to know whether or not a given client is part of the final \GHZ{} state, the only guarantee for the server will be that the produced state contains a \GHZ{} state, and that if the qubit $i$ is entangled with the other qubits of the \GHZ{} state, then the party $i$ knows a secret. Because this part of the protocol is done by sending a single message, this is a Non-Interactive and Non-destructive Zero-Knowledge Proof on a Quantum State (NIZKoQS). Note that our method to obtain NIZKoQS is quite generic, and could be applied to many different RSP protocols.

\subsection{Our Results}

In this work, we answer the two above interrogations in the affirmative, and provide various use cases together with a generalization of \RSP{} to multiparty :
\begin{enumerate}[listparindent=\parindent]
  \item We first provide a method to classically prepare large states on $n$ qubits (where a \GHZ{} state is hidden in between $\ket{0}$ and $\ket{1}$ qubits) using a single superposition. It allows us to improve the existing \RSP{} protocols, spreading the cost to create one state among the multiple qubits. We are therefore able to asymptotically obtain a quadratic improvement: the number of operations required to prepare this $n$ qubits state drops from $O(nMN)$ to $O((M+n)N) = O(nN)$, $M > N$ being very large constants.
  \item We lift the above construction to a multiparty setting in which each sender can control one part of the hidden \GHZ{} state. This construction is still non-interactive in the sense that a single message is sent from each sender to the receiver.
  \item We initiate the study of \emph{Non-Interactive and Non-Destructive Zero-Knowledge Proofs on Quantum States} (NIZKoQS). We obtain these guarantees by sending a classical NIZK proof on \emph{classical} instructions used to produce a quantum state. This approach has numerous advantages: the verification is \emph{non-destructive}, \emph{non-interactive}, can be done over a purely \emph{classical channel} (the sender does not need any quantum capabilities), and it is now \emph{possible to check much more involved properties} on the quantum state, potentially linked with secret classical information or involving \emph{multiple parties}. For instance, we can force the sender to send different categories of states depending on whether or not they know a classical password or signature, without revealing to the receiver to which category the state belongs to.

  % We are able to obtain these guarantees by changing the way the sender sends the quantum state. Instead of sending a quantum state directly, the sender transmits classical instructions encrypting a target class of quantum states, together with a classical Non-Interactive Zero-Knowledge proof that these instructions will produce a quantum state with the expected properties. Then, the receiver will check the proof and use these classical instructions to produce the wanted (unknown) quantum state.

        %Because we test the classical instructions instead of the quantum state directly, the advantages are numerous: this verification is \emph{non-destructive}, \emph{non-interactive}, can be done over a purely \emph{classical channel} (the sender does not need any quantum capabilities), and it is now \emph{possible to check much more involved properties} on the quantum state, potentially linked with secret classical information or involving \emph{multiple parties}. For instance, we can force the sender to send different categories of states depending on whether or not they know a classical password or signature, without revealing to the receiver to which category the state belongs to.

        We emphasize that this method is very generic and can be applied to any (non-interactive) \RSP{} protocol, but also to a non-interactive computation performed using the protocol from \cite{mahadev2017classical}\footnote{In that later case, we can prove any property on the output state before the application of the Pauli keys.  Note also that when dropping the non-interactive requirements, any property on the final state should be verifiable (and it is also possible to use more involved classical protocols like classical Multi-Party computation\arxivOnly{~\cite{GMW87_HowPlayANY}} to classically manipulate quantum states), but this is out of the scope of this paper.}. However, we focus in this paper on a generalization of~\cite{CCKW_2019_qfactory} which allows more efficient \GHZ{}-like state generation and can be extended to a multiparty protocol.

        While it is not the first time that \LWE{} based methods are used in quantum cryptography to protect the sender, to the best of our knowledge, it is the first time that this kind of cryptography is used to also protect the receiver. This allows us to \emph{bypass fundamental quantum limitations}: we expect this result to have further independent interests.
  \item We present a concrete usecase in the context of the \emph{generation of multi-party certified hidden \GHZ{} states}. We derive a protocol between a server and $n$ ``applicants''. At the end of the protocol, the applicants end-up sharing a \GHZ{} state in such a way that, if the server is honest, only the applicants knowing a classical secret (password/secret key/signature/\dots{}) can be part of the \GHZ{} (they are then said to be \emph{supported}). Others applicant will not be entangled in any way with the supported applicants, and therefore cannot disturb a potential future protocol. Our protocol has the following properties:
        \begin{itemize}
          \item The protocol is \emph{blind}, in the sense that no adversary (corrupting potentially the server and some applicants) can learn whether the honest applicants are supported or not.
          \item As soon as the server is honest, it is also impossible for any malicious or noisy applicant to significantly change the \GHZ{} state obtained by supported applicants. This is particularly interesting to avoid ``Denial Of Service'' kind of attacks where some noisy applicants would always force the protocol to abort.
        \end{itemize}

        One direct application is the following: a server can do a quantum secret sharing between parties that have been authorized, say, by some Certification Authority. The identity of these authorized parties stay hidden, even against a malicious coalition involving the server and applicants.
  \item We provide a series of simpler protocols that only provide blindness. In these protocol a classical trusted third party is in charge of choosing who is supported and who is not. These protocols could prove useful in the context of Anonymous Transmission, \emph{Quantum Onion Routing}\dots{}
  \item We give an \emph{instantiation of a cryptographic family} required in our protocol whose security relies on the hardness of \LWE{} (with superpolynomial modulus-to-noise ratio).
\end{enumerate}

\subsection{Related Works and Further Applications}\label{sec:relatedWorks}

Classical Zero-Knowledge proofs and Interactive Proofs systems have been introduced thirty years ago~\cite{GMR_1989_Knowledge_interactive_proofs, BM_1988_Arthur_Merlin}, and allow a prover to prove a statement to a verifier without revealing anything beyond the fact that this statement is true. Zero-Knowledge proofs have been proposed for any language in \npol~\cite{GMW_1991_Proof_yield_nothing_but_validity}. While Non-Interactive Zero-Knowledge (NIZK) proofs are known to be impossible in the plain model~\cite{FS87_HowProveYourself}, NIZK can be obtained in the Common Reference String model~\cite{BFM88_NoninteractiveZeroknowledgeIts} or in the Random Oracle model by using the famous Fiat-Shamir transformation~\cite{GO94_DefinitionsPropertiesZeroknowledge}. The security of classical Zero-Knowledge proofs were also extended to protect against malicious quantum provers~\cite{Watrous_2005_ZK_against_q_attacks,Unruh_2012_Proof_of_Knowledge,BS_2019_ZK_Ct_rounds}, and much work has been done to extend these protocols to remove interactivity~\cite{DFMS19_SecurityFiatShamirTransformation,LZ19_RevisitingPostquantumFiatShamir} or deal with multiple (potentially quantum) provers, targeting larger classes like $\mathsf{IP},\mathsf{QMA}$, $\mathsf{MIP}$ or $\mathsf{MIP^*}$, sometimes considering ``dequantized'' verifiers (\cite{\arxivOnly{IY_1987_Direct_minimum_knowledge_computations},BW_2016_zk_qma,GSY_2019_perfect_zk_multiprovers,CVZ_2020_non_interactive_zk_qma,ACGH20_NoninteractiveClassicalVerification,BG20_QMAhardnessConsistencyLocal,BCKM20_ComplexityTwoPartyQuantum,VZ_2020_classical_ZK_for_Quantum_comp,Shm20_MultitheoremMaliciousDesignatedVerifier,MY21_ClassicallyVerifiableDualMode}, see also the review~\cite{VW_2016_quantum_proofs}).

However, these works focus on \emph{classical} languages (even in \QMA{}, the language is still classical, even if the witness can be quantum). To prove similar properties on \emph{quantum} languages, a first solution is to use a generic quantum secure multiparty computing protocol  (QSMPC)~\cite{DNS12_ActivelySecureTwoParty,DGJ+20_SecureMultipartyQuantum,KKMO21_DelegatingMultiPartyQuantum}. However, these protocols are interactive, and to the best of our knowledge no work has been done to provide one-shot Zero-Knowledge proofs on quantum states.

In this paper, we use internally classical cryptography to leverage quantum cryptography. \cite{mahadev2017classical} introduced for the first time this notion, and this idea lead to the development of many quantum protocols based on classical cryptography~\cite{mahadev2018classical,BCMVV_2018_crypto_test_quantumness,brakerski_2018_quantum_FHE,MDCA_2020_device_independent_QKD_comp_assump}.

The first classical-client Remote State Preparation (RSP) protocol was proposed in \cite{CCKW18}. Informally, in this protocol, a purely classical client can create a state on a quantum server, in such a way that only the client knows the description of that qubit. This protocol named QFactory was later improved in \cite{CCKW_2019_qfactory}, and proven secure against an arbitrary malicious adversary. Another protocol~\cite{GV19_ComputationallySecureComposableRemote} can also be used to obtain a verifiable classical-client RSP. While this later paper provides stronger guarantees in term of verification, it cannot be directly extended to produce multi-party \GHZ{} states, requires interaction and offers a security scaling polynomially with the security parameter for both verifiability (which is certainly hard to avoid) and blindness. For these reasons, we will build our approach upon the QFactory protocol. Some general impossibility results on classical RSP protocols are also given in \cite{BCCKLMW_2020_security_limitations,\arxivOnlyNoDiff{MT_2020_trusted_center_rsp}}: therefore we will not consider the Constructive Cryptography framework in this paper.

In this work, we focus on the distributed generation of Greenberger--Horne--Zeilinger (\GHZ{}) state~\cite{GHZ_1989_going}. This state (and its special case, the Bell state) is very popular, and appears to be useful in many protocols, such as in Quantum Secret Sharing~\cite{hillery1999quantum}, Quantum Teleportation~\cite{BBGCJPW_1993_teleporting}, Entanglement Distillation~\cite{\arxivOnly{BBPS_1996_concentrating_partial_entanglement,}BBPSSW_1996_purification\arxivOnlyNoDiff{,BDSW_1996_mixed_state_entanglement_error_code}}, Device-Independent Quantum-Key-Distribution~\cite{MY98_QuantumCryptographyImperfect}, Anonymous Transmission~\cite{CW_2005_quantum_anonymous_transmissions}, Quantum Routing~\cite{\arxivOnlyNoDiff{PWD_2018_Modular,}MMG_2019_Distributing}. Our approach could be used as an initial step in order to run this kind of protocols between an unknown subset of authorized parties. We note that even if our protocols are computationally secure, one may still be able to obtain unconditional guarantees, for example when the source is not corrupted.%\publishedVsArxiv{.}{: each authorized supported party will indeed receive a \GHZ{} state, and the security guarantees coming from a protocol assuming an honest or not collaborating source should follow. However, when the source is allowed to be malicious and to collaborate with the parties, extra care is required: the source/server can try to learn information about the set of supported applicants by sending garbage to some honest parties and then ask to the corrupted parties whether the protocol aborted or not.}

Our protocol could also be used to achieve new functionalities, such as a Quantum Onion-like Routing protocol in order to route a quantum message through an untrusted quantum network, hiding the exact taken route. The idea would be to ask to each intermediate server to blindly generate a large state in which a Bell pair is hidden: this Bell pair could then be used to teleport the qubit to the next server, without revealing its identity to the previous one.

\subsection{Paper Organization}

In the \cref{sec:overview} we give a generic overview of our result: \cref{subsec:initialSetup} presents the role of the different parties involved in the target protocols that will benefit from NIZQoQS and multiparty large state preparation, \cref{subsec:listProtocols} gives the list of the different protocols we realize, and \cref{subsec:quick_overview} gives an insight on the technical methods used in this paper. The preliminaries can be found in \cref{sec:preliminaries}.
Then, we give in \cref{sec:crypto_requirements} the cryptographic requirements of our protocols. In \cref{sec:protocols}, the reader will find the definition of our different protocols\publishedVsArxiv{ }{---including one desirable but impossible protocol---}with our main protocol \authBlindCanDist{} in \cref{subsec:authBlindCanDist}. \cref{sec:NIZKoQS} defines NIZKoQS and proves that our protocols exhibits NIZKoQS. Finally, the \cref{sec:functionConstruction} explains how to construct the required cryptographic primitives\publishedVsArxiv{}{: \cref{sec:construction_f_superpoly} shows how to obtain an \AssumpFct{} family, and \cref{sec:compilerAssumpFctCan} describes a generic way to construct a distributable \AssumpFctCanPrime{} family}.

\section{Technical overview and presentation of the setup}\label{sec:overview}

\subsection{Initial setup}\label{subsec:initialSetup}
Our two first orthogonal results (efficient large state preparation and NIZKoQS) will ultimately be combined to obtain multiple protocols allowing the (secret) distribution of a \GHZ{} state between $n$ clients. The protocols involve several parties:
\begin{itemize}
  \item There are $n$ \arxivOnly{(weakly\footnote{The applicants need only basics quantum skills, depending on the protocol, they may have nothing to do except receiving a state, or may need to apply a few gates.}) }quantum \emph{applicants} $a_i$. At the end of the protocols, a (secret) subset $\cS$ of these applicants---the \emph{supported}\footnote{We also call them \emph{supported} because they are part of the support of the hidden \GHZ{} state. We may also refer to this as being the \emph{support status} of an applicant.} applicants---will share a \GHZ{} state.
  \item A quantum\arxivOnly{\footnote{Bob needs to perform quite heavy quantum computations.}} server, \emph{Bob}. Bob is quantumly connected to all applicants, and it is the only quantum communication required in the upcoming protocols.
  \item \emph{Cupid}\footnote{Besides having a name starting with a 'C', Cupid, the roman god of love, is famous for sending arrows at the heart of Humans to designate the beloved among the applicants.} is an honest classical party (the only party that will always be considered as honest when present. In our last protocol, we won't need Cupid anymore). Cupid will be in charge of choosing the supported applicants that will share a \GHZ{} at the end of the protocol. We assume Cupid can communicate classically with the server, as well as with all applicants---using secure (i.e.\ confidential\footnote{Since secure channels may still leak the length of the exchanged message, in this paper we assume that for a given round, all messages have the same size.} and authenticated) channels.\\
        Note that if the supported applicants want to perform at the end some quantum protocols in order to use the previously shared \GHZ{} state, \emph{Cupid} may also be used as a proxy forwarding the messages to the good applicants in case classical communication is required between participants\footnote{In that case, it is also important to make sure that no adversary can detect the source and the destination of a message\publishedVsArxiv{.}{, otherwise it would be trivial for the adversary to obtain the set of supported applicants. This could be achieved by either asking to all non-supported applicants to also send/receive dummy encrypted messages to Cupid when a message is expected to be exchanged (it is the simplest solution but may not be the most efficient solution), by using a classical channel that is not controlled by the adversary to exchange these messages, by using a noisy cover channel in which communications can be made undetectable, or by relying on an anonymous onion-like routing protocol if the adversary controls only parts of the network and if enough unrelated messages are exchanged this way to make sure that the adversary cannot apply timing attacks/flow correlation attacks\dots{}}}.
\end{itemize}

Note also that these parties may not be always different entities. For example, when a user wants to send a qubit to a secret recipient, this user could be both considered as an applicant and as Cupid. Similarly, the server may want to be part of the applicants. Moreover, in our final protocol, the role of Cupid is distributed among all applicants: each applicant will be able to choose whether they want to be part of the \GHZ{} or not, and in that setting all parties may potentially be malicious. However, since the simpler protocols could be of independent interest, we also study them separately.

\subsection{List of the protocols proposed in this work}\label{subsec:listProtocols}

In the following, a canonical \GHZ{} state will be a state of the form $\frac{1}{\sqrt{2}}(\ket{0 \dots 0} + \ket{1 \dots 1})$ (we will drop all normalization factors later), a generalized \GHZ{} state is a \GHZ{} state in which we applied some local $X$ or $Z$ gates, and a hidden \GHZ{} is a permutation of a generalized \GHZ{} state tensored with some $\ket{0}$ and $\ket{1}$ qubits.

We propose $4$ different protocols in \cref{sec:protocols} (\blind{}, \blindSup{}, \blindCanSup{} and \authBlindCanDist{}) to distribute a \GHZ{} state. Note that we show in \cref{sec:NIZKoQS} how \blind{} and \authBlindCanDist{} can exhibit NIZKoQS.

\authBlindCanDist{} is the most involved protocol, and we also show \cref{subsec:blindCan} the impossibility of a desirable variant of these protocols, \blindCan{}. All these protocols derive from the \blind{} protocol, in which Cupid chooses the support status of each applicant, and at the end of the protocol each supported applicant is supposed to obtain a generalized \GHZ{} state, while non-supported applicant obtain random not entangled qubits in the computational basis (at that step no applicants know if they are supported or not). The server is simply used as an intermediate party in charge of creating and distributing states. The other protocols differ slightly:
\begin{itemize}
  \item The subscript $\cdotwild_{\tt can}$ denotes the fact that at the end of the protocol each supported applicant ends up with a canonical \GHZ{} state instead of a generalized \GHZ{} state.
  \item The superscript $\cdotwild^{\tt sup}$ denotes the fact that at the end of the protocol each applicant knows their own support status, chosen by Cupid.
  \item In the protocol \authBlindCanDist{} each applicant chooses their own support status (Cupid is not required anymore), and the server can check that each participant is ``authorized'', meaning that if they are supported, then they know a secret (note that this does \emph{not} reveal to the server if a given applicant is supported). To obtain this last property, we use NIZKoQS.
\end{itemize}
In term of security, we typically expect that no malicious group of applicants, potentially colluding with the server, should learn the support status of honest applicants\footnote{In \blindCanSup{} and \authBlindCanDist{}, we expect at least one supported applicant to be honest when the adversary is allowed to corrupt supported applicants (the identity of this honest applicant may be unknown to the adversary), otherwise there is a trivial attack against any such protocol.}. In \blind{}, we can even prove that applicants can't even learn their own support status. Moreover, in \authBlindCanDist{} we show that an honest server is guaranteed that any supported applicant ``knows'' the secret, and that no malicious or noisy applicant can significantly alter the state obtained by honest applicants.

A simple application of the \authBlindCanDist{} protocol would be to allow Bob to teleport a quantum state $\ket{\psi}$ to an unknown applicant who knows a secret password, or who has access to a signature provided by a Certification Authority. The applicant would be allowed to hide its identity to Bob, and Bob can be sure that only applicants knowing the password/signature could obtain the state $\ket{\psi}$. In addition, if several applicants know the secret, then Bob is in fact secret sharing its qubit $\ket{\psi}$ among all applicants knowing this secret~\cite{hillery1999quantum}. Other applications are discussed at the end of \cref{sec:relatedWorks}.

\subsection{Quick overview of our methods}\label{subsec:quick_overview}

\noindent\textbf{Cryptographic assumptions.}\quad We need to use a classical cryptographic family of functions $\{f_k\colon \cX \rightarrow \cY\}_{k \in \cK}$, together with a function $h\colon \cX \rightarrow \{0,1\}^n $ having several properties. The exact list of requirements is given in \cref{sec:crypto_requirements}, but here are the important properties. For any $\vect{d}_0 \in \{0,1\}^n$ (intuitively representing the support status of all applicants, with $\vect{d}_0[i] = 1$ iff the $i$th applicant is supported), we can generate using a function $\Gen(1^\lambda, \vect{d}_0)$ an index $k$ and a trapdoor $t_k$ such that:
\begin{itemize}
  \item $f_k$ is $2$-to-$1$ (i.e.\ for all $x$, $|f_k^{-1}(f_k(x))| = 2$\arxivOnly{, we will also generalize this definition to approximate $\delta$-$2$-to-$1$ functions}).
  \item $f_k$ can be efficiently computed given $k$, but should be hard to invert without $t_k$. Moreover, it should be hard to obtain any information on $\vect{d}_0$ given $k$.
  \item Given the trapdoor $t_k$, $f_k$ can be efficiently inverted.
  \item For any $x \neq x'$ such that $f(x) = f(x')$, $h(x) \xor h(x') = \vect{d}_0$.
\end{itemize}
We will say that such a family is \AssumpFct{} (more details in \cref{def:fct_requirements}). While these properties will be enough to understand our ``binding'' approach, in practice we will also require other properties in order to allow each applicant to choose their own support status. More precisely, we will require the existence of a local generation procedure $(k^{(i)},t_k^{(i)}) \gets \GenLocal(1^\lambda,\vect{d}_0[i])$ such that $k = (k^{(1)},\dots,k^{(n)})$ (since $k^{(i)}$ is enough to fix $\vect{d}_0[i]$, we may write $\vect{d}_0[i] \eqdef \vect{d}_0(k^{(i)})$). Moreover $t_k^{(i)}$ can be used to obtain partial information about the preimages of $f_k$: one can, for example, obtain the $i$-th bit of $h(x)$ given $f_k(x)$ and $t_k^{(i)}$. These additional properties will be important to allow the creation of a multi-party state, and we will say that such family is a distributable \AssumpFctCan{} family.\\

\noindent\textbf{Efficient large \RSP{}.}\quad The first key point of our method is to bind a classical message to a quantum state. Instead of receiving directly a quantum state, the receiver receives classical instructions encrypting a class of quantum states and derives from it the final quantum state. More precisely, since in our case we are interested in the preparation of hidden \GHZ{} states (i.e.\ a state which is equal, up to some local operations, to a permutation of $(\ket{0 \dots 0} + \ket{1 \dots 1})\otimes \ket{0 \dots 0}$), we proceed as follows: If denote by $\vect{d}_0$  the support of the \GHZ{} state (i.e.\ qubit $i$ is part of the \GHZ{} iff $\vect{d}_0 = 1$), then the encryption of the quantum state will be $k$, were $k$ is generated using $\Gen(1^\lambda, \vect{d}_0)$. Then, in order to produce the quantum state, the receiver will run the unitary corresponding to $x \mapsto (h(x), f_k(x))$ in superposition, obtaining the state:
\begin{align}
  \sum_x \ket{x}\ket{h(x)}\ket{f_k(x)} = \sum_y (\ket{x}\ket{h(x)} + \ket{x'}\ket{h(x')})\ket{y}
\end{align}
where in the right hand side, we sum over the elements $y$ in the image of $f_k$, and $(x,x')$ are the two preimages of $y$ (reminder: the function is $2$-to-$1$). After measuring the last register, we obtain a $y$, and the following quantum state (where $x$ and $x'$ are the two preimages of $y$):
\begin{align}
  \ket{x}\ket{h(x)}+\ket{x'}\ket{h(x')}
\end{align}
Now, we measure the first register in the Hadamard basis, and we obtain:
\begin{align}
  \ket{h(x)}+(-1)^\alpha\ket{h(x')}
\end{align}
for some $\alpha \in \{0,1\}$, depending on the outcome of the measurement. This state is now a hidden \GHZ{} state, whose support is equal to $\vect{d}_0$ due to the property $h(x) \xor h(x') = \vect{d}_0$. Indeed, when $\vect{d}_0[i] = 0$, $h(x)[i] = h(x')[i]$, so we can factor out this qubit. More precisely, in order to come back to a canonical \GHZ{} on the other qubits, one just needs to apply an $X$ gate on the $i$-th qubit for all $i$ such that $1=h(x)[i] \neq h(x')[i]$ (the trapdoor would be required to learn $h(x)$), followed by a $Z$ operation on one supported qubit if $\alpha = 1$.

Note that here we managed to prepare a $n$-qubit state using a single superposition, which improves the efficiency of existing \RSP{} protocols. With the actual implementation of $f_k$, it drops the required number of steps from $O(nMN)$ to $O((M+n)N)$ where $M > N$, $N$ and $M$ being very large parameters of $f_k$.\\

\noindent\textbf{NIZQoQS.}\quad It is now possible to turn our Non-Interactive \RSP{} protocol into a NIZKoQS in order to prove properties on the generated state. The key element to note is that a single \emph{classical} message $k$ is sent to the receiver. This message can be interpreted as the instructions to follow to produce a given hidden \GHZ{} state whose support is $\vect{d}_0$, and it is therefore possible to send a classical NIZK proof---using your favorite post-quantum NIZK construction---about these instructions to indirectly (and therefore non-destructively) obtain guarantees on the produced quantum state. That way, the sender simply needs to send, together with $k$, a NIZK proof (where $t_k$ is part of the secret witness) proving that \emph{(i)} the message $k$ is indeed a \AssumpFct{} function\footnote{In our construction, it boils down to proving that the trapdoor $t_k$ has small enough singular values.} \emph{(ii)} that $\Auth(\vect{d}_0,w) = 1$, where $\Auth$ can be any efficiently computable function, and $w$ any secret witness, depending on the wanted property on $\vect{d}_0(t_k)$. This last function and witness could be virtually anything, like ensuring that \emph{There exists only two indices $i \neq j$ such that $\vect{d}_0(i)=\vect{d}_0(j) = 1$} (i.e.\ it proves that the final state contains only two entangled qubits forming a Bell state, $w$ is not needed here), or \emph{Either $\vect{d}_0[42]=0$ or ($\vect{d}_0[42]=1$ and I know the private key corresponding to the bitcoin wallet \href{https://www.blockchain.com/btc/address/12c6DSiU4Rq3P4ZxziKxzrL5LmMBrzjrJX}{12c6DSiU4Rq3P4ZxziKxzrL5LmMBrzjrJX})} (i.e.\ it proves that the 42th qubit is entangled to the rest of the \GHZ{} only if the sender is Satoshi Nakamoto\dots{} of course without revealing to the receiver if the sender is Satoshi Nakamoto; $w$ being here the private key of the wallet). This kind of property will be particularly interesting in the \authBlindCanDist{} protocol.\\

\noindent\textbf{Usage in protocols.}\quad We can use the above ideas in order to achieve the above protocols, notably \authBlindCanDist{}. Each applicant will be asked to sample $(k^{(i)},t_k^{(i)}) \gets \GenLocal(1^\lambda, \vect{d}_0)$, then $k^{(i)}$ will be sent to the server. In order to prove that they are authorized, each applicant will also include a NIZK proof confirming, as explained in the last paragraph, that they know a classical witness $w_i$ such that $\Auth(\vect{d}_0[i],w_i) = 1$ (the NIZK proof also checks that $\vect{d}_0[i]=\vect{d}_0(k^{(i)})$ and that $k^{(i)}$ is honestly prepared). Since this protocol is a Zero-Knowledge protocol, the server will not be able to learn any information about $\vect{d}_0$. In return, the server will have the guarantee that it will indeed produce a hidden \GHZ{} state distributed among authorized applicants. Therefore, the server can run the quantum circuit described above, and distribute each qubit to the corresponding applicant. In order to come back to a canonical \GHZ{}, each applicant will use their local trapdoor $t_k^{(i)}$ to compute $h(x)$ in order to apply the corresponding $X$ correction. Moreover, in order to compute the $\alpha$ needed to apply the $Z$ correction, all parties will need to run a Multi-Party Computation (MPC). Since $\alpha$ could leak some information about the state, each supported applicant will in fact obtain from the MPC a linear secret sharing of $\alpha$, i.e.\ a random $\hat{\alpha}_i$ such that $\xor_{i \in \cS} \hat{\alpha}_i = \alpha$ ($\cS$ being the set of supported applicants). That way, each supported applicant will be able to locally apply a $Z^{\hat{\alpha}_i}$ correction, and it will be equivalent to applying a single $Z^\alpha$ gate on the overall state.\\

\noindent\textbf{How to construct the cryptographic assumptions.}\quad In order to implement the family $\{f_k\}_{k \in \cK}$, we proceed in two steps: we first build a family which is \AssumpFct{} (see intuitive definition above, or \cref{def:fct_requirements}). The $\delta$ is needed because in practice the functions are only $2$-to-$1$ with a probability $1-\delta$, where $\delta$ can be made negligible by relying on the hardness of $\GapSVP{}_\gamma$ with a superpolynomial ratio $\gamma$. We then provide a quite generic method to turn a \AssumpFct{} family into a \AssumpFctCanPrime{} distributable family.

The security of our construction reduces to the hardness of the \LWE{} assumption (more details in \cref{subsec:LWE}), and builds on ideas introduced in \cite{CCKW_2019_qfactory}. The starting point is the trapdoor construction provided by \cite{MP11}. They provide an algorithm to generate a matrix $\vect{A} \in \Z_q^{M \times N}$ ($q$ will be a power of two) that looks random, together with a trapdoor matrix $\vect{R}$. If the noise $\vect{e} \in \Z_q^M$ is sufficiently small\arxivOnly{\footnote{In the actual construction, we also require $\vect{s}$ to be small because we rely on the equivalent but more efficient normal-form of \LWE{}, but for simplicity we use the classic \LWE{} problem in this overview.}}, the function $g_{\vect{A}}(\vect{s}, \vect{e}) \eqdef \vect{A}\vect{s} + \vect{e}$ is injective. Moreover, given the trapdoor $\vect{R}$, one can easily invert the function $g_{\vect{A}}$, otherwise inverting this function is hard: $g_{\vect{A}}(\vect{s}, \vect{e})$ is indistinguishable from a random vector given $\vect{A}$.

From that, we can first see how to get a $\delta$-$2$-to-$1$ family of functions. Note that the larger the noise $\vect{e}$ is, the larger $\delta$ is. So a perfect (but insecure) $2$-to-$1$ family would use $\vect{e}=\vect{0}$: therefore, to better understand this construction, it may be practical to imagine that $\vect{e}=\vect{0}$ during a first reading. The idea of the construction is to sample first an image vector $\vect{y}_0 \eqdef \vect{A}_u \vect{s}_0 + \vect{e}_0 \in \Z_q^M$ (by sampling first $\vect{s}_0$ uniformly at random over $\Z_q^N$ and $\vect{e}_0 \in \Z_q^m$ according to a discrete Gaussian $\cD_{\Z, \alpha q}^m$), and then to define for all $\vect{s} \in \Z_q^N$, $c \in \{0,1\}$ and errors $\vect{e}$ belonging to some set $E \subseteq \Z_q^M$ (to be defined), the function $f_{\vect{A}_u,\vect{y}_0}(\vect{s}, \vect{e} ,c \in \{0,1\}) \eqdef \vect{A}_u \vect{s} + \vect{e} + c \times \vect{y}_0$. That way, if all vectors in $E$ are small enough, $f_{\vect{A}_u,\vect{y}_0}$ has at most two preimages, one for $c=0$ and one for $c=1$: $f(\vect{s},\vect{e},0) = \vect{A}_u \vect{s}+\vect{e} = \vect{A}_u(\vect{s}-\vect{s}_0) + (\vect{e}-\vect{e}_0) + \vect{y}_0 = f(\vect{s}-\vect{s}_0,\vect{e}-\vect{e}_0,1)$. We remark that in order to have two preimages, we want to make sure that both $\vect{e} \in E$ and $\vect{e}-\vect{e}_0 \in E$: we will rely on \LWE{} with superpolynomial modulus to noise ratio to ensure this is true for a superpolynomially large fraction of inputs. Then, in order to obtain the bit string $\vect{d}_0 \in \{0,1\}^n$ and the function $h$ such that for any two preimages $x,x'$, $h(x) \xor h(x')=\vect{d}_0$, we will
update the previous construction and now sample $\vect{y}_0$ as follows: We will first sample additional lines $\vect{A}_l \sample \Z_q^{n \times N}$
to add to the matrix $\vect{A}_u$. Then, we will sample $\vect{s}_0$ uniformly at random over $\Z_q^N$ (as before) and $\vect{e}_0 \in \Z_q^{M+n}$ will be sampled according to the discrete Gaussian $\cD_{\Z, \alpha q}^{M+n}$. Finally, if we also denote by $\vect{d}_0$ the binary vector in $\Z_q^n$ composed of elements of $\vect{d}_0$, we compute $\vect{y}_0 \eqdef \SmallBlockMatrix{\vect{A}_u}{\highlightChanges{\vect{A}_l}} \vect{s}_0+\vect{e}_0+\highlightChanges{\frac{q}{2} \SmallBlockMatrix{\vect{0}^M}{\vect{d}_0}}$. We update similarly our function $f$ by adding a parameter $\vect{d} \in \{0,1\}^n$:
\begin{align}
f_{\vect{A}_u,\vect{A}_l,\vect{y}_0}(\vect{s},\vect{e},c,\highlightChanges{\vect{d}}) \eqdef \SmallBlockMatrix{\vect{A}_u}{\highlightChanges{\vect{A}_l}} \vect{s} + \vect{e} + \highlightChanges{\frac{q}{2} \SmallBlockMatrix{\vect{0}^M}{\vect{d}}} + c \times \vect{y}_0
\end{align}
Then, we can remark that, because $q$ is even:
\begin{align}
  f_{\vect{A}_u,\vect{A}_l,\vect{y}_0}(\vect{s},\vect{e},0,\vect{d})
  &= \SmallBlockMatrix{\vect{A}_u}{\vect{A}_l} \vect{s} + \vect{e} +\frac{q}{2} \SmallBlockMatrix{\vect{0}^M}{\vect{d}}\\
  &= \SmallBlockMatrix{\vect{A}_u}{\vect{A}_l} (\vect{s}-\vect{s}_0) + (\vect{e}-\vect{e}_0) +\frac{q}{2} \SmallBlockMatrix{\vect{0}^M}{\vect{d} \xor \vect{d}_0} + \vect{y}_0\\
  &= f_{\vect{A}_u,\vect{A}_l,\vect{y}_0}(\vect{s}-\vect{s}_0,\vect{e}-\vect{e}_0,1,\vect{d} \xor \vect{d}_0)
\end{align}
and that therefore (skipping a small technicality) for any two preimages $(\vect{s},\vect{e},0,\vect{d})$ and $(\vect{s}',\vect{e}',1,\vect{d}')$, we have $\vect{d} \xor \vect{d}' = \vect{d}_0$. So we just need to define $h(\vect{s},\vect{e},c,\vect{d}) = \vect{d}$ to get the \XOR{} property. Finally, the indistinguishability property is simple to obtain: since $\vect{A}_u$ is indistinguishable from a random matrix, and $\vect{A}_l$ is actually a random matrix, thus $\vect{A} \eqdef \SmallBlockMatrix{\vect{A}_u}{\vect{A}_l}$ is  indistinguishable from a random matrix. Therefore, under the decision-$\LWE{}_{q,\cD_{\Z, \alpha q}}$ assumption, $\vect{A} \vect{s}_0 + \vect{e}_0$ is indistinguishable from a random vector. Therefore, since adding a constant vector to a uniformly sampled vector does not change its distribution, one cannot distinguish $\vect{A} \vect{s}_0 + \vect{e}_0$ from $\vect{A} \vect{s}_0 + \vect{e}_0 + \SmallBlockMatrix{\vect{0}^M}{\vect{d}_0}$, or from any vector of the form $\vect{A} \vect{s}_0 + \vect{e}_0 + \SmallBlockMatrix{\vect{0}^M}{\vect{d}}$. We provide in \cref{sec:construction_f_superpoly} a more in-depth analysis in order to properly handle the noise, we find an explicit set of parameters allowing $f$ to be $\negl[\lambda]$-$2$-to-$1$, and we give a method to prove that a maliciously sampled $f$ is indeed $\negl[\lambda]$-$2$-to-$1$ and has the \XOR{} property. This is required to turn it into a distributable $\delta'$-\AssumpFctCanNoDelta{} family.

Finally, one can obtain a distributable $\delta'$-\AssumpFctCanNoDelta{} family from a \AssumpFct{} family (which has an additional property that the two preimages have the form $x = (0,\bar{x})$ and $x' = (1, \bar{x}')$). The idea is to sample one \AssumpFct{} function for each $\vect{d}_0[i]$. Then, we can define our new function to be $f_{k^{(1)},\dots,k^{(n)}}(c,\bar{x}^{(1)},\dots,\bar{x}^{(n)}) = f_{k^{(1)}}(c,\bar{x}^{(1)}) |\dots | f_{k^{(n)}}(c,\bar{x}^{(n)})$. More details can be found in \cref{sec:functionConstruction}.

\subsection{Open questions}

Since this work introduces NIZKoQS, there are many questions left open:

\begin{itemize}
\item \textbf{Characterization of the quantum languages verifiable non-interactively}: Using classical-client Universal Blind Quantum Computing (UBQC) and ZK proofs, classical-client Quantum Fully Homomorphic Encryption (QFHE) and ZK proofs, or directly quantum MPC, one could certainly prove ZKoQS statements on arbitrary languages (even using three messages with the QFHE+ZK solution). However, this must be written formally, in particular by proving that revealing additional information regarding the quantum state like the one-time pad does not weaken the security. Moreover, characterizing completely\===with either possibility or impossibility results\===the set of quantum languages that can be verified \emph{non-interactively} is an open question. This paper solves this problem for a specific quantum language, and using QFHE+ZK we could certainly characterize a broader set of quantum languages (basically any property which is true up to a quantum one-time pad, characterizing notably entanglement). However, the question of the feasibility of NIZKoQS for broader quantum languages is left open.
\item \textbf{Minimal assumptions}: In the present paper, we rely heavily on the hardness of the \LWE{} problem. However, studying the properties obtainable with weaker assumptions (like one-way functions) could be a thrilling exploration path.
\item \textbf{Statistical vs computational soundness}: In the protocol presented in this paper, an unbounded verifier could learn completely the produced quantum state. Finding NIZKoQS protocols secure against an unbounded verifier could be another interesting question, with application to statistically secure protocols.
\item \textbf{Applications}: We have shown how NIZKoQS can help boosting the round-efficiency of protocols sharing hidden \GHZ{} states. However, we expect NIZKoQS to be of paramount importance to achieve optimal round complexity in a much wider variety of protocols as they basically avoid cut-and-choose approaches. Similarly, applications of our protocols sharing \GHZ{} states should be further explored.
\item \textbf{\RSP{} efficiency and applications}: We showed how \RSP{} protocols can produce multi-qubits states with an asymptotically quadratic improvement. However, further improving this bound (potentially using \LWE{} with a polynomial noise ratio) and finding multi-qubit states that are both efficient to prepare and compatible with blind quantum computing is an exciting challenge.
\end{itemize}

\section{Preliminaries}\label{sec:preliminaries}
\subsection{Notation}\label{sec:notation}
The applicants and Cupid are designated with the gender neutral pronoun \emph{they}, while the server---a machine---is designated by \emph{it}.

If $\vect{b} \in \{0,1\}^n$ is a bit string or a vector, and $i$ is an integer, we may use $\textbf{b}_i$ or
\arxivOnly{\footnote{Note however that we often refer in this paper to a bit string $\vect{d}_0 \in \{0,1\}^n$ (we use this notation in order to keep a consistent notation with previous works), so to avoid confusion between $\vect{d}_0$ (the bit string) and $\vect{d}_0$ (the first bit of $d$), we will always use the bracket notation $\vect{d}[i]$ or $\vect{d}_0[i]$ when the $d$ letter is involved. Since $\vect{d}_0$ will also be interpreted as a vector in the second part of the paper, and to highlight the difference with a previous work in which $d_0$ was a bit string, we will use a bold font for this kind of bit string.}}
$\vect{b}[i]$ to denote the $i$-th bit of $\vect{b}$. The symbol $|$ will be used to denote string concatenation, and we define $\mybar{\vect{b}} \eqdef (b_1\xor 1) | \dots | b_n\xor 1)$.\arxivOnly{ We define also $X^b \eqdef X^{b[1]} \otimes X^{b[2]} \dots \otimes X^{b[n]}$ and $Z_i$ is the operator that applies the $Z$ gate on the $i$-th qubit, i.e.\ $Z_i = \underbrace{I \otimes \dots \otimes I}_{i-1} \otimes Z \otimes I \otimes\dots \otimes I$. For any permutation $\sigma$, we denote by $\PERM_\sigma$ the quantum unitary that performs a permutation of the qubits, such that the $i$-th qubit is mapped on $\sigma(i)$.} For any bit strings $x$, $x'$ and $b$ of same length, we also define $\langle b, x\rangle \eqdef \bigoplus_i b_i x_i$,  $x \xor x' \eqdef (x_1 \xor x'_1) | \dots | (x_n \xor x'_n) $.

\arxivOnly{
  $\R$ is the set of reals, $\Z$ the set of integers,
} $\Z_q$ is the set of integers modulo $q$ and $[n] \eqdef \{1,\dots,n\}$. For any element $x \in \Z_q$, if $\hat{x}$ is the representant of $x$ in $[-\frac{q}{2}, \frac{q}{2})$ we will define the modular rounding $\RoundMod_q(x) = 0$ if $\hat{x} \in [-\frac{q}{4}, \frac{q}{4})$ and $\RoundMod_q(x) = 1$ otherwise. We extend this notation when $\vect{x}$ is a vector by doing it component wise. Similarly, for any element $\vect{x} \in \Z_q^n$, $\|\vect{x}\|_\infty$ will be understood in a modular way: $\|x\|_\infty \eqdef \max_i \hat{x}[i]$. For any vector $\vect{x} \in \R^n$, $\vect{x}^T$ is the transpose operation, $\|\vect{x}\|_2 \eqdef \sqrt{\vect{x}^T \vect{x}}$ is the Euclidean norm of $\vect{x}$. For any matrix $\vect{R} \in \R^{n\times m}$, $\|\vect{R}\|_2 \eqdef \sup_{\vect{x}} \|\vect{R}\vect{x}\|_2/\|\vect{x}\|_2$ is the spectral norm of $\vect{R}$, and $\sigma_{\max}(\vect{R})$ is the highest singular value of $\vect{R}$. Note that $\sigma_{\max}(\vect{R}) = \|\vect{R}\|_2$. For an Hilbert space $\cH$, $\linearOp_\circ(\cH)$ is the set of density operators on $\cH$ (positive linear operators on $\cH$ with trace $1$).

\PPT{} stands for Probabilistic Polynomial Time and \QPT{} stands for Quantum Polynomial Time. We denote by $\negl[\lambda]$ any function that goes asymptotically to $0$ faster than any inverse polynomial in $\lambda$: we say that this function is \emph{negligible}. We say that a probability is \emph{overwhelming} when it is $1-\negl[\lambda]$. We use the Landau notation $O(\cdot)$, and we say that $f(x) = \tilde{O}(g(x))$ if there exists $k$ such that $f(x) = O(g(x) \log^k g(x))$. We use $\omega(\sqrt{\log N})$ to denote a function, fixed across all the paper, such that $\lim_{N \rightarrow \infty} \sqrt{\log N}/\omega(\sqrt{\log N}) = 0$. For instance, we can take $\omega(\sqrt{\log N}) = \log N$.

For any finite set $X$, we denote $x \sample X$ when $x \in X$ is sampled uniformly at random. The uniform distribution on $X$ is also denoted by $\cU(X)$, and the statistical distance between the distributions $\cD_1$ and $\cD_2$ is $\Delta(\cD_1,\cD_2) \eqdef \frac{1}{2} \sum_{x}|\cD_1(x)-\cD_2(x)|$. For any distribution $\cD$, we write $x \gets \cD$ if $x$ is sampled according to $\cD$\arxivOnly{, and $y \eqdef f(x)$ when $f$ is deterministic or if it is the definition of $y$}. If an adversary $\cA$ interacts $n$ times with its environment, without loss of generality we may decompose $\cA = (\cA_1, \dots, \cA_n)$ and denote $y_1 \leftarrow \cA_1(x_1); y_2 \leftarrow \cA_2(x_2); \dots; y_n \leftarrow \cA_n(x_n)$ where we implicitly assume that $\cA_{i}$ gives its internal state to $\cA_{i+1}$. In the hybrid games, we will also strike out some lines to denote differences between two games: it only removes the content of the current line, without changing the line numbering of the other lines.

%\arxivOnly{We also slightly abuse notations, by using a set-like notation $\{X_\lambda\}_{\lambda \in \N}$ to denote a ``map''-like structure, which maps any $\lambda$ to the value $X_\lambda$. This notation will usually be clear from the context, but sometimes we may use a more explicit formula $\{(\lambda, X_\lambda)\}_{\lambda \in \N}$ or $(X_1,\dots,X_n)$ if the set is finite.}

\subsection{Generalized hidden \GHZ{} state}\label{subsec:defGHZ}

In this paper we assume basic familiarity with quantum computing~\cite{nielsen2002quantum}.
\publishedVsArxiv{
  A \emph{canonical \GHZ{}} (Greenberger--Horne--Zeilinger) quantum state~\cite{GHZ_1989_going} is a state of the form $\frac{1}{\sqrt{2}}( \ket{0 \dots 0} + \ket{1 \dots 1})$ (for simplicity, we will drop all the normalization factors). A \emph{generalized \GHZ{}} state $\ket{\GGHZ{}_{\alpha,\vect{d}}} \eqdef \ket{\vect{d}} + (-1)^\alpha \ket{\mybar{\vect{d}}}$ is a \GHZ{} state on which we applied some $X$ or $Z$ local gates, $(\alpha,\vect{d})$ being called the \emph{key}. A \emph{hidden (generalized) \GHZ{}} $\ket{\HGHZ{}_{\alpha,\vect{d},\vect{d}'}} \eqdef \ket{\vect{d}} + (-1)^\alpha \ket{\vect{d}'}$ is a permutation of a generalized \GHZ{} tensored with some $\ket{0}$ or $\ket{1}$ qubits. The set of qubits $\{i \mid \vect{d}[i] \neq \vect{d}'[i]\}$ (or simply $\vect{d}_0 \eqdef \vect{d} \xor \vect{d}'$) is called the \emph{support} of the \GHZ{} because they all belong to the original generalized \GHZ{}.
}{
  The \GHZ{} (Greenberger--Horne--Zeilinger) quantum state~\cite{GHZ_1989_going} is a generalization of the Bell state on multiple qubits. This state turns out to be useful in many applications, going from Quantum Secret Sharing~\cite{hillery1999quantum} to Anonymous Transmission~\cite{CW_2005_quantum_anonymous_transmissions}.

  \begin{definition}[(Canonical) \GHZ{} state]
    We denote by $\ket{\GHZ{}_n}$ a $n$ qubits (canonical) \emph{\GHZ{} state}, i.e.\ a state of the form:
    \[\ket{\GHZ{}_n} \eqdef \frac{1}{\sqrt{2}}(\ket{00\dots0} + \ket{11\dots1}) \]
  \end{definition}

  \begin{definition}[\GGHZ{}: Generalized \GHZ{} state]
    A quantum state on $n$ qubits is said to be a \emph{generalized \GHZ{} state} (sometimes abbreviated as \GGHZ{}), and usually denoted $\ket{\GGHZ{}_{\alpha,d}}$ if it is a \GHZ{} state up to an eventual $\pi$ phase, and up to local $X$ gates , i.e.\ if there exist a bit string $\vect{d} \in \{0,1\}^n$ and $\alpha \in \{0,1\}$ such that:
    \[\ket{\GGHZ{}_{\alpha,\vect{d}}} \eqdef X^{d}Z_1^\alpha\ket{\GHZ{}_n} = \frac{1}{\sqrt{2}}(\ket{\vect{d}} + (-1)^\alpha \ket{\mybar{\vect{d}}}) \]
    We call $(\alpha,\vect{d})$ the \emph{\GHZ{} key}, since with the help of $(\alpha,\vect{d})$ it is possible to turn a generalized \GHZ{} state into a canonical \GHZ{} state using only local $X$ and $Z$ gates. The term ``key'' is used since the generalized \GHZ{} state can be seen as a kind of one-time padded (OTP) canonical \GHZ{} state.
  \end{definition}
  %Example: we have $\ket{\GGHZ{}_{1, 001}} = \frac{1}{\sqrt{2}} (\ket{001} - \ket{110})$

  \begin{definition}[{\HGHZ{}: Hidden generalized \GHZ{} state}]
    A quantum state $\ket{\phi}$ on $n$ qubits is said to be a \emph{hidden (generalized) \GHZ{} state} (sometimes abbreviated as \HGHZ{}) if it is a permutation of a state composed of one generalized \GHZ{} state and the tensor product of qubits in the computational basis, i.e.\ if there exist a permutation $\sigma$, a bit $\alpha \in \{0,1\}$, a bit string $\vect{d}$, and two integers $(l, m) \in \N^2$ such that:
    \[\ket{\phi} = \PERM_\sigma(\ket{\GGHZ{}_{\alpha,\vect{d}}} \otimes \ket{0}^{l} \otimes \ket{1}^{m}) \]
    Equivalently, it means that there exist two bit strings $(\vect{d}, \vect{d}') \in (\{0,1\}^n)^2$ and a bit $\alpha \in \{0,1\}$ such that:
    \[\ket{\phi} = \frac{1}{\sqrt{2}} (\ket{\vect{d}} + (-1)^\alpha \ket{\vect{d}'}) \eqqcolon \ket{\HGHZ{}_{\alpha,\vect{d},\vect{d}'}}\]
    For such a state $\ket{\HGHZ{}_{\alpha,\vect{d},\vect{d}'}}$, the qubits that are not in the computational basis (and therefore that are part of the original \GGHZ{} state) are said to be in the \emph{support} of the \HGHZ{} state, and it is easy to see that the support is:
    \[\supp(\ket{\HGHZ{}_{\alpha,\vect{d},\vect{d}'}}) \eqdef \{i \mid \vect{d}[i] \neq \vect{d}'[i] \}\]
    We will assimilate the string $\vect{d}_0 \eqdef \vect{d} \xor \vect{d}'$ (bitwise \XOR{}) to the support of the \HGHZ{} since $\vect{d}_0[i] = 1$ iff $i \in \supp(\ket{\HGHZ{}_{\alpha,\vect{d},\vect{d}'}})$. Moreover, we will call $(\alpha,\vect{d},\vect{d}')$ the \emph{\GHZ{} key} since it is possible the come back to a canonical \GHZ{} state tensored with some $\ket{0}$ qubits by using $(\alpha,\vect{d},\vect{d}')$.
  \end{definition}
}
\subsection{Classical Zero-Knowledge proofs and arguments for \npol{}}\label{subsec:ZK}
%% Proof-at-the-end: Useful to move all proofs in this section in the same appendix
\pgfkeys{/prAtEnd/local custom defaults/.style={category=introZK}}

In the first (NIZKoQS) and last protocol presented in this paper, we need to use a classical Zero-Knowledge (ZK) protocol for \npol{} (to obtain NIZKoQS, we also expect the protocol to be non-interactive, but interactivity does not change security or correctness). Intuitively, in a ZK protocol for a language $\lang \in \npol{}$, a prover must convince a verifier that a word $x$ belongs to $\lang$, in such a way that the verifier should not learn anything more about $x$ beyond the fact that $x$ belongs to $\lang$. Because $\lang$ is in \npol{}, $\lang$ is described by a relation $\cR_\lang$, in such a way that a word $x$ belongs to $\lang$ iff there exists a witness $w$ such that $w \in \cR_\lang(x)$. Moreover, deciding if a witness $w$ belongs to $\cR_\lang(x)$ must be doable in polynomial time.

We will now formalize the above security statements, taking definition from \cite{BS_2019_ZK_Ct_rounds}. But first, we introduce a few notations: if $(\P{},\V{})$ is a protocol between two parties $\P{}$ and $\V{}$, we denote by $\OUT_\V{}\langle \P{},\V{} \rangle(x)$ the output of the party $\V{}$ after having followed the protocol with $\P{}$, $x$ being a common input of $\P{}$ and $\V{}$. An honest verifier outputs a single bit ($1$ if they accept and $0$ if they reject), but a malicious verifier can output an arbitrary quantum state.

\begin{definition}[Computational indistinguishability~\cite{BS_2019_ZK_Ct_rounds}]
  Two maps of quantum random variables $X \eqdef \{X_i\}_{\lambda \in \N, i \in I_\lambda}$ and $Y \eqdef \{Y_i\}_{\lambda \in \N, i \in I_\lambda}$ over the same set of indices $I = \cup_{\lambda\in\N} I_\lambda$ are said to be \emph{computationally indistinguishable}, denoted by $X \approx_c Y$, if for any non-uniform quantum polynomial-time distinguisher $D \eqdef \{(D_\lambda, \rho_\lambda)\}_{\lambda \in \N}$, there exists a negligible function $\mu$ such that for all $\lambda \in \N$, $i \in I_\lambda$,
  \begin{align}
    |\pr{D_\lambda(X_i,\rho_\lambda)=1} - \pr{D_\lambda(Y_i, \rho_\lambda)=1}| \leq \mu(\lambda)
  \end{align}
\end{definition}

\begin{definition}[{Post-Quantum Zero-Knowledge Classical Protocol \cite[definitions 2.1 and 2.6]{BS_2019_ZK_Ct_rounds}}]\label{def:postquantumZK}
  Let $(\P{},\V{})$ be a protocol between an honest \PPT{} prover $\P{}$ and an honest \PPT{} verifier $\V{}$. Then $(\P{},\V{})$ is said to be a Post-Quantum Zero-Knowledge (ZK) Classical Protocol for a language $\lang \in \npol{}$ if the following properties are respected:
  \begin{enumerate}
    \item \textbf{Perfect Completeness}: For any $\lambda \in \N$, $x \in \lang \cap \{0,1\}^\lambda, w \in \cR_\lang(x)$,
          \begin{align}
            \pr{\OUT_\V{}\langle \P{}(w), \V{}\rangle(x) = 1} = 1
          \end{align}
    \item \arxivOnly{\textbf{Soundness}: The protocol satisfies one of the following.}\arxivOnlyNoDiff{
          \begin{itemize}
            \item }\textbf{Computational Soundness}: For any non-uniform \QPT{} malicious prover $\P{}^* = \{(\P{}^*_\lambda, \rho_\lambda)\}_{\lambda \in \N}$, there exists a negligible function $\mu(\cdot)$ such that for any security parameter $\lambda \in \N$ and any $x \in \{0,1\}^\lambda \setminus \lang$,
                  \begin{align}
                    \pr{\OUT_\V{}\langle \P{}^*_\lambda(\rho_\lambda), \V{}\rangle(x)=1}\leq \mu(\lambda)
                  \end{align}
                  A protocol with computational soundness is called an argument.
            \arxivOnly{\item \textbf{Statistical Soundness}: There exists a negligible function $\mu(\cdot)$ such that for any unbounded prover $\P{}^*$, any security parameter $\lambda \in \N$ and any $x \in \{0,1\}^\lambda \setminus \lang$,
                  \begin{align}
                    \pr{\OUT_\V{}\langle \P{}^*, \V{}\rangle(x)=1}\leq \mu(\lambda)
                  \end{align}
                  A protocol with statistical soundness is called a proof.}
          \arxivOnlyNoDiff{\end{itemize}}
    \item \textbf{Quantum Zero Knowledge}: There exists a \QPT{} simulator $\Sim$ such that for any \QPT{} verifier $\V{}^* = \{(\V{}^*_\lambda,\rho_\lambda)\}_{\lambda \in \N}$,
          \begin{align}
            \{\OUT_{\V{}^*_\lambda}\langle \P{}(w), \V{}^*_\lambda(\rho_\lambda)\rangle(x)\}_{\lambda,x,w} \approx_c \{\Sim(x,\V{}^*_\lambda,\rho_\lambda)\}_{\lambda,x,w}
          \end{align}
          where $\lambda \in \N$, $x \in \lang \cup \{0,1\}^\lambda$, $w \in \cR_\lang(x)$, and $\V{}^*$ is given to $\Sim$ by sending the circuit description of $\V{}^*$.
  \end{enumerate}
  A Non-Interactive ZK protocol will be denoted NIZK.
\end{definition}

In our last protocol, in order to get stronger guarantees, we may also want to ensure that the prover ``knows'' the secret. This is known as Proof of Knowledge, but since we do not really rely on it in the proofs, we only describe it in more details in \cref{def:QPoK}.
\inAppendix{
  In our last protocol, in order to get stronger guarantees, we may also want to ensure that the prover ``knows'' the secret. Therefore, to get stronger guarantees, we will require the ZK protocol to also be a Proof of Knowledge protocol. Intuitively, we would like to check that any malicious prover $\P^*$ that can convince a verifier with a non-negligible probability has the witness $w$ ``encoded in its source code or memory''. We formalize this notion by saying that there exist a \QPT{} circuit \K, the extractor, which can recover $w$ with non-negligible probability from a full description of $\P^*$ and its input.

  This is usually enough, since being able to obtain a witness with non-negligible probability is usually enough to break the security: for example it could be used to forge a signature and break the unforgeability property as explained in \cite{Unruh_2012_Proof_of_Knowledge}.
  More precisely:

  \begin{definition}[{Post-Quantum Zero-Knowledge Proof of Knowledge \cite{Unruh_2012_Proof_of_Knowledge}}]\label{def:QPoK}
    We say that a Post-Quantum Zero-Knowledge protocol $(\P{},\V{})$ for a relation $\cR_\lang$ is a \emph{Proof of Knowledge} protocol, if it is \emph{quantum extractable} with knowledge error $\kappa = \negl[\lambda]$, i.e.\ if there exists a constant $d > 0$, a polynomially-bounded function $p > 0$, and a \QPT{} $\K$ such that for any interactive \QPT{} malicious prover $P^*$, any polynomial $l$, any security parameter $\lambda \in \N$, any state $\rho$, and any $x \in \{0,1\}^\lambda$, we have:
    \begin{align}
      &\pr{\OUT_V\langle\P^*(\rho), \V\rangle(x) = 1} \geq \kappa(\lambda) \nonumber{} \\
      \Longrightarrow&\pr{w \in \cR_\lang(x) \middle| w \gets \K(\P^*,\rho,x)} \nonumber{}
      \geq \frac{1}{p(\lambda)}\left(\pr{\OUT_V\langle\P^*(\rho), \V\rangle(x) = 1}-\kappa(\lambda)\right)^d%\nonumber{}
    \end{align}
  \end{definition}
}

Several Post-Quantum ZK protocols have been proposed in the literature \cite{Watrous_2005_ZK_against_q_attacks,Unruh_2012_Proof_of_Knowledge,BS_2019_ZK_Ct_rounds} and have been shown to obey properties similar to both \cref{def:postquantumZK,def:QPoK}. Moreover, \cite{LZ19_RevisitingPostquantumFiatShamir,DFMS19_SecurityFiatShamirTransformation} explain how to obtain quantum-secure NIZK (which are also Proof of Knowledge) using the Fiat-Shamir transformation and the hardness of the \LWE{} problem in a Quantum Random Oracle model. In the following, we are agnostic of the used NIZK protocol and we assume the existence of a NIZK protocol obeying \cref{def:postquantumZK,def:QPoK}.

\subsection{Classical Multi-Party Computations}\label{subsec:MPC}
%% Proof-at-the-end: Useful to move all proofs in this section in the same appendix
\pgfkeys{/prAtEnd/local custom defaults/.style={category=introMPC}}

In the last protocol presented in this paper, we also need to use a classical\footnote{But of course post-quantum secure.} Multi-Party Computation (MPC) protocol $\Uppi$. A MPC protocol works as follows: given $n$ (public and deterministic) functions $(f_1,\dots,f_n)$, at the end of the protocol involving $n$ parties $\P{}_1,\dots,\P{}_n$, we expect party $\P{}_i$ to get $f_i(x_1,\dots,x_n)$, where $x_j$ is the (secret) input of the party $\P{}_j$. Moreover, we expect that no party can learn anything more than what they can already learn from $f_i(x_1,\dots,x_n)$. For simplicity, we define $f(x_1,\dots,x_n) = (f_1(x_1,\dots,x_n),\dots,f_n(x_1,\dots,x_n))$.

We can formalize the above security statements using the usual (quantum) real/ideal world paradigm.%
\publishedOnly{
  We will use the precise definition given in~\cite{ABG+21_PostQuantumMultiPartyComputation}, but for completeness it has been copied in \cref{appendix:introMPC}. However, the definition is in spirit close to the Zero-Knowledge property: a MPC protocol is said to be secure if:
  \begin{align}
    \{\Real_{\Uppi,\cA}(\lambda,\vec{x},\rho_\lambda)\}_{\lambda \in \N} \approx_c \{ \Ideal_{f,\Sim}(\lambda,\vec{x},\rho_\lambda) \}_{\lambda \in \N}
  \end{align}
  where  $\vec{x}$ contains the inputs of the parties, $\rho_\lambda$ is a non-uniform advice, $\Real_{\Uppi,\cA}$ is the output of the adversary $\cA$ in a real protocol execution ($\cA$ can corrupt arbitrary many parties at the beginning of the protocol to fully control them), and $\Ideal_{f,\Sim}$ is the view produced by a simulator (that can depend on $\cA$) in an ``ideal world''; this simulator being also able to corrupt arbitrary many parties in order to have access to the inputs/outputs of $f$ for these parties. This definition captures the fact that no attack can do more damage than what is possible by changing the inputs of corrupted parties and reading their outputs. See \cref{appendix:introMPC} for more details.
}%
\inAppendixIfPublished{
  Informally, the protocol $\Uppi$ will be said to be secure if it is impossible to distinguish two ``worlds''. On the one hand, we have a \emph{real world} in which an adversary $\cA = \{\cA_\lambda\}_{\lambda \in \N}$ can corrupt a subset $\cM \subsetneq [n]$ of parties and interact in an arbitrary way with the other honest parties. On the other hand, we have an \emph{ideal world}, in which a simulator $\Sim$ interacts with a functionality, this functionality behaving as a trivially-secure trusted third-party. If these two worlds are indistinguishable, it means that the protocol is ``secure'' because any secret obtained from the real world would also be obtainable from the ideal world\dots{} which is impossible because the ideal world is trivially secure.

  More precisely, we define, following \cite{ABG+21_PostQuantumMultiPartyComputation}, the real and ideal world as follows, where $\vec{x} \eqdef (x_1,\dots,x_n)$ is the inputs of the parties:

  \begin{definition}[$\Real_{\Uppi,\cA}(\lambda,\vec{x},\rho_\lambda)$]
    $\cA_\lambda$ is given $\rho_\lambda$, and gives a subset $\cM \subsetneq [n]$ of corrupted (malicious) parties. Then $\cA_\lambda$ receives the inputs $x_i$ of all corrupted parties $\P{}_i$ ($i \in \cM$), sends and receive all the messages on the behalf of these corrupted parties, and communicates in an arbitrary quantum polynomial time way with the honest parties that follow the protocol $\Uppi$. At the end of the interaction, $\cA_\lambda$ outputs an arbitrary state $\rho$, and we define as $\vec{y}$ the output of the honest parties $\P{}_j$, $j \notin \cM$. Finally, we define $\Real_{\Uppi,\cA}(\lambda,\vec{x},\rho_\lambda)$ as the random variable corresponding to $(\rho, \vec{y})$.
  \end{definition}

  \begin{definition}[$\Ideal_{f,\Sim}(\lambda,\vec{x},\rho_\lambda)$]
    $\Sim$ (playing the role of the adversary) receives $\rho_\lambda$, outputs a set $\cM \subsetneq [n]$ of corrupted parties, interacts with a trusted party (called the ideal functionality) defined below, and outputs at the end a state $\rho$. The ideal functionality also outputs at the end a message $\vec{y}$ corresponding to the output of the trusted party. We then define $\Ideal_{f,\Sim}(\lambda,\vec{x},\rho_\lambda)$ as the random variable corresponding to $(\rho, \vec{y})$. Now we define the ideal functionality:
    \begin{itemize}
      \item The ideal functionality receives the set $\cM \subsetneq [n]$ a subset of corrupted parties, and for each party $\P{}_i$, it receives an input $x_i'$: if $\P{}_i$ is honest ($i \notin \cM$), we have $x_i'=x_i$, otherwise $x_i'$ can be arbitrary.
      \item Then, it computes $(y_1,\dots,y_n) \eqdef f(x_1',\dots,x_n')$, and sends $\{(i,y_i)_{i \in \cM}\}$ to the simulator.
      \item The simulator can choose to abort by sending a message $\bot$ to the ideal functionality. Otherwise it sends a ``continue'' message $\top$. If the message received by the ideal functionality is $\bot$, then it outputs $\bot$ to each honest party, which is formalized by outputting $\vec{y} \eqdef \{(i,\bot)\}_{i \notin \cM}$. Otherwise, it outputs $\vec{y} \eqdef \{(i, y_i)\}_{i \notin \cM}$.
    \end{itemize}
  \end{definition}

  \begin{definition}[Secure MPC~\cite{ABG+21_PostQuantumMultiPartyComputation}]\label{def:MPC}
    Let $f$ be a deterministic function with $n$ inputs and $n$ outputs, and $\Uppi$ be an $n$-party protocol. Protocol $\Uppi$ securely computes $f$ if for every non-uniform quantum polynomial-time adversary $\cA = \{\cA_\lambda\}_{\lambda \in \N}$ corrupting a set of at most $n-1$ players, there exists a non-uniform quantum polynomial-time ideal-world adversary $\Sim$ such that for any combination of inputs $\vec{x} \in (\{0,1\}^*)^n$ and any non-uniform quantum advice $\rho = \{\rho_\lambda\}_{\lambda \in \N}$,
    \begin{align}
      \{\Real_{\Uppi,\cA}(\lambda,\vec{x},\rho_\lambda)\}_{\lambda \in \N}
      \approx_c \{ \Ideal_{f,\Sim}(\lambda,\vec{x},\rho_\lambda) \}_{\lambda \in \N}
    \end{align}
  \end{definition}
}

\subsection{Introduction to the Learning With Errors (\LWE{}) problem}\label{subsec:LWE}
%% Proof-at-the-end: Useful to move all proofs in this section in the same appendix
\pgfkeys{/prAtEnd/local custom defaults/.style={category=introLWE}}

\paragraph{The \LWE{} problem.}
The Learning With Errors (\LWE{}) problem was introduced in~\cite{Regev2005}. We briefly recall the definitions here, more details and useful lemmas can be found in \cref{appendix:introLWE}.

%\publishedVsArxiv{ % We introduce LWE more quickly in the published version, the rest goes in Appendix.
  \begin{definition}[Continuous and Discrete Gaussian]
    For any $s \in \R_{>0}$ and any vector $\vect{x} \in \R^N$, we define $\rho_s(\vect{x}) \eqdef \exp(-\pi \vect{x}^T \vect{x}/s^2)$. We define the continuous Gaussian distribution on $\R^N$ as $\cD_{s}^N(\vect{x}) \eqdef \rho_s(\vect{x})/s^N$ and the discrete Gaussian distribution on $\Z_q^N$ as $\cD_{\Z,s}^N(\vect{x}) \eqdef \rho_s(\vect{x})/\left(\sum_{\vect{y} \in \Z_q^N} \rho_s(\vect{y})\right)$. Note that sampling from $\cD_s^N$ is equivalent to sampling each coordinate from $\cD_s \eqdef \cD_s^1$.
  \end{definition}

  \begin{definition}[Learning With Errors problem (\LWE{})]
    Let $(N,M)$ be two integers, $\alpha \in (0,1)$, $q = q(N) \in \N_{\geq 2}$ be a modulus. For a distribution $\chi$ on $\R$ (respectively on $\Z_q$), the decision-$\LWE{}_{q,\chi}$ problem (matrix version) is to distinguish a sample of the uniform distribution $\cU(\Z_q^{M \times N} \times [0,q)^M)$ (respectively $\cU(\Z_q^{M \times N} \times \Z_q^M)$) from $(\vect{A}, \vect{A}\vect{s}+\vect{e} \bmod q)$, where $\vect{A} \in \Z_q^{M \times N}$ and $\vect{s} \in \Z_q^N$ are sampled uniformly at random, and $\vect{e}$ is sampled from $\chi^M$.
  \end{definition}
%}{}

\inAppendix{
  \subsection{Definitions}
  We provide here a more in-depth introduction to \LWE{}.
  \begin{definition}[Learning With Errors (\LWE{})~\cite{Regev2005}]\label{def:DiscreteLWE}~\\
    Let $N \in \N$ \footnote{While usually the parameters $N$ is written in lowercase, we will use this notation here since $n$ already represents the number of applicants.}, $q = q(N) \in \N_{\geq 2}$ be a modulus and $\chi$ a distribution on $\R$ ($\chi$ may be continuous or $\chi \subseteq \Z$ (we will then say that $\chi$ is discrete), and will always be reduced modulo $q$). For any $s \in \Z_q^N$ we define $A_{\vect{s},\chi}$ as the distribution on $\Z_q^N \times [0,q)$ obtained by sampling $\vect{a} \in \Z_q^{N}$ uniformly at random, $e \sample \chi$ and outputting $(\vect{a}, \vect{a}^T \vect{s} + e \bmod q)$.

    We say that an algorithm solves the \emph{search-$\LWE{}_{q,\chi}$} problem (in the worst case, with overwhelming probability) if for any $\vect{s} \in \Z_q^N$, given an arbitrary number of samples from $A_{\vect{s},\chi}$, it outputs $\vect{s}$ with overwhelming probability. We say that an algorithm solves the \emph{decision-$\LWE{}_{q,\chi}$} problem (on the average, with non-negligible advantage) if it can distinguish with non-negligible advantage between the distribution $A_{\vect{s},\chi}$ where $\vect{s} \sample \Z_q^N$, and the uniform distribution $U \eqdef \cU(\Z_q^N \times [0,q))$ (when $\chi$ is discrete, we consider instead the uniform distribution $U \eqdef \cU(\Z_q^N \times \Z_q)$).

    We can also formulate this problem using matrices by grouping a fixed number $M \in \N$ of samples: For any $\vect{s} \in \Z_q^N$, we can sample $\vect{A} \sample \Z_q^{M \times N}$ and $\vect{e} \gets \chi^M$ a (typically small) vector where each of its component is sampled according to $\chi$. Then let $\vect{b} \eqdef \vect{A} \vect{s} + \vect{e}$. The \emph{search} problem consists in finding $\vect{s}$ given $(\vect{A},\vect{b})$. The \emph{decision} problem consists in deciding, when receiving a couple $(\vect{A},\vect{b})$, if $\vect{b}$ has been sampled uniformly at random (over $[0,q)^M$ if $\cX$ is continuous, or over $\Z_q^m$ if $\chi$ is discrete) or if $\vect{b}$ has been sampled according to the procedure described above (and therefore $\vect{b} = \vect{A} \vect{s} + \vect{e}$).
  \end{definition}

  Note that having access to less samples can only make the problem harder. On the other hand, one can show that having access to polynomially many samples is enough to generate arbitrary many further samples,
  with only a minor degradation in the error~\cite{GCV_2008_Trapdoors,ACPS_2009_Fast_crypto_primitives_circular_secure_enc,APS_2015_Concrete_hardness_LWE}. The worst-case and average-case search problems are in fact equivalent: \cite[Lem.~4.1]{Regev2005} shows how it is possible to turn a distinguisher that can solve the decision-\LWE{} problem with non-negligible advantage into a better distinguisher that can solve the decision-LWE{} problem with overwhelming probability.

  The distribution $\chi$ can be instantiated in many different ways: for example when $\chi$ is always equal to $0$, these problems are trivial, and when $\chi$ is uniform, they are impossible. In practice, $\chi$ is usually a (discrete or continuous) Gaussian~\arxivOnly{ (or more rarely a rounded Gaussian~\cite{Regev2005})}:

  \begin{definition}[Continuous and discrete Gaussian]
    For any $s \in \R_{>0}$ and any vector $\vect{x} \in \R^N$, we define $\rho_s(\vect{x}) \eqdef \exp\left(- \pi \left(\frac{\|\vect{x}\|_2}{s}\right)^2\right) = \exp(- \pi \vect{x}^T \vect{x}/s^2)$. By applying a linear transformation on $x$, we can generalize this notion: for any positive-definite matrix $\Sigma > \vect{0}$, we define:
    \begin{align}
      \rho_{\sqrt{\Sigma}}(\vect{x}) \eqdef \exp(- \pi \cdot \vect{x}^T \Sigma^{-1} \vect{x})
    \end{align}
    In particular, if $\Sigma = s^2 \vect{I}$, we have $\rho_s = \rho_{\sqrt{\Sigma}}$: therefore, $s$ will be used as a shortcut for $\sqrt{\Sigma} = \sqrt{s^2 \vect{I}}$. The normalization of the expression gives $\int_{\R^n} \rho_{\sqrt{\Sigma}}(\vect{x}) = \sqrt{\det \Sigma}$ and $\int_{\R^n} \rho_s(\vect{x}) = s^N$. We can now define the continuous Gaussian distribution:
    \begin{align}
      \cD_{s}^N(\vect{x}) \eqdef \rho_s(\vect{x})/s^N &\qquad \cD_{\sqrt{\Sigma}}(\vect{x}) \eqdef \rho_{\sqrt{\Sigma}}(\vect{x})/\sqrt{\det \Sigma}
    \end{align}
    Note that sampling from $\cD_s^N$ is equivalent to sampling each component from $\cD_s \eqdef \cD_s^1$. Moreover, due to our choice of normalization, $s$ and $\Sigma$ are not exactly equal to the standard deviation and to the covariance matrix: $\cD_s$ has standard deviation $\sigma \eqdef s/(\sqrt{2\pi})$ and the actual covariance of $\cD_{\sqrt{\Sigma}}$ is $\Sigma' \eqdef \Sigma/(2\pi)$.

    If $\Lambda \subseteq \R^N$ is a lattice (i.e.\ a discrete additive subgroup of $\R^N$), we define for any $\vect{c} \in \R^N$ the coset $\Lambda + \vect{c} = \{\vect{x} +\vect{c} \mid \vect{x} \in \Lambda\}$ and $\rho_{\sqrt{\Sigma}}(\Lambda+\vect{c}) \eqdef \sum_{\vect{x} \in \Lambda + \vect{c}} \rho_{\sqrt{\Sigma}}(\vect{x})$. We can define now the discrete Gaussian on $\Lambda + \vect{c}$ by simply normalizing $\rho_{\sqrt{\Sigma}}(\vect{x})$. For any $\vect{x} \in \R^N$, if $\vect{x} \notin \Lambda + \vect{c}$, $\cD_{\Lambda+\vect{c},\sqrt{\Sigma}}(\vect{x}) = 0$ and if $\vect{x} \in \Lambda + \vect{c}$:
    \begin{align}
      \cD_{\Lambda+\vect{c},\sqrt{\Sigma}}(\vect{x}) \eqdef \frac{\rho_{\sqrt{\Sigma}}(\vect{x})}{\rho_{\sqrt{\Sigma}}(\Lambda+\vect{c})}
    \end{align}
    In the following, for $\alpha \in (0,1)$, the discrete Gaussian on $\Z$, $\cD_{\Z, \alpha q} = \cD_{\Z, \sqrt{(\alpha q)^2 \vect{I}}}$, will be particularly important to sample the noise.
  \end{definition}

  The next lemma is useful to bound the length of a vector sampled according to a discrete Gaussian.
  \begin{lemmaE}[Particular case of {\cite[Lem.~1.5]{Ban_1993}\cite[Lem.~2.6]{MP11}}]\label{lem:boundGaussianDistrib}
    For any $s > 0$, we have:
    \begin{align}
      \pr{\|\vect{x}\|_2 \geq s \sqrt{n} \middle| \vect{x} \gets \disGaussCont{s}^n} \leq 2^{-n}
    \end{align}
  \end{lemmaE}
}

The decision-$\LWE_{q,\cD_{\alpha q}}$ problem is widely supposed to be hard to solve even for quantum computers and is the basic building block of many post-quantum cryptographic protocols~\cite{peikert2016decade}.%
\publishedVsArxiv{
  In particular, if $\alpha q > 2\sqrt{N}$, it is on average as hard as worst-case problems on lattices: notably it can be used~\cite{Regev2005,PRS17_PseudorandomnessRingLWEAny} to approximates the decision version of the shortest vector problem ($\GapSVP_\gamma$) to within $\gamma \eqdef \tilde{O}(N/\alpha)$.%
}{In particular, it is on average as hard as worst-case problems on lattices.}%
\inAppendixIfPublished{
  \begin{lemmaE}[Hardness of \LWE{}~\cite{PRS17_PseudorandomnessRingLWEAny}]\label{lem:GapSVPtoLWE}
    Let $N,q$ be integers and $\alpha \in (0,1)$ be such that $\alpha q > 2\sqrt{N}$. If there exists an algorithm that solves decision-$\LWE{}_{q,\cD_{\alpha q}}$, then there exists an efficient quantum algorithm that approximates the decision version of the shortest vector problem ($\GapSVP_\gamma$) \arxivOnly{and the shortest independent vectors problem ($\SIVP_\gamma$) }to within $\gamma \eqdef \tilde{O}(N/\alpha)$.
  \end{lemmaE}
}%
As there is no known algorithm to solve $\GapSVP_\gamma$ for $\gamma = \tilde{O}(2^{N^\eps})$ and $\eps \in (0,\frac{1}{2})$, we will assume (like~\cite{BGGHNSVV_2014_FHE_ABE}) that $\GapSVP_\gamma$ is secure for this kind of superpolynomial $\gamma$ (see \cref{appendix:bestAlgoLWE} for more details).
\inAppendix{
  \subsection{Best algorithms to solve \LWE{}}\label{appendix:bestAlgoLWE}
  To have strong security guarantees, we usually want $\gamma$ to be polynomial in $N$, but this is sometimes impossible. According to \cite{peikert2016decade} the best algorithm for solving these problems in polynomial time works for only slightly sub-exponential approximation factor $\gamma = 2^{\Theta(N \log \log N/\log N)}$. They also mention the algorithm~\cite{Schnorr_1987_Hierarchy} that provides a tradeoff between the approximation $\gamma$ and the running time: an approximation of $\gamma = 2^k$ can be obtained in time $2^{\tilde{\Theta}(N/k)}$. This suggest that there is no efficient algorithm for $\gamma = 2^{N^\epsilon}$ for $\epsilon \in (0,\frac{1}{2})$ (the algorithm provided by \cite{Schnorr_1987_Hierarchy} would indeed run in subexponential time $2^{N^{1-\epsilon}}$). In practice, many works rely on the security of \LWE{} when $\gamma$ is superpolynomial (\cite{BGGHNSVV_2014_FHE_ABE} uses for example the above assumption that $\gamma = 2^{N^{\epsilon}}$ for some $\epsilon \in (0,\frac{1}{2})$), and we will follow this same path here.
}

While \publishedVsArxiv{the above hardness assumption}{the hardness assumption given in \cref{lem:GapSVPtoLWE}} targets a continuous noise distribution, it can also be adapted to discrete Gaussians\arxivOnly{(note that the result is trivial for less used ``rounded Gaussians''~\cite[Lem.~4.3]{Regev2005})}.

In particular, \cite[Thm.~3.1]{Pei10_EfficientParallelGaussian} can be used to show that if decision-$\LWE_{q,\cD_{\alpha q}}$ is hard, then the discrete version decision-$\LWE_{q,\cD_{\Z, s}}$ is hard for $s \eqdef \sqrt{(\alpha q)^2 + \omega\left(\sqrt{\log \lambda}\right)^2}$ (details in \cref{cor:continuousToDiscrete}). Moreover, we can also choose to sample $\vect{s}$ according to $\cD_{\Z,s}$ without weakening the security: this is known as the Hermite Normal Form (details in \cref{lem:normalLWEreduction}).

\inAppendix{
  \begin{corollaryE}[{From continuous Gaussian to discrete Gaussian, corollary of \mbox{\cite[Thm.~3.1]{Pei10_EfficientParallelGaussian}}}][]\label{cor:continuousToDiscrete}
    Let $\lambda, q \in \N$, $\alpha \in (0,1)$. If $e_c$ is sampled according to $\cD_{\alpha q}$ and $e \gets e_c + \cD_{\Z - e_c, \omega\left(\sqrt{\log \lambda}\right)}$, then the marginal distribution of $e$ is within negligible statistical distance $\Delta = 1/(\exp(\pi \omega(\sqrt{\log \lambda})^2 -\ln(2\lambda))-1)=\negl[\lambda]$
    of $\cD_{\Z, s}$ with $s \eqdef \sqrt{(\alpha q)^2 + \omega\left(\sqrt{\log \lambda}\right)^2}$. Moreover, if there exists $x \in \Z_q$ (for example $x = \vect{a}^T \vect{s}$ for some $\vect{s} \in \Z_q^N$ and $\vect{a} \in \Z_q^N$) such that $e_c$ is distributed according to $x+\cD_{\alpha q} \bmod q$, then the statistical distance between the distribution $e \bmod q$ and $x + \cD_{\Z,s}\bmod q$ is $\Delta = \negl[\lambda]$. Finally, if $e_c$ is uniformly sampled over $[0,q)$, the marginal distribution of $e$ is uniform over $\Z_q$.
  \end{corollaryE}
  \begin{proofE}
    The first part of this corollary is a direct application of \cite[Thm.~3.1]{Pei10_EfficientParallelGaussian}: We define $\vect{c}_1 = \vect{0}$, $\Lambda_1 = \Z$, $\Sigma_2 = (\alpha q)^2 \vect{I}_1$, $\Sigma_1 = \omega(\sqrt{\log \lambda})^2 \vect{I}_1$ with $\sqrt{\Sigma_1} = \omega(\sqrt{\log \lambda}) \geq \eta_\eps(\Z)$, where $\eps$ is a negligible function of $N$ (the last inequality comes from \cite[Lem.~2.5]{Pei10_EfficientParallelGaussian}). When $\vect{x}_2(=e_c)$ is chosen according to a continuous Gaussian, the marginal distribution of $e$ is within statistical distance $8 \eps = \negl[\lambda]$ of $\cD_{\Z, \sqrt{\Sigma}}$ (the constant can actually be improved in this specific case), with $\Sigma = \Sigma_1 + \Sigma_2 = (\alpha q)^2 \vect{I}_1+ \omega(\sqrt{\log \lambda})^2 \vect{I}_1 = s^2\vect{I}_1$, which concludes the first part of the proof.

    To see that the equality also holds when $e_c \gets x + \cD_{\alpha q} \bmod q$ for some $x \in \Z$, we remark that for any $\bar{e} \in \Z_q$ and for any $e_c = x + e_{c}'\in \R$:
    \begin{align}
      p \eqdef& \pr{\bar{e} = e \bmod q \middle| e \gets e_c + \cD_{\Z - e_c, \omega\left(\sqrt{\log \lambda}\right)}}\\
      =& \pr{\bar{e} = e \middle| e \gets e_c + \cD_{\Z - e_c, \omega\left(\sqrt{\log \lambda}\right)} \bmod q}\\
      =& \pr{\bar{e} = e \middle| e \gets x+e_c' + \cD_{\Z - e_c', \omega\left(\sqrt{\log \lambda}\right)} \bmod q}
    \end{align}
    where the last equality comes from $\Z - e_c = \Z - (e_c' \bmod q)$. But we already know that $e_c' + \cD_{\Z - e_c', \omega\left(\sqrt{\log \lambda}\right)}$ is statistically close to $\cD_{\Z,s}$ from the first part of the corollary, which concludes this part of the proof.

    Now, let us assume that $e_c$ is sampled uniformly at random over $[0,q)$. Because $e_c$ can be uniquely decomposed into $e_c = e_{c,1} + e_{c,2}$ where $e_{c,1} \in \{0,\dots,q-1\}$ and $e_{c,2} \in [0,1)$, and because $\Z-e_c = \Z^n-e_{c,2}$ we have:
    \begin{align*}
      \begin{split}
        &\pr{\bar{e} = e \bmod q \middle| e_c \gets [0,q), e \gets e_c + \cD_{\Z-e_c, \omega\left(\sqrt{\log \lambda}\right)}}\\
        &\quad= \frac{1}{q} \int_0^1d e_{c,2} \sum_{e_{c,1} =0}^{q-1}
        \sum_{\substack{e \in \Z\\ \bar{e} = e \mathrlap{[q]}}} \frac{\rho_{\omega\left(\sqrt{\log \lambda}\right)}(e-e_{c,1}-e_{c,2})}{\rho_{\omega\left(\sqrt{\log \lambda}\right)}(\Z-e_{c,2})}
      \end{split}
    \end{align*}
    Similarly, because any integer $\hat{e} \in \Z$ can be decomposed uniquely into $\hat{e} = e - e_{c,1}$ where $e_{c,1} \in \{0,\dots,q-1\}$, and $e = \bar{e} \bmod q$, we can merge the two sums into a single sum over $\hat{e} \in \Z$, replace the $e-e_{c,1}$ with $\hat{e}$, and use the fact that $\sum_{\hat{e} \in \Z} \rho_{\omega}(\hat{e}-e_{c,2}) = \rho_\omega(\Z-e_{c,2})$ to conclude that the probability is equal to $1/q$: it corresponds to a uniform sampling over $\Z_q$.
  \end{proofE}

  While in the usual $\LWE{}$ problem, $\vect{s}$ is sampled uniformly at random over $\Z_q^N$, it is also possible to sample $\vect{s}$ according to a small Gaussian. Since this can be seen as a reformulation of the problem in its Hermite Normal form (HNF), this new sampling method is actually at least as secure as the initial uniform sampling, and it appears to be more efficient. Moreover, the construction given in \cite{MP11} will naturally be formulated in this form, so in this paper we will also sample $\vect{s}$ according to a small Gaussian (however, one can easily come back to the initial formulation as we will see later).

  \begin{lemmaE}[{Normal \LWE{} problem~\mbox{\cite[Lem.~2]{ACPS_2009_Fast_crypto_circular_enc}}}]\label{lem:normalLWEreduction}
    Let $q = p^k$ be a prime power. There is a deterministic polynomial-time transformation $T$ that, for arbitrary $\vect{s} \in \Z_q^N$ and error distribution $\chi$, maps $A_{\vect{s},\chi}$ to $A_{\bar{\vect{s}},\chi}$ where $\bar{\vect{s}} \gets \chi^N$, and maps $U(\Z_q^N \times \Z_q)$ to itself.
  \end{lemmaE}

  The idea of the proof given in \cite[Lem.~2]{ACPS_2009_Fast_crypto_circular_enc} (following \cite{MR_2009_lattice_based_crypto}, see also \cite[p.~23]{peikert2016decade}) is to first obtain and select enough samples $(\bar{\vect{A}}, \bar{\vect{y}} \eqdef \bar{\vect{A}} \vect{s} + \bar{\vect{s}})$, $\bar{\vect{s}}$ being sampled according to $\chi$, to ensure that $\bar{\vect{A}} \in \Z_q^{N \times N}$ is invertible. Then, any new sample $(\vect{a}, y \eqdef \langle \vect{a}, \vect{s}\rangle + e)$ can be updated into $(\vect{a}', b')$ where $\vect{a}' \eqdef -(\bar{\vect{A}}^T)^{-1}\vect{a}$ and $b' \eqdef y + \langle \vect{a}', \bar{\vect{y}} \rangle = \langle \vect{a}', \bar{\vect{s}} \rangle + e$.
}

\paragraph{The \cite{MP11} construction.}

In order to realize the primitives described in \cref{def:fct_requirements,def:GHZcanCapable}, we will use the trapdoor system presented in~\cite{MP11}. This work introduced an algorithm $\MPGen(1^\lambda)$ that samples a matrix $\vect{A}$ and a trapdoor $\vect{R}$. In addition, $\vect{A}$ is indistinguishable from a random matrix (without $\vect{R}$), and $g_{\vect{A}}(\vect{s}, \vect{e}) \eqdef \vect{A}\vect{s}+\vect{e}$ is injective and can be inverted given $\vect{R}$ for any $(\vect{s},\vect{e}) \in \ccX$ ($\ccX$ will be defined later as sets of elements having a small norm). The details of the construction and parameters required by the \cite{MP11} construction are provided in \cref{appendix:introLWE}.%

\inAppendix{%
  \subsection{The \cite{MP11} Construction}\label{sec:MP11construction}
  We give here more details on the construction introduced in~\cite{MP11}. Note that~\cite{MP11} uses a left-style multiplication $\vect{s}^T\vect{A}^T$, here we will prefer a right-style multiplication $\vect{A}\vect{s}$ for consistency. In this paper, we will also focus on the (more efficient) computationally-secure construction presented in \cite{MP11} (we also require the modulus $q \eqdef 2^k$ to be a power of $2$), but the same method should extend to other constructions with even $q$. We give in \cref{def:MP11construction} the construction we will use, and we explain after the intuition behind it.

  \begin{definition}[{\cite{MP11}}]\label{def:MP11construction}
    Let $\lambda \in \N$ be a security parameter, and $\cP_0 = (k, N, \alpha, r_{\max})$ with $(k,N) \in \N^2$, $\alpha \in (0,1)$ and $r_{\max} \in \R$, be some parameters that can depend on $\lambda$. We define $M \eqdef N(1+k)$, $q \eqdef 2^k$, $\cX_g \eqdef \left\{(\vect{s},\vect{e}) \in \Z_q^N \times \Z_q^M \middle| \left\|\SmallBlockMatrix{\vect{s}}{\vect{e}}\right\|_2 \leq r_{\max} \right\}$, $\vect{g} \eqdef \begin{bmatrix}
      1 & 2 & 4 & \dots & 2^{k-1}
    \end{bmatrix}^T
    \in \Z_q^{k}$, and the gadget matrix $\vect{G}$ as:
    \begin{align}
      \vect{G} \eqdef \vect{I}_n \otimes \vect{g} =
      \begin{bmatrix}
        \scalebox{.5}{\vdots} \\
        \vect{g}\\
        \scalebox{.5}{\vdots} \\
        & \ddots\\
        & &\scalebox{.5}{\vdots} \\
        & &\vect{g}\\
        & &\scalebox{.5}{\vdots} \\
      \end{bmatrix} \in \Z_q^{Nk \times N}
    \end{align}
    We define now in \cref{fig:constructionMP11} a procedure to sample a public matrix and its trapdoor $(\vect{A}, \vect{R}) \gets \MPGen_{\cP_0}(1^\lambda)$, and for any $(\vect{s}, \vect{e}) \in \cX_g$, we define $\vect{y} \eqdef g_{\vect{A}}(\vect{s}, \vect{e})$ and its inversion procedure $(\tilde{\vect{s}}, \tilde{\vect{e}}) \eqdef \MPDec(\vect{R}, \vect{A}, \vect{y})$.
    \begin{figure}[htb]
      \centering
      \begin{pchstack}[boxed, center,space=0.3cm]
        \begin{pcvstack}[space=0.3cm]
          {\normalfont\procedure[linenumbering]{$\MPGen_{\cP_0}(1^\lambda)$}{
              \hat{\vect{A}} \sample \Z_q^{N \times N}\\
              \vect{R} = \HorizBlockMatrix{\vect{R}_1}{\vect{R}_2} \gets \disGaussAQ^{Nk \times 2N}\\
              \vect{A} \eqdef \SmallBlockMatrix{\hat{\vect{A}}}{\vect{G} - \vect{R}_2\hat{\vect{A}} - \vect{R}_1} \in \Z_q^{M \times N}\\
              \pcreturn (\vect{A}, \vect{R})
            }}
          {\normalfont\procedure[linenumbering]{$g_{\vect{A}}(\vect{s}, \vect{e})$}{
              \pcreturn \vect{A}\vect{s} + \vect{e}
            }}
          {\normalfont\procedure[linenumbering]{$\MPDec_{\cP_0}(\vect{R} \eqdef \HorizBlockMatrix{\vect{R}_1}{\vect{R}_2}, \vect{A}, \vect{y})$}{
              \pclinecomment{Return $(\tilde{\vect{s}}, \tilde{\vect{e}}) \in \cX_g$ s.t. $\vect{y} = g_{\vect{A}}(\vect{s},\vect{e})$}\\
              \tilde{\vect{s}} \eqdef \InvertGadget_{\cP_0}(\HorizBlockMatrix{\vect{R}_2}{\vect{I}_{Nk}} \vect{y})\\
              \tilde{\vect{e}} \eqdef \vect{y} - \vect{A}\tilde{\vect{s}}\\
              \pcif \left\|\SmallBlockMatrix{\tilde{\vect{s}}}{\tilde{\vect{e}}}\right\|_2 > r_{\max}\  \pcreturn \bot \ \pcfi\\
              \pcreturn (\tilde{\vect{s}}, \tilde{\vect{e}})
            }}
        \end{pcvstack}
        \begin{pcvstack}[space=0.3cm]
          {\normalfont\procedure[linenumbering]{$\InvertSmallGadget_{\cP_0}(\vect{y} = \begin{bmatrix} y_0 & \dots y_{k-1}\end{bmatrix}^T)$}{
              \pclinecomment{Return $s \in \Z_q$ such that $\vect{y} = \vect{g}s + \vect{e}$}\\
              s \eqdef 0\\
              \pcfor i = k-1,\dots,0 \pcdo\\
              \t \pcif y_i - 2^i s \notin \left[-\frac{q}{4}, \frac{q}{4}\right) \bmod q\\
              \t \t s \eqdef s + 2^{k-1-i} \  \pcfi \pcendfor\\
              \pcreturn s
            }}
          {\normalfont\procedure[linenumbering]{$\InvertGadget_{\cP_0}(\vect{y} = \begin{bmatrix} \vect{y}_{1}^T & \dots \vect{y}_{N}^T\end{bmatrix}^T)$}{
              \pclinecomment{Return $\vect{s} \in \Z_q^{n}$ such that $\vect{y} = \vect{G}\vect{s}+\vect{e}$}\\
              \pcfor i = 1,\dots,N \pcdo\\
              \t s_i \eqdef \InvertSmallGadget_{\cP_0}(\vect{y}_i)\\
              \pcendfor\\
              \vect{s} \eqdef
              \begin{bmatrix}
                s_1 & \dots & s_n
              \end{bmatrix}^T\\
              \pcreturn \vect{s}
            }}
        \end{pcvstack}
      \end{pchstack}
      \caption{Construction from \cite{MP11}}
      \label{fig:constructionMP11}
    \end{figure}
  \end{definition}

  The idea of the construction given in \cref{def:MP11construction} is to use a gadget matrix $\vect{G}$ which is easy to invert even in the presence of noise, and then to hide this matrix inside a random looking matrix $\vect{A}$. $\vect{G}$ is easy to invert because $\vect{G}$ basically encodes all bits of the binary representation of each component of $\vect{s}$ in a different component (where a $1$ is encoded by $q/2+$noise, and $0$ is encoded by $0$+noise): the inversion of $\vect{G}$ is doable by a rounding operation, starting from the least significant bits of the components of $\vect{s}$. Then, as we will see, $\vect{R}$ can be used to invert $\vect{A}\vect{s} + \vect{e}$: with $\vect{R}$ we can obtain a vector of the form $\vect{G}\vect{s} + \vect{e}'$ ($\vect{e}'$ is small if $\vect{R}$ has sufficiently small singular values), and then since $\vect{G}$ is easy to invert we can obtain $\vect{s}$ easily. We formalize now these statements.

  \begin{lemmaE}[\cite{MP11}]\label{lem:MPGenIndistinguishableMatrix}
    If $\LWE{}_{q,\disGaussAQ}$ is hard and if $Nk = \poly[\lambda]$, then the matrix $\vect{A}$ obtained via $\MPGen$ is indistinguishable from a uniform random matrix.
  \end{lemmaE}
  \begin{proofE}
    For completeness, we sketch the proof given in \cite{MP11}. Since $\vect{G}$ is a fixed matrix, it is easy to subtract $\vect{G}$ from $\vect{A}$ and transpose the matrices: $\vect{A}$ looks random iff $(\hat{\vect{A}}^T, \hat{\vect{A}}^T\vect{R}_2^T+\vect{R}_1^T)$ looks random. But this is nearly an exact \LWE{} instance in its normal form ($\vect{R}_2$ is indeed sampled according to a small Gaussian). The only difference is that the $\hat{\vect{A}}$ samples are ``reused'' multiple times since $\vect{R}_i^T$ are matrices and not vectors. However, as shown in~\cite[Lem.~6.2]{PW_2011_lossy_trapdoor}, an hybrid argument can be made (by gradually replacing each column with random elements) to prove that it is still hard to distinguish it from a random matrix if $\LWE{}_{q,\disGaussAQ}$ is hard (since we obtain $Nk$ hybrid games, we need $Nk=\poly[\lambda]$).
  \end{proofE}

  \begin{remark}
    It is possible to easily translate the normal form into a more usual form in which $\vect{s}$ is sampled uniformly at random: one can sample a random invertible matrix $\vect{A}_r \in \Z_q^{N \times N}$, and define $\vect{A}' \eqdef \SmallBlockMatrix{\vect{I}}{\vect{A}}\vect{A}_r$.
  \end{remark}

  \begin{lemmaE}[][]\label{lem:MPInjectiveSingularValue}
    Let $\hat{\vect{A}} \in \Z_q^{N \times N}$, $(\vect{R}_1,\vect{R}_2) \in (\Z^{Nk \times N})^2$, $\vect{A} \eqdef \SmallBlockMatrix{\hat{\vect{A}}}{\vect{G} - \vect{R}_2\hat{\vect{A}} - \vect{R}_1} \in \Z_q^{M \times N}$. If the highest singular value $\sigma_{\max}(\vect{R})$ of $\vect{R}$ is such that $\sqrt{\sigma_{\max}(\vect{R})+1} < \frac{q}{4 r_{\max}}$, then $g_{\vect{A}}: \cX_g \rightarrow  \Z_q^M$ is injective and for all $(\vect{s},\vect{e}) \in \cX_g$, $\MPDec(\HorizBlockMatrix{\vect{R}_1}{\vect{R}_2}, \vect{A}, \vect{A}\vect{s} + \vect{e}) = (\vect{s}, \vect{e})$.

    Moreover, if we denote by $C \approx \frac{1}{\sqrt{2\pi}}$ the universal constant defined in \cite[Lem.~1.9]{MP11}, and if we have for the parameters $\cP_0$ defined in \cref{def:MP11construction}:
    \begin{align}
      \sqrt{\left(C \times \alpha q \times \sqrt{N}(\sqrt{k}+\sqrt{2}+1)\right)^2 + 1} \leq \frac{q}{4 r_{\max}}\label{eq:linkWithRmax}
    \end{align}
    then with overwhelming probability (on $N$) $\geq 1-2 e^{-N}$, we have $\sqrt{\sigma_{\max}(\vect{R})^2 +1 } < \frac{q}{4 r_{\max}}$ and therefore $g_{\vect{A}}$ is injective.
  \end{lemmaE}
  \begin{proofE}
    This proof can be obtained by combining theorems from \cite{MP11}. For completeness, we put the full proof of this lemma here. For the injectivity, we assume that there exist $(\vect{s},\vect{s}') \in S$ and $(\vect{e},\vect{e}') \in E$ such that $g_{\vect{A}}(\vect{s},\vect{e}) = g_{\vect{A}}(\vect{s}',\vect{e}')$. So $\vect{A}(\vect{s}-\vect{s}')+(\vect{e}-\vect{e}')=\vect{0}$, and it is enough now to prove that $\vect{s}=\vect{s}'$ to obtain $\vect{e}=\vect{e}'$. We multiply (on the left) this equation by $\HorizBlockMatrix{\vect{R}_2}{\vect{I}_{Nk}}$: we obtain $\vect{G}(\vect{s}-\vect{s}')-\vect{R}_1 (\vect{s}-\vect{s}')+\HorizBlockMatrix{\vect{R}_2}{\vect{I}_{Nk}}(\vect{e}-\vect{e}') = \vect{0}$, i.e.\ if we define $\tilde{\vect{R}} \eqdef \BlockMatrix{c|c|c}{-\vect{R}_1 & \vect{R}_2 & \vect{I}_{Nk} }$, $\tilde{\vect{e}} \eqdef \tilde{\vect{R}} \SmallBlockMatrix{\vect{s}-\vect{s}'}{\vect{e}-\vect{e}'}$ and $\tilde{\vect{s}} = \vect{s}-\vect{s}'$, we have $\vect{G}\tilde{\vect{s}}+\tilde{\vect{e}} = \vect{0}$. First, we remark that
    \begingroup
    \allowdisplaybreaks
    \begin{align}
      \sigma_{\max}(\tilde{\vect{R}^T})
      &= \|\tilde{\vect{R}^T}\|_2\nonumber\\
      &= \max_{x, \|x\|_2=1} \left\|\BlockMatrix{c}{-\vect{R}_1^T \tabularnewline \vect{R}_2^T \tabularnewline \vect{I}_{Nk} } x\right\|_2 \nonumber\\
      &= \max_{x, \|x\|_2=1} \sqrt{\left\|\BlockMatrix{c}{-\vect{R}_1^T \tabularnewline \vect{R}_2^T} x\right\|_2^2 + \|x\|_2^2}\nonumber\\
      &= \sqrt{\max_{x, \|x\|_2=1}\left\|\BlockMatrix{c}{\vect{R}_1^T \tabularnewline \vect{R}_2^T} x\right\|_2^2 + 1}\nonumber\\
      &= \sqrt{\max_{x, \|x\|_2=1}\left\|\vect{R}^T x\right\|_2^2 + 1}\nonumber\\
      &= \sqrt{\sigma_{\max}\left(\vect{R}^T\right)^2 + 1}\nonumber\\
      &= \sqrt{\sigma_{\max}\left(\vect{R}\right)^2 + 1}\label{eq:sigmaMaxR1R2}
    \end{align}
    \endgroup
    We prove now that for all $i$, $\tilde{\vect{e}}[i] \in (-\frac{q}{2}, \frac{q}{2})$: If we denote by $\vect{u}_i$ the vector such that $\vect{u}_i[i] = 1$ and for all $j \neq i$, $\vect{u}_i[j] = 0$, we have
    \begin{align}
      |\tilde{\vect{e}}[i]|
      &= |\vect{u}_i^T\tilde{\vect{e}}| = \left|\vect{u}_i^T \tilde{\vect{R}}\SmallBlockMatrix{\vect{s}-\vect{s}'}{\vect{e}-\vect{e}'}\right|
        = \left\langle \tilde{\vect{R}}^T \vect{u}_i, \SmallBlockMatrix{\vect{s}-\vect{s}'}{\vect{e}-\vect{e}'} \right\rangle\\
      \intertext{Using the Cauchy-Schwarz inequality we get:}
      |\tilde{\vect{e}}[i]|&\leq \|\tilde{\vect{R}}^T \vect{u}_i\|_2 \left\|\SmallBlockMatrix{\vect{s}-\vect{s}'}{\vect{e}-\vect{e}'}\right\|_2\\
      \intertext{Using $\|\vect{A}\|_2 = \sigma_{\max}(\tilde{\vect{R}}^T)$, and the definition of $\cX_g$ (\cref{def:MP11construction}), then \cref{eq:sigmaMaxR1R2} and the assumption on $\sigma_{\max}(\vect{R})$ we obtain:}
      |\tilde{\vect{e}}[i]|&\leq \sigma_{\max}(\tilde{\vect{R}^T}) \times (2 r_{\max}) < \frac{q}{2}
    \end{align}
    Now, if we write for all $i \in [N]$, $\tilde{\vect{s}}[i] = \sum_{j = 0}^{k-1} 2^j \tilde{\vect{s}}[i]_j$, with for all $j$, $\tilde{\vect{s}}[i]_j \in \{0,1\}$ (we also use the same notation for $\tilde{\vect{e}}$), we can prove, following the same path as the $\InvertSmallGadget$ algorithm that $\forall i,j$, $\tilde{\vect{s}}[i]_j = 0$. Let $i \in [n]$. If we consider only the line $l \eqdef (i-1)k+k$ of $\vect{G}\tilde{\vect{s}}+\tilde{\vect{e}}=\vect{0}$, we obtain $2^{k-1}\tilde{\vect{s}}+\tilde{\vect{e}}[l] = \frac{q}{2} \tilde{\vect{s}}[i]_0 + \tilde{\vect{e}}[i] \bmod q = \vect{0}$. Since $\tilde{\vect{e}}[l] \in (-\frac{q}{2}, \frac{q}{2})$, we can't have $\tilde{\vect{s}}[i]_0 = 1$, so $\tilde{\vect{s}}[i]_0 = 0$. We can then iterate the same process for $m = 1 \dots k-1$ with the line $l \eqdef (i-1)k+(k-m)$ to show that $\tilde{\vect{s}}[i]_{m} = 0$, i.e.\ $\tilde{\vect{s}}=0$, which concludes the proof of the injectivity of $g_{\vect{A}}$. Because this proof follows exactly the algorithm $\MPDec$, it is easy to see using the same argument that this algorithm correctly inverts any $\vect{y} = \vect{A}\vect{s}+\vect{e}$ with $(\vect{s},\vect{e}) \in \cX_g$.

    We prove now the second part of the theorem. Because $\vect{R}$ is sampled according to a discrete Gaussian of parameter $\alpha q$, so according to \cite[Lem.~2.8]{MP11}, this distribution is $0$-subgaussian, and therefore we can apply \cite[Lem.~2.9]{MP11} with, for example, $t=\sqrt{N/\pi}$ (we divide by $\pi$ just to simplify the probability, and to transform the $\leq$ into a $<$). So with probability $1-2\exp(-\pi t^2) = 1-2e^N$,
    \begin{align*}
      \sigma_{\max}(\vect{R}) \leq C \times \alpha q \times (\sqrt{Nk}+\sqrt{2N}+\sqrt{\frac{N}{\pi}}) < C \times \alpha q \times \sqrt{N}(\sqrt{k}+\sqrt{2}+1)
    \end{align*}
    To conclude the proof, we inject this equation inside \cref{eq:sigmaMaxR1R2} and use \cref{eq:linkWithRmax}.
  \end{proofE}
}

\section{Cryptographic requirements}\label{sec:crypto_requirements}
All the protocols are based on the existence of a (post-quantum secure) cryptographic family of functions, which is said to be \AssumpFct{}. Intuitively, in the definition below, the string $\vect{d}_0$ is such that $\vect{d}_0[i] = 1$ iff applicant $a_i$ is part of the support of the final \GHZ{}):
\begin{definition}\label{def:fct_requirements}
  Let $\lambda \in \N$ be a security parameter, and $n \in \N$. We say that a family of functions $\{f_k: \cX_\lambda \rightarrow \cY_\lambda \}_{k \in \cK_\lambda}$ with $\cX_\lambda \subseteq \{0,1\}^l$ is \AssumpFct{} if there exists a function $h\colon \cX_\lambda \rightarrow \{0,1\}^{n}$ ($h$ could be extended to depend on $k$) such that the following properties are respected:
  \begin{itemize}
  \item \textbf{efficient generation}: for all $\vect{d}_0 \in \{0,1\}^{n}$ a \PPT{} machine can efficiently sample $(k, t_k) \leftarrow \Gen(1^\lambda, \vect{d}_0)$ to generate (with overwhelming probability) an index $k \in \cK_\lambda$ and a trapdoor $t_k \in \cT_\lambda$.
  \item \textbf{efficient computation}: for any index $k$, the function $f_k$ is efficiently computable by a \PPT{} algorithm $\Eval(k,x)$.
  \item \textbf{trapdoor}: for any trapdoor $t_k$ and any $y$, there exists a procedure $\Dec$ that efficiently inverts $f_k$ when $y$ has two preimages. More precisely, if $y$ has exactly two distinct preimages, we have $\Dec(t_k, y) = f^{-1}(y)$. If the number of preimages is not 2, we expect $\Dec(t_k,y) = \bot$.
  \item \textbf{quantum input superposition}: there exists a \QPT{} algorithm that, on input $1^\lambda$ generates a uniform superposition $\sum_{x \in \cX_\lambda} \ket{x}$.\\
    For instance, if $\cX_\lambda = \{0,1\}^l$, it can be trivially done by computing $H^{\otimes l}\ket{0}^{\otimes l}$.
  \item \textbf{$\delta$-$2$-to-$1$}\footnote{This kind of functions are sometimes said to be $\delta$-$2$-regular.}: for all $k \in \cK$, when sampling an input $x$ uniformly at random in $\cX_\lambda$, the probability that $y \eqdef f_k(x)$ has exactly two distinct preimages (denoted by $x_y$ and $x_y'$ or simply $x$ and $x'$) is larger than $1-\delta$. When $\delta = 0$, we just say that the function is $2$-to-$1$.
  \item \textbf{\XOR{} of $h$}: for all $k$, there exists $\vect{d}_0 \in \{0,1\}^n$ such that for all $y$, if $y$ has exactly $2$ distinct preimages $x$ and $x'$ (i.e.\ $f_k^{-1}(y) = \{x, x'\}$ with $x \neq x'$), then:
          \[h(x) \xor h(x') = \vect{d}_0\]
          Moreover, if $k$ was obtained from $\Gen(1^\lambda,\vect{d}_0^*)$, then $\vect{d}_0 = \vect{d}_0^*$. We will always assume that $\vect{d}_0$ is easy to obtain from $t_k$ (it is always possible to append $d_0$ to $t_k$). Since, fixing $k$ fixes also $\vect{d}_0$, in the following we may use interchangeably $\vect{d}_0(k)$, $\vect{d}_0(t_k)$ or simply $\vect{d}_0$.
    \item
          \begin{minipage}[t]{.55\textwidth}
            \textbf{indistinguishability}:
            If $\Gen(1^\lambda, \cdot)$ is seen as en encryption function, then, similarly to \indcpa{} security\arxivOnly{ (see for example~\cite{Goldreich_2004_Foundation_cryptography})}, a quantum adversary cannot learn $\vect{d}_0$ from the index $k$, seen as an ``encryption'' of $\vect{d}_0$. More formally, if we formulate it as an indistinguishability game $\INDFCT_\Gen^\cA$, where $\cA = (\cA_1,\cA_2)$ is a non-uniform \QPT{} adversary (as explained in \cref{sec:notation}, $\cA_1$ gives implicitly its internal state to $\cA_2$), then any non-uniform \QPT{} adversary $\cA$ has only a \publishedVsArxiv{\makebox[0pt][l]{negligible advantage of winning this game:}}{negligible advantage of winning this game:} % makebox fakes a wrap.
          \end{minipage}\hfill%
          \begin{minipage}[t]{.35\textwidth}
            \begin {pcimage}\label{game:indfct}
              {\normalfont \game[linenumbering]{$\INDFCT_{\Gen{}} ^\adv (\lambda) $}{%
                  (\vect{d}_0^{(0)},\vect{d}_0^{(1)}) \gets \cA_1(1^\lambda)\\
                  c \sample \bin\\
                  (k, t_k) \gets \Gen(1^\lambda, \vect{d}_0^{(c)})\\
                  \tilde{c} \gets \cA_2(k)\\
                  \pcreturn \tilde{c} = c
                }}
            \end{pcimage}%
          \end{minipage}
          \begin{align}
            \pr{\INDFCT_{\Gen{}}^{\adv}(\lambda)} \leq \frac{1}{2} + \negl[\lambda]
          \end{align}
  \end{itemize}
\end{definition}
Note that we provide in \cref{sec:functionConstruction} an explicit implementations of a \AssumpFct{} function.

The above properties are enough to prove the security of \blind{} and \blindSup{} against an arbitrary corruption of parties, and can be used to prove the security of \blindCan{} and \blindCanSup{} when the adversary corrupts only the server and the non-supported applicants. However, if we want to prove the security of these last two protocols in a stronger attack model, namely when the adversary can also corrupt some supported applicants, we also require our function to have a stronger property. Intuitively, the $\partialInfo{}$ function will list the messages to send to all applicants: if it contains a $\cross$ for applicant $i$, it means that applicant $i$ is not part of the support of the \GHZ{}, if it is a $0$ or a $1$, it means that the applicant gets a \GHZ{} canonical state---up to a local $X$ correction if it is a $1$---and, if it is a $\bot$, it means that the protocol aborts ``locally''.

\begin{remark}
  This local abort is interesting since it triggers when $y$ has only one preimage, and this means that the server is malicious with overwhelming probability\footnote{This is the case when $\delta = \negl[\lambda]$, which is possible to obtain using \LWE{} with superpolynomial noise ratio.}. Note that one may want to send this abort bit to all applicants, however it is not yet known if leaking this bit to other corrupted applicants could reduce the security of the protocol\publishedVsArxiv{.}{ (for example, it is not clear if a malicious server could force the protocol to abort when one specific honest applicant is not part of the \GGHZ{}). To avoid that issue, we introduce this notion of local abort, that tells locally to participants if the server was behaving honestly. Note that it is important to make sure that this abort bit don't leak to the adversary later, otherwise the security is not guaranteed anymore.}
\end{remark}

\begin{definition}\label{def:GHZcanCapable}
  A \AssumpFct{} family of function $\{f_k\}$ is said to be \AssumpFctCan{} if this additional property is respected:
  \begin{itemize}
    \item \textbf{indistinguishability with partial knowledge}: We want to show that, by leaking some information about the ``key'' of the \GHZ{} state owned by malicious applicants, we don't reveal additional information about the support status of applicants. More precisely, there exists a \PPT{} algorithm $\partialInfo\colon \cT_\lambda \times \cY_\lambda \rightarrow \{0,1,\cross,\bot\}^n$ with the following properties:
          \begin{itemize}
            \item
                  \textbf{correctness}: $\forall k \in \cK_\lambda, y \in \cY_\lambda$, and $v \gets \partialInfo(t_k, y)$:
                  \begin{itemize}
                    \item $y$ has exactly two preimages iff there is no partial abort (see discussion above): $|f_k^{-1}(y)| = 2$ iff $\bot \notin v$
                    \item for all $i$, if $v[i] \in \{0,1\}$ then $\vect{d}_0[i] = 1$, and if $v[i] = \cross$, then $\vect{d}_0[i] = 0$ (required to make sure $\cross$ is sent only to non-supported applicants and that $0$/$1$ is sent only to supported applicants).
                    \item if $|f_k^{-1}(y)| = 2$, $\exists u \in \{h(x),h(x')\}$, such that  $\forall i$, if $\vect{d}_0[i] = 1$ then $v[i] = u[i]$ (required to make sure that all corrections are correct).
                  \end{itemize}
          \end{itemize}
          \begin{minipage}[t]{0.64\linewidth}%
            \vspace*{-.7em}
            \lapbox[\textwidth]{0cm}{%
              \raisebox{-\height}{ % Align better baseline. Better solution?
                \begin{varwidth}{10cm}%
                  \begin{pcimage}\label{game:indpartialfct}
                    {\normalfont\game[linenumbering]{$\INDPARTIAL_{\Gen{},\partialInfo{}} ^\adv (\lambda) $}{%
                        (\cM, \vect{d}_0^{(0)},\vect{d}_0^{(1)}) \gets \cA_1(1^\lambda)\\
                        \pcif \exists i \in \cM, \vect{d}_0^{(0)}[i] \neq \vect{d}_0^{(1)}[i]: \pcreturn \pcfalse\ \pcfi
                        \label{line:Mcond}\\
                        c \sample \bin\\
                        (k, t_k) \gets \Gen(1^\lambda, \vect{d}_0^{(c)})\\
                        y \gets \cA_2(k)\\
                        v \gets \partialInfo(t_k, y)\\
                        \tilde{c} \gets \cA_3(\{(i,v[i])\}_{i \in \cM})\\
                        \pcreturn \tilde{c} = c
                      }}%
                  \end{pcimage}
                \end{varwidth}
              }
            }%
          \end{minipage}\begin{minipage}[t]{.36\linewidth}
            % \begin{minipage}[t]{.5\linewidth}
            \begin{itemize}[wide, labelwidth=!, labelindent=0pt]
              \item
                    \textbf{security}:
                    The game on the left is impossible to win with non negligible advantage for any \QPT{} adversary $\cA = (\cA_1,\cA_2,\cA_3)$ (note that $\cM$ is intuitively the set of malicious corrupted applicants, and the condition line \ref{line:Mcond} is added because otherwise there is a trivial uninteresting way to distinguish).
            \end{itemize}
          \end{minipage}
  \end{itemize}
\end{definition}

For our protocol \authBlindCanDist{}, we also need to make sure that this family of functions can be computed in a distributed manner among users:
\begin{definition}\label{def:GHZdistCapable}
  A \AssumpFctCan{} family of function $\{f_k\}$ is said to be distributable if the above procedures can be computed after gathering partial results from the parties. More precisely:
  \begin{itemize}
    \item There exists $\GenLocal{}$, a ``local'' generation procedure such that sampling $(k,t_k) \leftarrow \Gen(1^\lambda,\vect{d}_0)$ can be done by first sampling for all $i$: $ (k^{(i)},t_k^{(i)}) \leftarrow \GenLocal(1^\lambda,\vect{d}_0[i])$ and defining $k \eqdef (k^{(1)},\dots,k^{(n)})$ and $t_k \eqdef (t_k^{(1)},\dots,t_k^{(n)})$. We will denote as $\cK_{\lambda,\texttt{Loc}}$ the set of such $k^{(i)}$, and we assume that $\cK = \cK_{\lambda,\texttt{Loc}}^{n}$.
    \item Similarly, there exists \partialInfoLocal{}, a ``local'' version of \partialInfo{} such that sampling $v \leftarrow \partialInfo{}(t_k,y)$ can be done by sampling for all $i$: $v[i] \leftarrow \partialInfoLocal{}(t_k^{(i)},y)$.
    \item Finally, there exists a method \partialAlpha{} such that for any bit string $b$ and for any $y$ such that $f_k^{-1}(y) = \{x,x'\}$ with $x \neq x'$ we have $\langle b, x \xor x' \rangle = \xor_i \partialAlpha(i, t_k^{(i)},y,b)$.
  \end{itemize}
  Moreover, we cannot assume anymore that people running these functions will be honest. Therefore, if we want to make sure that a non-supported malicious applicant cannot alter the state obtained by supported applicants (for example by providing a function which is not $\delta$-$2$-to-$1$), we also require the existence of a circuit $\CheckHonestTrapdoor_\lambda(\vect{d}_0[i], t_k^{(i)}, k^{(i)})$ that returns $\pctrue$ iff $k^{(i)} \in \cK_{\lambda,\texttt{Loc}}$\footnote{This is particularly important when $\cK_{\lambda,\texttt{Loc}} \subseteq \{0,1\}^*$ and when there exists no efficient algorithm to decide if a bit string $k^{(i)}$ is indeed an element of $\cK_{\lambda,\texttt{Loc}}$. For example, with our construction, it is easy to produce a key $k'$ that is indistinguishable from a key $k \in \cK$, and such that the function $f_{k'}$ is injective instead of $\delta$-$2$-to-$1$.}and if $k^{(i)}$ is the public key corresponding to the trapdoor $t_k^{(i)}$, embedding the bit $\vect{d}_0[i]$. This circuit can in particular be combined with a ZK protocol to prove in a Zero-Knowledge way that $k^{(i)} \in \cK_{\lambda,\texttt{Loc}}$.
\end{definition}

We also provide \cref{sec:compilerAssumpFctCan} a generic construction that turns a \AssumpFct{} family of functions into a $\delta'$-\AssumpFctCanNoDelta{} distributable family of functions, with $\delta' = 1-(1-\delta)^n\leq \delta n$. In particular, if $\delta$ is a negligible function of $\lambda$ as in \cref{sec:construction_f_superpoly} and $n = O(\lambda)$, $\delta'$ is negligible.

\section{Non-Interactive Zero-Knowledge Proofs on Quantum States}\label{sec:NIZKoQS}
%% Proof-at-the-end: Useful to move all proofs in this section in the same appendix
\pgfkeys{/prAtEnd/local custom defaults/.style={category=proofsNIZKoQS}}

In this section we first define formally our new concept of Non-Interactive and Non-Destructive Zero-Knowledge proofs on Quantum States (NIZKoQS), and define a protocol achieving NIZKoQS. The more involved protocol \authBlindCanDist{} defined in the next section will also exploits NIZKoQS (but this protocol will have more than one message as it is also consuming the state produced by the NIZKoQS), while the other simpler protocols will only rely on the correctness (or completeness) of the following NIZKoQS protocol. Before giving the formal definition, let us motivate and describe informally the definition.

\paragraph{Quantum language.} In classical NIZK, a language $\lang$ is a set of strings, so similarly we will define a quantum language $\lang_\cQ$ as a set of quantum states. For instance, we could consider:
\begin{itemize}
\item the quantum language made of \BBHeightyFor{} states $\lang_\cQ^{\BBHeightyFor{}} = \{\ket{0}, \ket{1}, \ket{+}, \ket{-}\}$,
\item the quantum language $\lang_\cQ^{ex} = \{\ket{\vect{d}} \pm \ket{\vect{d} \xor \vect{d}_0} \mid (\vect{d},\vect{d}_0) \in (\{0,1\}^n)^2, \vect{d}_0[1] = 1, w_H(\vect{d}_0) = 2\}$ ($w_H$ denotes the Hamming weight), corresponding to all hidden \GHZ{} states whose first qubit is supported and where the support has size $2$ (i.e.\ only two qubits are entangled forming a Bell state, where one of these qubits is at the first position),
\item but we can also consider quantum languages referring to classical secrets, for instance if $p_k$ is a public key (say of a bitcoin wallet), and if $\Ver_{p_k}(s_k) = 1$ iff $s_k$ is the private key of $p_k$, we can define $\lang_\cQ^{p_k} = \{\ket{\vect{d}} \pm \ket{\vect{d} \xor \vect{d}_0} \mid (\vect{d},\vect{d}_0) \in (\{0,1\}^n)^2, \vect{d}_0[1] = 0 \lor (\vect{d}_0[1] = 1 \land \exists s_k, \Ver_{p_k}(s_k) = 1)\}$ that informally allows the prover to ``send'' a hidden \GHZ{} state where the first qubit is supported \emph{only} if the prover knows the private key of $p_k$.
\end{itemize}

Classically, both the prover and the verifier typically have a copy of the word $x$, and since information can be copied classically, the verification process cannot ``destroy'' $x$. Quantumly, this is not true anymore, therefore, instead of saying that all parties agree on $\rho$ before the protocol, what matters is that \emph{at the end} of the protocol, the verifier should end up with a $\rho \in \lang_\cQ$.

\paragraph{Relation, witness and quantum ZK.}

Classically, to check if a word $x$ belongs to a language $\lang$, we usually define a relation $\cR$ between a \emph{witness} $w$ and the word, saying that $x \in L \Leftrightarrow \exists w, w \cR x$. The prover typically knows the witness $w$ and the ZK property ensures that the verifier has no way to learn the witness $w$, formalizing the fact that the verifier learns nothing beyond the fact that the statement is true. Quantumly, we mimic this definition by defining a relation between classical \emph{witnesses} (or \emph{classes}\footnote{Unlike in classical ZK, the witness $\omega$ cannot be used to verify that a quantum state belongs to the quantum language, for this reason the term \emph{class} may be more appropriate. This also justifies the usage of a different notation $\omega$ instead of $w$.}) $\omega$ and quantum states $\cR$, saying that a quantum state $\rho$ belongs to a quantum language $\lang_\cQ$ if and only if there exists $\omega$ such that $\omega \cR \rho$. Similarly, we want to ensure that the verifier has no way to learn $\omega$.

However, even if our definition does not say anything more about witnesses, we need to choose them appropriately to obtain a meaningful and secure protocol. Moreover, at that stage it may not even be clear what could be used as a witness. For instance, in the quantum language $\lang_\cQ^{\BBHeightyFor{}}$ defined above, what would be the witness of $\ket{0}$? Because the ZK property ensures that no information leaks about the witness, while we typically want to ensure that no information is leaked about the received state, one could naively say that the witness is the classical description of the state. Unfortunately if each witness $\omega_\rho$ is in a $1$-to-$1$ correspondence to its corresponding state $\rho \in \lang_\cQ$, then it would be impossible to obtain the ZK property: for instance, given a random state in $\lang_\cQ^{\BBHeightyFor{}} = \{\ket{0}, \ket{1}, \ket{+}, \ket{-}\}$, it is possible to rule out one of the 4 states: for instance by measuring the state in the computational basis, if we measure $b$ we know that the state can't be $\ket{1-b}$ and therefore we know that the witness can't be $\omega_{\ket{1-b}}$, contradicting the ZK property.

To overcome this issue, a single witness $\omega$ must characterize a \emph{class} of states. For instance, for the language $\lang_\cQ^{\BBHeightyFor{}}$ we will define two witnesses $0$ and $1$ characterizing the basis of the state and we therefore define the relation $0 \cR \ket{0}$, $0 \cR \ket{1}$, $1 \cR \ket{+}$ and $1 \cR \ket{-}$. These classes will be used to characterize two of the three wanted properties:
\begin{itemize}
\item \textbf{Completeness} (or correctness): An honest prover should be able to choose $\omega$ and generate on the side of the verifier a state in $\lang_\omega \eqdef \{\rho \mid \omega \cR \rho\}$.
\item \textbf{Zero-Knowledge}: A malicious verifier should be unable to learn the witness $\omega$ with significant advantage over a random guess. Because of the completeness property, the verifier should therefore be unable to learn the class $\lang_\omega$ chosen by the prover that was supposed to contain the target state.
\item \textbf{Soundness}: Finally we also expect that if the prover is malicious, then an honest verifier will obtain a state in $\lang$ whenever it accepts. (Note that this property does not depend on $\cR$)
\end{itemize}

From the above properties, it clearly appears that when the relation $\cR$ is thinner (i.e.\ when $|\lang_\omega|$'s are smaller and the number of witnesses increases), we get a stronger result: indeed, an honest prover can choose more precisely the sent state and a malicious verifier is more confused as there are more classes to which a state could belong to. In particular, it is always possible to define a trivial NIZKoQS protocol for any language $\lang_\cQ$ if there is a single witness $\omega_0$ such that $\forall \rho \in \lang_\cQ, \omega_0 \cR \rho$: the prover would not do anything and the verifier would simply generating an arbitrary state in $\lang_\cQ$. However, the guarantees are quite poor in that case as the verifier can fully describe the state\dots{} For this reason, we will focus on non trivial relations, and we will always specify the relation associated to a quantum language.

Note that in some cases, it may be cumbersome to write separately the language and the relation, especially when the witness is an arbitrary label and when only the equivalence class formed by the relation matters. In that case, we may abuse notations and write directly $\lang = \{ \lang_\omega \}_\omega$, like $\lang_\cQ^{\BBHeightyFor{}} = \{\{\ket{0},\ket{1}\},\{\ket{+},\ket{-}\}\}$. This more succinct notation makes it clearer that the prover can choose in advance the basis (computation or Hadamard) of the state obtained by the verifier, that an honest verifier would always obtain a \BBHeightyFor{} state, and that a malicious verifier would be unable to learn the basis of the output state. For the other above examples of languages $\lang_\cQ^{ex}$ and $\lang_\cQ^{p_k}$, the witness that we will consider is the support $\vect{d}_0$ of the \GHZ{} state (therefore no malicious verifier will be able to learn any information about $\vect{d}_0$ beyond the fact that it respects the constraints that are specified in the language). We can now provide the formal definition of (NI)ZKoQS.

\begin{definition}[{Zero-Knowledge Proof on Quantum State (ZKoQS)}]\label{def:NIZKoQS}
  Let $(\P,\V)$ be a protocol between an honest \QPT{} prover\footnote{In our case the prover is actually \PPT{}.} $\P{}$ and an honest \QPT{} verifier $\V{}$ (that also outputs a final quantum state). Let $E = \cup_{n \in \N} \linearOp_\circ(\cH_n)$ be the set of finite dimensional quantum states (where $\cH_n$ is the Hilbert space of dimension $n$), $\cR \subseteq \{0,1\}^* \times E $ be a relation between bit strings (called \emph{witnesses} or \emph{classes}) and quantum states, and $\lang_\cQ = \{\rho \in E \mid \exists \omega, \omega \cR \rho\}$ be a \emph{quantum language} defined by $\cR$. Then $(\P_\lambda{},\V_\lambda{})$ is said to be a \emph{Zero-Knowledge proof on Quantum State} (ZKoQS) for $\lang_\cQ$ if the following properties are respected:
  \begin{enumerate}
    \item \textbf{Completeness}: There exists a negligible function $\mu(\cdotwild)$ such that for any $\lambda \in \N$ and $\omega$ such that $\exists \rho' \in \lang, \omega \cR \rho'$,
          \begin{align}
            \pr{a = 1 \text{ and } \omega \cR \rho \mid (a,\rho) \leftarrow \OUT_\V{}\langle \P_\lambda(\omega), \V_\lambda\rangle} = 1-\mu(\lambda)
          \end{align}
    \item \textbf{Soundness}: For any non-uniform \QPT{} malicious prover $\P{}^* = \{(\P{}^*_\lambda, \sigma_\lambda)\}_{\lambda \in \N}$, there exists a negligible function $\mu(\cdot)$ such that for any security parameter $\lambda \in \N$,
          \begin{align}
            \pr{ a = 1 \text{ and } \rho \notin \lang \mid (a,\rho) \leftarrow \OUT_\V\langle \P{}^*_\lambda(\sigma_\lambda), \V_\lambda\rangle}\leq \mu(\lambda)
          \end{align}
          When $\P^*$ is unbounded, it is called a \emph{proof} otherwise an \emph{argument}.
    \item \textbf{Quantum Zero Knowledge}: There exists a \QPT{} simulator $\Sim_\lambda$ such that for any \QPT{} verifier $\V{}^* = \{(\V{}^*_\lambda,\sigma_\lambda)\}_{\lambda \in \N}$,
          \begin{align}
            \{\OUT_{\V{}^*_\lambda}\langle \P_\lambda{}(\omega), \V{}^*_\lambda(\sigma_\lambda)\rangle\}_{\lambda,\omega} \approx_c \{\Sim_\lambda(\V{}^*_\lambda,\sigma_\lambda)\}_{\lambda,\omega}
          \end{align}
          where $\lambda \in \N$, $\omega \in \{\omega \mid \exists \rho, \omega \cR \rho\}$, and $\V{}^*$ is given to $\Sim_\lambda$ by sending the circuit description of $\V{}^*$.
  \end{enumerate}
  A Non-Interactive ZKoQS protocol---in which a single message is sent, from the prover to the verifier---will be denoted NIZKoQS.
\end{definition}

Now, we define a protocol where we can prove any property on the set of entangled qubits of a hidden \GHZ{} state in a NIZKoQS fashion. Note that we formalise the notion of ``any property'' by relying on a function $\Auth\colon \{0,1\}^n \times \cW_c \rightarrow \{0,1\}$ specifying if a given support $\vect{d}_0 \in \{0,1\}^n$ (which corresponds to the witness/class of the NIZKoQS) is allowed or not, where the decision can optionally depend on another classical secret witness $w \in \cW_c$. This additional optional set of witnesses $\cW_c$ is needed to allow NIZKoQS relying on classical secrets like in the $\cL_\cQ^{p_k}$ example provided above, where the $\cW_c$ would correspond to the set of private keys.

\begin{remark}\label{rq:whyHonestTrapdoor}
  Note that we also require here the existence of $\CheckHonestTrapdoor(\vect{d}_0,t_k,k)$ that will check if $k \in \cK$ and if $\vect{d}_0$ corresponds to the constant involved in the \XOR{} property of $k$. The reason is that we cannot anymore be sure that Alice (the prover) is honest: therefore we need to check that the $k$ sent by Alice was honestly prepared. This is particularly important when $\cK_{\lambda} \subseteq \{0,1\}^*$ and when there exists no efficient algorithm to decide if a bit string $k$ is indeed an element of $\cK_{\lambda}$. For example, with our construction, it is easy to produce a key $k'$ that is indistinguishable from a key $k \in \cK$, and such that the function $f_{k'}$ is injective instead of $\delta$-$2$-to-$1$. Our construction based on \LWE{} does guarantee that there exists a function $\CheckHonestTrapdoor$, which internally checks if the singular values of the trapdoor $\vect{R}$ are small enough and if the norm of $(\vect{s}_0,\vect{e}_0)$ is small enough (see \cref{lem:conditionsFkDelta2To1,lem:upperBoundDelta} for more details).
\end{remark}

\begin{protocol}
  \caption{\blindZK{}}\label{protocol:blindZK}
  \textbf{Assumptions}: There exists a \AssumpFctNegl{} family of functions (\cref{def:fct_requirements}), together with an efficient function $\CheckHonestTrapdoor_\lambda(\vect{d}_0, t_k, k)$ outputting \pctrue{} if $k \in \cK$ and if $\vect{d}_0 = \vect{d}_0(t_k)$.\\
  \textbf{Parties}: A classical sender/prover (Alice) and a quantum receiver/verifier (Bob).\\
  \textbf{Common inputs}: The size $n$ of the hidden \GHZ{} and an efficiently computable function $\Auth\colon \{0,1\}^n \times \cW_c \rightarrow \{0,1\}$ where $\{0,1\}^n$ corresponds to the witness defined in \cref{def:NIZKoQS} and $\cW_c$ is a set of other optional witnesses.\\
  \textbf{Alice's input}: The support $\vect{d}_0 \in \{0,1\}^n$ of the hidden \GHZ{} state and a witness $w_c \in \cW_c$ such that $\Auth(\vect{d}_0,w_c) = 1$.\\
  \textbf{Bob's output}: Bob can reject or accept and output a quantum state if he thinks that there exist $\vect{d}_0$ and $w$ such that $\Auth(\vect{d}_0,w_c) = 1$ and such that the quantum state is a hidden \GHZ{} state of support $\vect{d}_0$.\\
  \textbf{Protocol}:\\
  \begin{compressedList}
  \item Alice: Generate $(k,t_k) \gets \Gen(1^\lambda, \vect{d}_0)$ and a NIZK proof $\pi$ proving that $\CheckHonestTrapdoor(\vect{d}_0,t_k,k) \land \Auth(\vect{d}_0,w_c)=1$ (the witness being $(\vect{d}_0,t_k,w_c)$ and the word $k$) and send $(k,\pi)$ to Bob.
  \item Bob: Check that $\pi$ is correct (if not, reject), then perform the following operations (circuit in \publishedOnly{\cref{cref:appendixBlind}, }\cref{fig:circuit_bob}):
    \begin{compressedList}
    \item create the state $\sum_{x \in \cX_\lambda} \ket{x}\ket{h(x)}\ket{f_k(x)}$ by applying in superposition the unitary mapping $\ket{x}\ket{0}\ket{0} \mapsto \ket{x}\ket{h(x)}\ket{f_k(x)}$ on the uniform superposition of all inputs,
    \item measure the third register in the computational basis, which gives a $y$,
    \item measure the first register in the Hadamard basis, which gives a bit string $b$
    \end{compressedList}
    Then, output the resulting quantum state on the second register.
  \end{compressedList}
\end{protocol}

\inAppendixIfPublished{
  \begin{figure}[ht]
    \centering
    \includegraphics[]{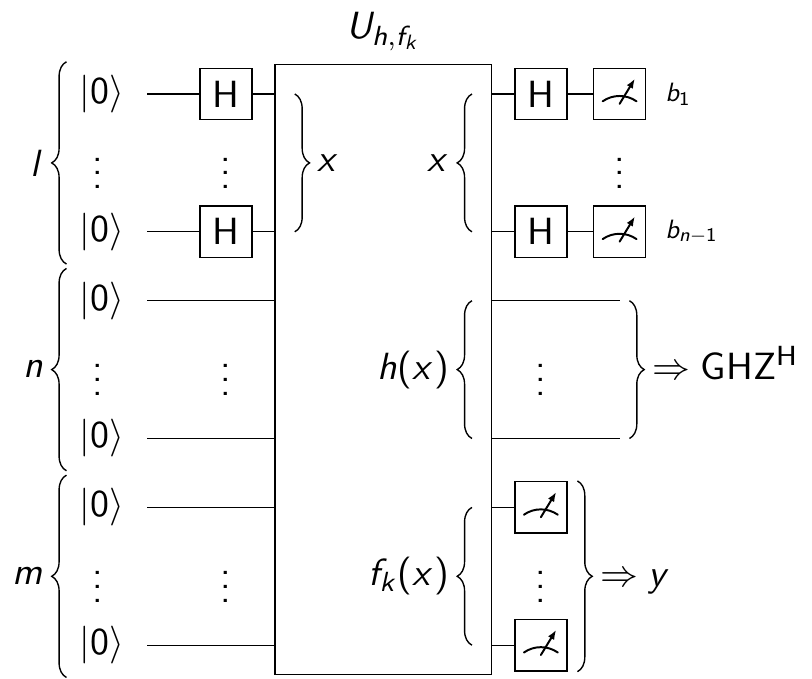}
    \caption{Circuit performed by the server Bob.}
    \label{fig:circuit_bob}
  \end{figure}
}

\begin{theorem}[NIZKoQS]\label{thm:NIZKoQS}
  Let $n \in \N$ be the size of the quantum state outputted by Bob in \blindZK{} and $\delta=\negl[\lambda]$. The protocol \blindZK{} (where Alice plays the role of the prover $\P$ and Bob the verifier $\V$) is a NIZKoQS for the quantum language, parameterized by the witnesses/classes $\vect{d}_0 \in \{0,1\}^n$, defined by all hidden \GHZ{} states $\rho$ on $n$ qubits whose support $\vect{d}_0$ is such that there exists $w_c$ such that $\Auth(\vect{d}_0, w_c)$ (defining the relation $\vect{d}_0 \cR \rho$). More precisely, when the verifier Bob accepts, with overwhelming probability the outputted by Bob is the hidden \GHZ{} state $\ket{\vect{d}}+(-1)^\alpha\ket{\vect{d}'}$, with
  \begin{align}
    \vect{d} = h(x) \qquad \vect{d}'= h(x') \qquad \alpha = \bigoplus_i b_i (x_i \xor x'_i) = \langle b, x \xor x'\rangle
  \end{align}
  In particular, since by definition of $f_k$ we have $\vect{d} \xor \vect{d}' = h(x) \xor h(x') = \vect{d}_0$, the support of the hidden \GHZ{} is $\vect{d}_0$ as explained in \cref{subsec:defGHZ}, where $\exists w_c, \Auth(\vect{d}_0,w_c)$.
\end{theorem}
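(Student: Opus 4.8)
The plan is to verify, one after the other, the three properties required by \cref{def:NIZKoQS}: completeness, soundness and quantum zero-knowledge. Completeness and soundness reduce to a single trace of Bob's circuit together with the $\delta$-$2$-to-$1$ and \XOR{} properties of \cref{def:fct_requirements} (plus completeness, resp.\ soundness, of the underlying NIZK), while zero-knowledge follows from a short hybrid argument combining the zero-knowledge property of the NIZK (\cref{def:postquantumZK}) and the indistinguishability property $\INDFCT$.

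\textbf{Completeness.} First I would compute the state produced by Bob on an honestly generated $k \gets \Gen(1^\lambda,\vect{d}_0)$ with $\Auth(\vect{d}_0,w_c)=1$. Applying the map $\ket{x}\ket{0}\ket{0}\mapsto\ket{x}\ket{h(x)}\ket{f_k(x)}$ to the uniform superposition and grouping by $y=f_k(x)$ gives $\sum_y\bigl(\sum_{x\in f_k^{-1}(y)}\ket{x}\ket{h(x)}\bigr)\ket{y}$; by the $\delta$-$2$-to-$1$ property with $\delta=\negl[\lambda]$, measuring the third register returns, except with negligible probability, a $y$ with exactly two preimages $x\neq x'$, leaving $\ket{x}\ket{h(x)}+\ket{x'}\ket{h(x')}$. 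Measuring the first register in the Hadamard basis (using $H^{\otimes l}\ket{z}=\sum_b(-1)^{\langle b,z\rangle}\ket{b}$, dropping normalisation as elsewhere in the paper) and observing $b$ leaves, up to a global phase, $\ket{h(x)}+(-1)^{\langle b,x\xor x'\rangle}\ket{h(x')}$; setting $\vect{d}=h(x)$, $\vect{d}'=h(x')$, $\alpha=\langle b,x\xor x'\rangle$ and using the \XOR{} property $h(x)\xor h(x')=\vect{d}_0$ (valid since $k\in\cK$), this is exactly a hidden \GHZ{} of support $\vect{d}_0$ in the sense of \cref{subsec:defGHZ}, hence an element of $\lang_\omega$ with $\omega=\vect{d}_0$. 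Because $(\vect{d}_0,t_k,w_c)$ is a valid witness for the statement $\CheckHonestTrapdoor(\vect{d}_0,t_k,k)\land\Auth(\vect{d}_0,w_c)=1$, perfect completeness of the NIZK makes Bob accept, so the relation holds except with the negligible probability coming from the $\delta$-slack and the (negligible) failure of $\Gen$.

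\textbf{Soundness.} For an arbitrary prover, if Bob accepts then by soundness of the NIZK the word $k$ lies, except with negligible probability, in the NIZK language, i.e.\ there exist $\vect{d}_0,t_k,w_c$ with $\CheckHonestTrapdoor(\vect{d}_0,t_k,k)=1$ and $\Auth(\vect{d}_0,w_c)=1$. The first conjunct forces $k\in\cK$ carrying the \XOR{} constant $\vect{d}_0(k)=\vect{d}_0$, so $f_k$ genuinely is $\delta$-$2$-to-$1$ with the \XOR{} property; re-using the circuit trace above (which only used $k\in\cK$, never that $k$ came from $\Gen$), Bob's output is, except with probability $\le\delta$, a hidden \GHZ{} of support $\vect{d}_0$, which lies in $\lang_\cQ$ since $\Auth(\vect{d}_0,w_c)=1$. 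The soundness error is the NIZK soundness error plus $\delta=\negl[\lambda]$; if the NIZK is statistically sound, so is this step.

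\textbf{Quantum zero-knowledge.} I would let $\Sim_\lambda(\V^*_\lambda,\sigma_\lambda)$ pick an arbitrary fixed $\vect{d}_0^\star$ (say $\vect{0}^n$), sample $(k,t_k)\gets\Gen(1^\lambda,\vect{d}_0^\star)$, run the NIZK zero-knowledge simulator on word $k$ to get a simulated proof $\tilde\pi$, feed $(k,\tilde\pi)$ to $\V^*_\lambda(\sigma_\lambda)$ and output its output; note that $k$ need not be in the NIZK language, which is harmless because the NIZK zero-knowledge guarantee is only invoked below when $k$ is honestly generated with respect to a valid $\vect{d}_0$. Indistinguishability from the real run with any valid class $\omega=\vect{d}_0$ is shown with two hybrids: (i) replace the honest proof $\pi$ (computed from the valid witness $(\vect{d}_0,t_k,w_c)$) by the simulated $\tilde\pi$ — the statement is true and a witness is at hand, so this is indistinguishable by the NIZK zero-knowledge property of \cref{def:postquantumZK}, the bound being uniform over the statement $k$ and hence surviving the average over $k\gets\Gen(1^\lambda,\vect{d}_0)$; (ii) replace $k\gets\Gen(1^\lambda,\vect{d}_0)$ by $k\gets\Gen(1^\lambda,\vect{d}_0^\star)$ — since everything after sampling $k$ (producing $\tilde\pi$, running $\V^*$) is a fixed \QPT{} function of $k$, a distinguisher here breaks $\INDFCT$ with $\vect{d}_0^{(0)}=\vect{d}_0$, $\vect{d}_0^{(1)}=\vect{d}_0^\star$. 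The order of the two switches is essential: the proof must be replaced while $k$ is still honest, since after switching $k$ we no longer hold a NIZK witness for it. The main obstacle I expect is purely in the bookkeeping of this zero-knowledge part — fixing the order of the hybrids, checking the NIZK zero-knowledge bound is uniform enough in the statement/witness to be averaged over the random $k$, and formalising the reduction that wraps $\V^*$ (composed with the receipt of $\tilde\pi$) as a malicious NIZK verifier — while the completeness/soundness circuit computation is essentially the calculation already carried out informally in \cref{subsec:quick_overview} and should go through directly.
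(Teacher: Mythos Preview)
Your proposal is correct and follows essentially the same route as the paper: the completeness and soundness arguments are the same circuit trace plus NIZK completeness/soundness, and the zero-knowledge argument is the same two-step hybrid (swap the real NIZK proof for the simulated one, then swap $\Gen(1^\lambda,\vect{d}_0)$ for $\Gen(1^\lambda,\cdot)$ on a different support via $\INDFCT$). The only cosmetic differences are that the paper's simulator samples $\vect{d}_0'$ uniformly rather than fixing $\vect{d}_0^\star=\vect{0}^n$, and it invokes the NIZK simulator in the black-box form $\Sim_{ZK}(k,\,k\rightSend\V^*_\lambda,\,\sigma_\lambda)$ of \cref{def:postquantumZK} (producing the verifier's output directly) rather than as a proof-producing simulator; neither change affects the argument.
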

\begin{proof}
  The protocol is non interactive since a single message $(k,\pi)$ is sent (the rest of the proof will also work for interactive protocols). First, for completeness we can see that when both parties are honest, the verifier accepts ($a=1$) with probability $1$ thanks to the perfect completeness of the classical NIZK protocol. Then, one can see that after the measurement of the third register, Bob gets on the first two registers the state $\sum_{x \in f^{-1}(y)} \ket{x}\ket{h(x)}$. But because the first register was created as a uniform superposition, $y$ has exactly two preimages $x,x'$ with probability $1-\delta$. So with probability $1-\delta$, the state obtained by an honest Bob is $(\ket{x}\ket{h(x)}+\ket{x'}\ket{h(x')})$. Then, after applying the Hadamard gates (preparing the measurement in the Hadamard basis), we obtain the state $\sum_b (-1)^{\langle b, x\rangle}\ket{b}\ket{h(x)} + (-1)^{\langle b, x'\rangle}\ket{b}\ket{h(x')}$. So after measuring a $b$, the second register contains the state:
  \begin{align}
    (-1)^{\langle b, x\rangle}\ket{h(x)}+(-1)^{\langle b, x'\rangle}\ket{h(x')} &= (-1)^{\langle b, x\rangle}(\ket{h(x)}+(-1)^{\langle b, x \xor x'\rangle}\ket{h(x')})
  \end{align}
  which is equal to the state $\HGHZ{}_{\alpha,\vect{d},\vect{d}'} = \ket{\vect{d}} + (-1)^\alpha\ket{\vect{d}'}$, once we get rid of the global phases and define $\vect{d}, \vect{d}'$ and $\alpha$ as above. Since we have $\vect{d} \xor \vect{d}' = h(x) \xor h(x') = \vect{d}_0$, the support of this hidden GHZ state is $\vect{d}_0$ as explained in \cref{subsec:defGHZ}, which concludes the proof of completeness.

  The soundness property relies on the soundness property of the classical NIZK protocol, and again on the correctness of the circuit performed by the server: the probability of accepting a $k$ which is not in $\cK_\lambda$ or such that there exists no $w_c$ such that $\Auth(\vect{d}_0, w_c)$ is negligible; the correctness of the protocol described above is enough to conclude that the hidden \GHZ{} has the expected properties.

  For the Zero-Knowledge property, we define the following simulator, where $\Sim_{ZK}$ is the simulator of the classical (NI)ZK protocol, and $k \rightSend \V_\lambda^*$ is the machine obtained by running $\V_\lambda^*$, and sending $k$ as first message:
  \begin{namedGame}[SimLambdaNIZKoQS]{\Sim_\lambda(\V_\lambda^{*}, \sigma_\lambda)}*
    \vect{d}'_0 \sample \{0,1\}^n\\
    (k',t_{k'}) \leftarrow \Gen(1^\lambda, \vect{d}'_0)\\
    \pcreturn \Sim_{ZK}(k', k' \rightSend \V_\lambda^*, \sigma_\lambda)
  \end{namedGame}
  To prove that the output of \refGame*{SimLambdaNIZKoQS} is indistinguishable from the real world, we define an hybrid distribution:
  \begin{namedGame}[NIZKoQSGameOne][\Game{1}]{\Game{1}(\vect{d}_0, \V_\lambda^{*}, \sigma_\lambda)}*
    (k,t_{k}) \leftarrow \Gen(1^\lambda, \vect{d}_0)\\
    \pcreturn \Sim_{ZK}(k, k \rightSend \V_\lambda^*, \sigma_\lambda)
  \end{namedGame}
  First, one can see that $\{\refGame*{NIZKoQSGameOne}(\vect{d}_0, \V_\lambda^{*},\sigma_\lambda)\}_{\lambda,\vect{d}_0} \approx_c \{\Sim_\lambda(\V_\lambda^*, \sigma_\lambda)\}_{\lambda,\vect{d}_0}$. Indeed, if a non-uniform distinguisher $D$ can distinguish between these two distributions, then we can use $D$ to break the game $\INDFCT_{\Gen{}} ^\adv (\lambda)$ by simply sending for any $\lambda$ a random $\vect{d}_0$ and the $\vect{d}_0$ which maximizes the distinguishing probability (anyway, $D$ is already non-uniform). Then, $\{\refGame*{NIZKoQSGameOne}(\vect{d}_0, \V_\lambda^{*},\sigma_\lambda)\}_{\lambda,\vect{d}_0} \approx_c \{\OUT_{\V{}^*_\lambda}\langle \P_\lambda{}(\vect{d}_0), \V{}^*_\lambda(\sigma_\lambda)\rangle\}_{\lambda,\vect{d}_0}$ since \refGame*{NIZKoQSGameOne} is exactly the same as the RHS, except that we replaced the actual ZK protocol with its simulator, which is an indistinguishable process by definition of ZK.
\end{proof}

\begin{remark}
  In particular, note that the above protocol provides a NIZKoQS for the examples given at the beginning of the section. The quantum language $\lang_\cQ^{\BBHeightyFor{}}$ can be certified by remarking that a \BBHeightyFor{} state is a hidden \GHZ{} state of size one, where $\vect{d}_0 \in \{0,1\}$ represents the basis, and therefore we just need to take $\Auth(\vect{d}_0,\_) = 1$. The language $\lang_\cQ^{ex}$ can be certified using $\Auth(\vect{d}_0,\_) = (\vect{d}_0[1] = 1 \land w_H(\vect{d}_0) = 2)$ and finally $\lang_\cQ^{p_k}$ can be obtained using $\Auth(\vect{d}_0,s_k) = (\Ver_{p_k}(s_k) = 1 \implies \vect{d}_0[1]=1)$.
\end{remark}

\section{The different protocols}\label{sec:protocols}
In this section, we define the protocols \blind{} (\cref{protocol:blind}), \blindSup{} (\cref{protocol:blindSup}), \arxivOnly{the impossible \blindCan{} (\cref{subsec:blindCan}), }\blindCanSup{} (\cref{protocol:blindCanSup}) and finally our main protocol \authBlindCanDist{} (\cref{protocol:authBlindCanDist}). \publishedOnly{We also prove in \cref{subsec:blindCan} the impossibility of \blindCan{}.}

\subsection{The protocol \blind{}}\label{subsec:blind}
%% Proof-at-the-end: Useful to move all proofs in this section in the same appendix
\pgfkeys{/prAtEnd/local custom defaults/.style={category=proofsBlind}}

We define now the protocols \blind{} (\cref{protocol:blind}), which is the basic building block of all the other protocols. Note that this protocol can already easily be extended to exhibit NIZKoQS as shown in \cref{sec:NIZKoQS}. This protocol allows a trusted classical party, Cupid, to create a hidden \GHZ{} state on a server. This hidden \GHZ{} state is then distributed to all applicants (one qubit per applicant), who just need to store the received qubit.

\begin{protocol}[htbp]
  \caption{\blind{}}\label{protocol:blind}
  \textbf{Inputs}: Cupid gets as input $\vect{d}_0 \in \{0,1\}^n$, a bit string describing the final supported applicants: applicant $a_i$ will be supported iff $\vect{d}_0[i] = 1$. $\lambda \in \N$ is a public, fixed, security parameter.\\
  \textbf{Assumptions}: There exists a \AssumpFct{} family of functions with $\delta = \negl[\lambda]$.\\
  \textbf{Protocol}:%
  \begin{compressedList}
    \item Cupid: Generate $(k, t_k) \leftarrow \Gen(1^\lambda, \vect{d}_0)$, and send $k$ to the server.
    \item Server: Perform the quantum operations described in \blindZK{} (circuit in \publishedOnly{\cref{cref:appendixBlind}, }\cref{fig:circuit_bob}). Then, send $(y,b)$ to Cupid, and for all $i$, send the $i$-th qubit of the (unmeasured) second register to applicant $a_i$.
    \item All applicants: just receive and store the qubit sent by the server.
  \end{compressedList}
\end{protocol}

% We prove the correctness of this protocol and of the \blindSup{} protocol (whose state is identical):

\begin{lemmaE}[Correctness of \blind{} and \blindSup{}][end if published]\label{lem:correctnessBlind}
  At the end of an honest run of protocol \blind{}, when $y$ has exactly two distinct preimages $x,x'$ (which occurs with probability $1-\delta$ according to \cref{def:fct_requirements}, which is overwhelming when we use the construction defined in \cref{sec:construction_f_superpoly}), the state shared between all applicants is a hidden generalized $\HGHZ_{\alpha,\vect{d},\vect{d}'}$ state, with:
  \begin{align}
    \vect{d} = h(x) \qquad \vect{d}'= h(x') \qquad \alpha = \bigoplus_i b_i (x_i \xor x'_i) = \langle b, x \xor x'\rangle
  \end{align}
  In particular, since by definition of $f_k$ we have $\vect{d}_0 = h(x) \xor h(x') = \vect{d} \xor \vect{d}'$, the support of the hidden \GHZ{} is $\vect{d}_0$.
\end{lemmaE}
The proof of this lemma is a direct consequence of the completeness of the protocol \blindZK{} proven in the first part of the proof of \cref{thm:NIZKoQS}.
% \begin{proofE}
%   After the measurement of the third register, Bob gets on the first two registers the state $\sum_{x \in f^{-1}(y)} \ket{x}\ket{h(x)}$. But because the first register was created as a uniform superposition, $y$ has exactly two preimages $x,x'$ with probability $1-\delta$. So with probability $1-\delta$, the state obtained by an honest Bob is $(\ket{x}\ket{h(x)}+\ket{x'}\ket{h(x')})$. Then, after applying the Hadamard gates (preparing the measurement in the Hadamard basis), we obtain the state $\sum_b (-1)^{\langle b, x\rangle}\ket{b}\ket{h(x)} + (-1)^{\langle b, x'\rangle}\ket{b}\ket{h(x')}$. So after measuring a $b$, the second register contains the state:
%   \begin{align}
%     (-1)^{\langle b, x\rangle}\ket{h(x)}+(-1)^{\langle b, x'\rangle}\ket{h(x')} &= (-1)^{\langle b, x\rangle}(\ket{h(x)}+(-1)^{\langle b, x \xor x'\rangle}\ket{h(x')})
%   \end{align}
%   which is equal to the state $\HGHZ{}_{\alpha,\vect{d},\vect{d}'}$, once we get rid of the global phases and define $\vect{d} \eqdef h(x)$, $\vect{d}' \eqdef h(x')$ and $\alpha = \langle b, x \xor x' \rangle$.
% \end{proofE}

%\noindent\begin{minipage}[t]{.60\linewidth}
\begin{lemmaE}[Security of \blind{}][end if published]
  At the end of a fully malicious interaction during the protocol \blind{}, where all applicants and the server can be fully malicious and can all collude together, the set of supported applicants is completely hidden. More precisely, if we define a game following the spirit of \indcpa{} security, no \QPT{} adversary $\cA = (\cA_1,\cA_2)$ can win the game $\INDBLIND{}$ with probability better than $\frac{1}{2} + \negl[\lambda]$.
\end{lemmaE}
%\end{minipage}%
%\hspace{.05\linewidth}%
%\begin{minipage}[t]{.35\linewidth}

\begin{namedGame}[INDBLIND][\INDBLINDtxt{}]{\INDBLINDtxt{}_{\Gen}^\adv (\lambda)}
  (\vect{d}_0^{(0)},\vect{d}_0^{(1)}) \gets \cA_1(1^\lambda)\\
  c \sample \bin\\
  (k, t_k) \gets \Gen(1^\lambda, \vect{\vect{d}}_0^{(c)})\\
  (y,b,\tilde{c}) \gets \cA_2(k)\\
  \pclinecomment{\text{No more interaction}}\\
  \pcreturn \tilde{c} = c  
\end{namedGame}
%\end{minipage}
\begin{proofE}
  This is a trivial reduction to the indistinguishability property of the family $\{f_k\}$: since $y$ and $b$ are not used, we can remove them without changing the probability of winning, and we get exactly the game \INDFCT{} introduced \cref{game:indfct}. This game is impossible to win for probability better than $\frac{1}{2} + \negl[\lambda]$ by assumption on the family $\{f_k\}$, which ends the proof.
\end{proofE}

\subsection{The protocol \blindSup{}}\label{subsec:blindSup}
%% Proof-at-the-end: Useful to move all proofs in this section in the same appendix
\pgfkeys{/prAtEnd/local custom defaults/.style={category=proofsBlindSup}}

We describe now the protocol \blindSup{} (\cref{protocol:blindSup}). In this protocol, all the applicants will obtain a qubit part of a hidden \GHZ{} state, and they will learn their own support status. However, they will not know the ``key'' of the hidden \GHZ{} state.

\begin{protocol}[htbp]
  \caption{\blindSup{}}\label{protocol:blindSup}
  \textbf{Inputs}: Same as \blind{}: Cupid gets $\vect{d}_0$ and $\lambda$.\\
  \textbf{Assumptions}: Same as \blind{} (\AssumpFct{} family with $\delta = \negl[\lambda]$).\\
  \textbf{Protocol}:%
  \begin{compressedList}
    \item Run the protocol \blind{}, so that Cupid gets $(b,y)$ and each applicant $a_i$ the $i$-th qubit
    \item Cupid: For all $i$, send $\vect{d}_0[i]$ to applicant $a_i$, so that each applicant knows whether they are supported or not.
  \end{compressedList}
\end{protocol}

Now, in order to prove the security of the \blindSup{} protocol, we first need to define what we mean by security. Since in this protocol Cupid reveal to all applicants their respective support status, we can't use the previous definition of security.

\begin{minipage}[t]{.5\linewidth}
  \begin{lemmaE}[Security of \blindSup{}][end if published]\label{lem:securityBlindSup}
    If we allow in the protocol \blindSup{} the fully malicious server Bob to corrupt an arbitrary subset of applicants, then the support status of the remaining honest applicants is completely hidden. More precisely, no \QPT{} adversary $\cA = (\cA_1,\cA_2,\cA_3)$ can win the game $\INDBLINDSUP$ with probability better than $\frac{1}{2} + \negl[\lambda]$. In the following, $\cM$ is the set of malicious applicants corrupted by Bob, and the condition $\forall i \in \cM, \vect{d}_0^{(0)}[i] = \vect{d}_0^{(1)}[i]$ is required to avoid a trivial uninteresting distinguishing strategy.
  \end{lemmaE}
\end{minipage}%
\begin{minipage}[t]{.5\linewidth}
  % \begin{figure}[htb]
  %   \centering
  \begin{namedGame}*[INDBLINDSUP][\INDBLINDSUPtxt{}]{\INDBLINDSUPtxt{}_{\{f_k\}}^\adv (\lambda)}
    (\cM, \vect{d}_0^{(0)},\vect{d}_0^{(1)}) \gets \cA_1(1^\lambda)\\
    \pcif \exists i \in \cM, \vect{d}_0^{(0)}[i] \neq \vect{d}_0^{(1)}[i]:\\
    \t \pcreturn \pcfalse\ \pcfi\\
    c \sample \bin\\
    (k, t_k) \gets \Gen(1^\lambda, \vect{d}_0^{(c)})\\
    (y,b) \gets \cA_2(k)\\
    \pclinecomment{\text{The adversary has only access}}\\
    \pclinecomment{\text{to the messages sent by Cupid}}\\
    \pclinecomment{\text{to corrupted applicants:}}\\
    \tilde{c} \gets \cA_3(\{(i, \vect{d}_0^{(c)}[i])\}_{i \in \cM})\\
    \pcreturn \tilde{c} = c    
  \end{namedGame}
  % \caption{Game $\INDBLINDSUP$ required in \cref{lem:securityBlindSup}}
  % \label{fig:game_indblindsup}
  % \end{figure}
\end{minipage}

\begin{proofE}
  To prove the security of this scheme, we will assume by contradiction that there exists an adversary $\cA = (\cA_1, \cA_2, \cA_3)$ that can win the game $\INDBLINDSUP{}$ with probability $p_{\cA} \eqdef \frac{1}{2} + \frac{1}{\poly[\lambda]}$, and we will construct an adversary $\cA' = (\cA'_1, \cA'_2)$ that can win the game \INDBLIND{} with a non negligible advantage (which is impossible by assumption). So we define $\cA'_1(\lambda)$ as follows: $\cA'_1$ runs in a blackbox way $(\cM, \vect{d}_0^{(0)},\vect{d}_0^{(1)}) \gets \cA_1$, returns $(\vect{d}_0^{(0)},\vect{d}_0^{(1)})$ and keeps $(\cM, \vect{d}_0^{(0)})$ in its internal state. We then define:
  \begin{align}
    \cA'_2(k, \state_1 \eqdef (\cM, \vect{d}_0^{(0)})) \eqdef (y,b) \leftarrow \cA_2(k); \tilde{c} \gets \cA_3(\{(i, \vect{d}_0^{(0)}[i])\}_{i \in \cM}); \pcreturn \tilde{c}
  \end{align}
  It is then easy to see that $\cA'$ wins the game $\INDFCT$ with probability greater than $p_{\cA}$. Indeed, when $\cA_1$ outputs a $(\cM, \vect{d}_0^{(0)},\vect{d}_0^{(1)})$ that does not respect the condition $\forall i \in \cM, \vect{d}_0^{(0)}[i] = \vect{d}_0^{(1)}[i]$, then $\cA$ always lose (while $\cA'$ may win the game). Moreover, when the condition is respected, since $\{(i, \vect{d}_0^{(0)}[i])\}_{i \in \cM} = \{(i, \vect{d}_0^{(1)}[i])\}_{i \in \cM} = \{(i, \vect{d}_0^{(c)}[i])\}_{i \in \cM}$, we can replace the input of $\cA'_3$ with $\{(i, \vect{d}_0^{(c)}[i])\}_{i \in \cM}$: the game is now exactly equivalent to \INDBLINDSUP{}, so in that case $\cA'$ win with the exact same probability as $\cA$. Therefore, $\cA'$ wins the game \INDFCT{} with probability greater than $p_\cA = \frac{1}{2} + \frac{1}{\poly[n]}$: contradiction.
\end{proofE}

 % Warning: If you make this line appear in published version, change the subsec:blindCan label in appendix.
\arxivOnly{\subsection{The impossible protocol \blindCan{}}\label{subsec:blindCan}}
%% Proof-at-the-end: Useful to move all proofs in this section in the same appendix
\pgfkeys{/prAtEnd/local custom defaults/.style={category=proofsBlindCan}}

\inAppendixIfPublished{
  One may be interested by a protocol \blindCan{}, that would make sure that all supported applicants share a canonical \GHZ{}, but that at the same time none of them know if they are part of the \GHZ{} or not. We state here that such security guarantee is impossible, and why it is therefore meaningful to use \blindCanSup{} protocols instead.
  \begin{lemmaE}[Impossibility of a secure \blindCan{} protocol][]\label{lem:impossibilitySecureBlindCan}
    There exists no protocol \blindCan{} such that, at the end of an honest interaction, all supported applicants share a canonical \GHZ{}, and such that none of them know their own support status. More precisely, there exists always an adversary $\cA$ that can win the game \refGame*{IMPOSSIBLEGAMEOne}.
    \begin{namedGame}[IMPOSSIBLEGAMEOne]{\texttt{ImpossibleGame1}}
      (\cM, \vect{d}_0^{(0)},\vect{d}_0^{(1)}) \gets \cA_1(1^\lambda)\pcskipln\\
      \pclinecomment{\text{Avoid trivial attack: check if at least one honest applicant is in the \GHZ{}}}\\
      \pcif (\forall i \notin \cM, \vect{d}_0^{(0)}[i] = 0) \text{ or } (\forall i \notin \cM, \vect{d}_0^{(1)}[i] = 0) \pcthen \pcreturn \pcfalse \ \pcfi\\
      \pclinecomment{\text{Run \blindCan{} with adversary.}}\\
      \tilde{c} \gets \cA_3()\\
      \pcreturn \tilde{c} = c      
    \end{namedGame}
  \end{lemmaE}
  \begin{proofE}
    So for simplicity, we do here a sketch of the proof, in order to give the general ideas. So when it comes to proving the security of the protocol, we realize that at least one of the supported applicants needs to be honest, otherwise it is trivial to distinguish any correct protocol: The attacker can always send $\vect{d}_0^{(0)} = \begin{pmatrix}1 & 1 & 0 \dots & 0 \end{pmatrix}$ and $\vect{d}_0^{(1)} = \begin{pmatrix}0 & 0 & 0 \dots & 0 \end{pmatrix}$, and at the end of any (correct) protocol run honestly, the attacker will get either a Bell pair on the first two qubits or two qubits not entangled. It is therefore easy to distinguish, so the condition in Game 1 is indeed required. But it is not enough: even if we assume that one applicant is honest (let's say the first one), it is still impossible to prove the security of the protocol.

    Indeed, let's consider an adversary that sends $\vect{d}_0^{(0)} = \begin{pmatrix}1 & \dots 1 \end{pmatrix}$ and $\vect{d}_0^{(0)} = \begin{pmatrix}1 & \dots & 1 & 0 \dots & 0 \end{pmatrix}$ (first half of the qubits is $1$ second half is $0$). Then, a first remark is that at the end of an honest protocol, all the qubits that are not entangled must be all equal, i.e.\ if $c=1$, the state obtained is $(\ket{0\dots0} + \ket{1\dots1}) \otimes \ket{0\dots 0}$ or $(\ket{0\dots0} + \ket{1\dots1}) \otimes \ket{1\dots 1}$. Indeed, if some qubits in the second half are different, then a measurement in the computational basis will reveal some different outcomes with high probability (while when $c=0$ all measurements are equal since the state is a canonical \GHZ{} state by the correctness property). But even in that case, it is still easy to distinguish: when we do the measurement, in the first case, we either get $1\dots 1$ or $0 \dots 0$. In the second case, however, the first part may be different compared to the second part, i.e.\ we can measure either a $0\dots 0$ or a $1\dots 10\dots 0$ with probability $\frac{1}{2}$. This last measurement is enough to distinguish, we can just ask to $\cA$ to measure the state in the computational basis: if all measurements are equal, $\cA$ picks $\tilde{c}$ uniformly at random, otherwise $\cA$ outputs $\tilde{c} = 1$. $\cA$ will succeed with non negligible advantage.
  \end{proofE}

  Therefore it is not possible to hide to an adversary its support status, so the best we can get is to prove that no adversary can learn the support status of the honest applicants, which is the goal of the protocol $\blindCanSup{}$.
}

\subsection{The protocol \blindCanSup{}}\label{subsec:blindCanSup}
%% Proof-at-the-end: Useful to move all proofs in this section in the same appendix
\pgfkeys{/prAtEnd/local custom defaults/.style={category=proofsBlindCanSup}}

We present now the protocol \blindCanSup{} (\cref{protocol:blindCanSup}): at the end of the protocol, the supported applicants share a canonical \GHZ{} state, and each applicant knows their own support status.

\begin{protocol}[htbp]
  \caption{\blindCanSup{}}\label{protocol:blindCanSup}
  \textbf{Inputs}: Same as \blind{}: Cupid gets $\vect{d}_0$ and $\lambda$.\\
  \textbf{Assumptions}: There exists a \AssumpFctCan{} family of functions with $\delta = \negl[\lambda]$.\\
  \textbf{Protocol}:%
  \begin{compressedList}
    \item Run the protocol \blind{}, so that Cupid gets $(b,y)$ and each applicant $a_i$ the $i$-th qubit
    \item Cupid: Compute $v \gets \partialInfo(t_k,y)$, and if $f_k^{-1}(y) = \{x,x'\}$ with $x \neq x'$, compute $\alpha \eqdef \langle b, x \xor x'\rangle$ (otherwise, sample $\alpha$ randomly). Computes the supported set $\cS = \{i \mid \vect{d}_0[i] = 1\}$. Sample uniformly at random $\hat{\alpha} \leftarrow \{0,1\}^n$ such that $\alpha = \xor_{i \in \cS} \hat{\alpha}_i$. For all $i$, send $(\hat{\alpha}_i,v[i])$ to applicant $a_i$.
    \item All applicants: When receiving the message $(\hat{\alpha}_i,v[i])$:
    \begin{itemize}
      \item If $v[i] = \cross$, then it means that the applicant is not part of the support of the final \GHZ{}. The end.
      \item If $v[i] = \bot$, it is a local abort. It's likely that the server was malicious. Do not reveal this information to the server. The end.
      \item If $v[i] \in \{0,1\}$, it means that this applicant is part of the final \GHZ{} state. Apply $Z^{\hat{\alpha}_i}X^{v[i]}$ on the qubit sent by the server.
    \end{itemize}
  \end{compressedList}
\end{protocol}

\begin{lemmaE}[Correctness of \blindCanSup{}][end if published]
  If all parties are honestly running the \blindCanSup{} protocol, then at the end of the protocol, with probability $1-\delta$ (so with overwhelming probability when $\delta$ is negligible), all supported applicants share a canonical \GHZ{}, and all applicants know whether or not they are supported.
\end{lemmaE}
\begin{proofE}
  With probability $1-\delta$, the $y$ obtained by Cupid has exactly two preimages. In that case, due to the correctness property of \partialInfo{} given \cref{def:GHZcanCapable} (part 1) we get $\forall i$, $v[i] \neq \bot$, so $v[i] \in \{0,1,\cross\}$. Then, using \cref{lem:correctnessBlind}, we know that the state shared by all participants after the \blind{} part is $\ket{h(x)}+{(-1)}^\alpha\ket{h(x')}$, with $h(x) \xor h(x') = \vect{d}_0$. We can combine this using part 2 of the correctness property given in \cref{def:GHZcanCapable}, (that states that $v[i] \in \{0,1\}$ iff $\vect{d}_0[i] = 1$): because the set of supported participants is $\cS \eqdef \{i \mid \vect{d}_0[i] = 1\}$, we have for all $i \notin \cS$: $h(x)[i] = h(x')[i]$. Thus the register of each applicant $i \notin \cS$ is in a tensor product with all the other qubits, so we can factor them out, and consider only the state shared by applicants $i \in \cS$ (in that case $h(x')[i] = 1 \xor h(x)[i]$):
  \begin{align}
    \bigotimes_{i \in \cS} \ket{h(x)[i]} + (-1)^\alpha\bigotimes_{i \in \cS} \ket{h(x')[i]}
    &= \bigotimes_{i \in \cS} \ket{h(x)[i]} + (-1)^\alpha\bigotimes_{i \in \cS} \ket{1 \xor h(x)[i]}
  \end{align}
  After the corrections, the state becomes:
  \begin{align}
    \bigotimes_{i \in \cS} X^{v[i]}\ket{h(x)[i]} + (-1)^{\alpha \xor \bigoplus_{i \in \cS} \hat{\alpha}_i}\bigotimes_{i \in \cS} X^{v[i]}\ket{1 \xor h(x)[i]}
  \end{align}
  And due to the fact that $\alpha = \bigoplus_{i \in \cS} \hat{\alpha}_i$, we can get rid of the phase. Moreover, we can now use the part 3 of \cref{def:GHZcanCapable} which states that there exists $u \in \{h(x),h(x')\}$ such that if $\vect{d}_0[i] = 1$, then $v[i] = u[i]$. So if $\forall i \in \cS$, we have $u[i] = v[i] = h(x)[i]$, then after the correction we get the state $\ket{0 \dots 0}+\ket{1 \dots 1}$, which is a canonical \GHZ{}, and if $u[i] = v[i] = h(x')[i] = 1-h(x)[i]$, then we get $\ket{1 \dots 1} + \ket{0 \dots 0}$, which is the same canonical \GHZ{} state.
\end{proofE}
Similarly, we define now the security of \blindCanSup{}:

\begin{lemmaE}[Security of \blindCanSup{}][end if published]\label{lem:securityBlindCanSup}
  If we allow in the protocol \blindCanSup{} the fully malicious server to corrupt a subset of applicants in such a way that either at least one supported applicant is not corrupted or no supported applicant is corrupted\footnote{Otherwise there is a trivial, fundamental, attack to any protocol which consists in setting $\vect{d}_0^{(0)} = (01)$, $\vect{d}_0^{(1)}=(11)$, $\cM=\{2\}$ and then testing if the quantum state obtained by the party $2$ is a $\ket{+}$ or not. \arxivOnly{However, this attack is not possible anymore if the adversary is not in possession of one part of the \GHZ{} (for example if we replace $\vect{d}_0^{(0)} = (00)$ and $\vect{d}_0^{(1)}=(10)$ in the above example), that is the reason why we can provide a stronger security guarantee when no supported applicant is supported.}}, then the support status of honest applicants is completely hidden. More precisely, no \QPT{} adversary $\cA = (\cA_1,\cA_2,\cA_3)$ can win the game $\INDBLINDCANSUP{}$ (\cref{fig:gameIndBlindCanSup}) with probability better than $\frac{1}{2} + \negl[\lambda]$. In the following, $\cM$ is the set of malicious applicants corrupted by Bob, and the condition $\forall i \in \cM, \vect{d}_0^{(0)}[i] = \vect{d}_0^{(1)}[i]$ is required to avoid a trivial uninteresting distinguishing strategy.
  \begin{figure}[htb]
    \centering
    \begin {pcimage}
      {\normalfont\game[linenumbering]{$\INDBLINDCANSUP{}_{\{f_k\}} ^\adv (\lambda) $}{%
          %%% !!!WARNING!!! If you change this game, make sure to update the line numbering.
          %%%    I should update the code with references instead...
          (\cM, \vect{d}_0^{(0)},\vect{d}_0^{(1)}) \gets \cA_1(1^\lambda)\\
          \pcif \exists i \in \cM, \vect{d}_0^{(0)}[i] \neq \vect{d}_0^{(1)}[i]\pcthen \pcreturn \pcfalse\ \pcfi\\
          \pcif (\exists i \in \cM, \vect{d}_0^{(0)} = 1) \text{ and } ((\forall i \notin \cM, \vect{d}_0^{(0)}[i] = 0) \text{ or } (\forall i \notin \cM, \vect{d}_0^{(1)}[i] = 0))\\
          \pcthen \pcreturn \pcfalse\ \pcfi\\
          c \sample \bin\\
          (k, t_k) \gets \Gen(1^\lambda, \vect{d}_0^{(c)})\\
          (y,b) \gets \cA_2(k)\\
          v \gets \partialInfo(t_k,y)\\
          \pcif \bot \notin v \pcthen \alpha \eqdef \langle b, x \xor x' \rangle\ \pcelse \alpha \sample \bin \ \pcfi\\
          \hat{\alpha} \gets \{ \hat{\alpha} \mid \hat{\alpha} \in \{0,1\}^n, \bigoplus_{i \in \cS} \hat{\alpha}_i = \alpha \text{ or } \cS = \emptyset\}\\
          \pclinecomment{\text{The adversary has only access to the messages sent by Cupid to corrupted applicants:}}\\
          \tilde{c} \gets \cA_3(\{(i, \hat{\alpha}_i, v[i])\}_{i \in \cM})\\
          \pcreturn \tilde{c} = c
        }}
    \end{pcimage}
    \caption{\label{fig:gameIndBlindCanSup}Game $\INDBLINDCANSUP$ required in \cref{lem:securityBlindCanSup}}
  \end{figure}
\end{lemmaE}
\begin{proofE}
  The first step in the proof is to note that, due to the condition line 3, we have either:
  \begin{itemize}
    \item $\forall i \in \cM$, $\vect{d}_0^{(0)} = 0$, i.e.\ $\cM \cap \cS = \emptyset$. But since we send $\hat{\alpha}_i$ to the adversary only if $i \in \cM$, and since the line 10 does not put any restriction on the sampling $\hat{\alpha}_i$ when $i \notin \cS$, the line 9 and 10 can be replaced with a single line $\hat{\alpha} \gets \{0,1\}^n$.
    \item or there exists $j \notin \cM$ such that $\vect{d}_0^{(c)}[j] = 1$, i.e.\ such that $j \in \cS$. Therefore, $\hat{\alpha}$ can be sampled by choosing for all $i \neq j$, $\hat{\alpha}_i$ randomly, and finally by setting $\hat{\alpha}_j = \alpha \xor\bigoplus_{i \in \cS \setminus \{j\}} \hat{\alpha}_j$ (this is statistically indistinguishable). But since $\hat{\alpha}_j$ is never sent to $\cA$ because $j \notin \cM$, we can also remove lines 9 and 10 and replace them with $\hat{\alpha} \gets \{0,1\}^n$.
  \end{itemize}
  This gives us a new game \texttt{GAME1}:
  \begin {pcimage}
    {\normalfont\game[skipfirstln,lnstart=8,linenumbering]{ $\texttt{GAME1}^\cA$ }{
        \pclinecomment{\dots First 7 lines like \INDBLINDCANSUP}\\
        \hat{\alpha} \gets \{0,1\}^n\\
        \tilde{c} \gets \cA_3(\{(i, \hat{\alpha}_i, v[i])\}_{i \in \cM})\\
        \pcreturn \tilde{c} = c
      }}
  \end{pcimage}
  Since the two games are exactly equivalent, we have $\pr{\INDBLINDCANSUP^\cA} = \pr{\texttt{GAME1}^\cA}$. Then, define a new game \texttt{GAME2} by removing the condition line 3 and 4:
  \begin {pcimage}
    {\normalfont\game[skipfirstln,lnstart=2,linenumbering]{ $\texttt{GAME2}^\cA$ }{
        \pclinecomment{\dots Remove line 3 and 4 of \texttt{GAME1}}\\
        \hcancel[red]{\pcif (\exists i \in \cM, \vect{d}_0^{(0)} = 1) \text{ and } ((\forall i \notin \cM, \vect{d}_0^{(0)}[i] = 0) \text{ or } (\forall i \notin \cM, \vect{d}_0^{(1)}[i] = 0))}\\
        \hcancel[red]{\pcthen \pcreturn \pcfalse\ \pcfi}\\
        \pclinecomment{\dots Rest is like \texttt{GAME1}}
      }}
  \end{pcimage}
  We can remark that this condition cannot help the adversary to win since entering inside this condition always returns ``false'', therefore $\pr{\texttt{GAME1}^\cA} \leq \pr{\texttt{GAME2}^\cA}$. But now, we remark that \texttt{GAME2} is very similar to the game \INDPARTIAL{} defined \cref{def:GHZcanCapable}, except that we provide an additional random string $\hat{\alpha}$ to $\cA$. But since this string is random, it is easy to see that we can turn any adversary $\cA$ winning \texttt{GAME2} with probability $p$ into an adversary $\cA'$ winning the game \INDPARTIAL{} with the same probability $p$ by defining $\cA'(\{(i,v[i])\}_{i \in \cM})$ as an adversary sampling a uniformly random bit string $\hat{\alpha}$ and calling $\cA(\{(i,\hat{\alpha}_i,v[i])\}_{i \in \cM})$ (and reciprocally, any adversary that can win \INDPARTIAL{} without access to $\hat{\alpha}$ can win  \texttt{GAME2} with access to $\hat{\alpha}$ by simply forgetting this value). So we get:
  \begin{align}
    \max_{\text{QPT }\cA}\pr{ \texttt{GAME2}^\cA } = \max_{\text{QPT }\cA}\pr{ \INDPARTIAL{}^\cA }
  \end{align}
  But by assumption, for any QPT $\cA$, $\pr{ \INDPARTIAL{}^\cA } \leq \frac{1}{2} + \negl[\lambda]$. So by combining all the inequations we showed on games, we also get:
  \begin{align}
    \pr{\INDBLINDCANSUP{}_{\{f_k\}} ^\adv (\lambda)} \leq \frac{1}{2} + \negl[\lambda]
  \end{align}
  which ends the proof.
\end{proofE}

\subsection{The protocol \authBlindCanDist{}}\label{subsec:authBlindCanDist}
%% Proof-at-the-end: Useful to move all proofs in this section in the same appendix
\pgfkeys{/prAtEnd/local custom defaults/.style={category=proofsAuthBlindCanDist}}

We can now define our main protocol \authBlindCanDist{}. Similarly to the \blindCanSup{} protocol, each supported applicant is supposed to end up with a canonical \GHZ{} state, and the support status of each applicant should be unknown to the other applicants and to the server. However, in this protocol the trusted party Cupid is not needed anymore: each applicant is supposed to choose themselves their own support status, and they will be assured that no malicious party (including the server) can learn it.

Moreover, the server can have some guarantees on the support status of the applicants: for example, the server can ensure that \emph{if} some applicants are supported, then they all know a classical secret (but the server has no way to know whether or not a given applicant is supported). This secret can be any witness of a \npol{} relation: it could be a password, a private key linked with some known public key, a signature from a third party Certification Authority, the proof of any famous theorem\dots{} We formalize it by defining $n$ deterministic functions $\Auth_i\colon \{0,1\} \times \{0,1\}^* \rightarrow \{0,1\}$ responsible of the ``authorization'' of the applicants: the server will allow applicant $i$ to be part of the protocol iff they can prove in a NIZK way that they know $w_c$ such that $\Auth_i(\vect{d}_0[i],w_c)=1$. For instance, we can use the $\Auth_i$ function to ensure that an applicant is part of the \GHZ{} iff they know a password whose hash by $h$ is $x$ by defining $\Auth_i(\vect{d}_0[i],\tilde{s}_i)) \eqdef (\vect{d}_0[i] = 0 \lor h(\tilde{s}_i) = x)$. Again, we emphasize that $\Auth_i$ does \emph{not} reveal the value of $\vect{d}_0[i]$: it just reveals that \emph{if} the user is supported, then they know the password. This verification only requires a single message from the client(s) and is therefore achieving NIZKoQS, as formalized in \cref{sec:NIZKoQS}.

% We want to emphasize that such unique functionality of \emph{Non-Destructive Zero-Knowledge Proofs on Quantum States} is \emph{made possible due to the fact that we can bind quantum data to classical data}: that way, the server can have guarantees about a quantum state using only a non-destructive classical Zero-Knowledge protocol. This contrasts highly with the usual destructive way of testing a property on a quantum state.
\begin{figure}[htbp]
  \centering
  \begin{minipage}{1.0\linewidth}
    \begin{protocol}[H]
      \caption{\authBlindCanDist{}}\label{protocol:authBlindCanDist}
      \noindent\textbf{Inputs}: Each applicant $i$ gets $\lambda$. They also get $\vect{d}_0[i] \in \{0,1\}$ and $w_i \in \{0,1\}^*$ such that $\Auth_i(\vect{d}_0[i],w_i) = 1$ ($\vect{d}_0[i] = 1$ iff applicant $i$ wants to be supported). The authentication functions $\{\Auth_i\}_{i \in [n]}$ are also public.\\
      \textbf{Assumptions}: There exists a \AssumpFctCan{} distributable family of functions with $\delta = \negl[\lambda]$. Cupid is not required anymore, and instead we require the existence of a classical (but quantum-secure) Multi-Party Computation protocol.\\
      \textbf{Protocol}:%
      \begin{compressedList}
        \item Each applicant $a_i$: Run $ (k^{(i)},t_k^{(i)}) \leftarrow \GenLocal(1^\lambda,\vect{d}_0[i])$, send $k^{(i)}$ to the server, and continue the protocol.
        \item Server: Run (as a verifier) a Zero-Knowledge protocol with each applicant (the prover) to check that $k^{(i)}$ is well prepared, and that the applicant can authenticate the quantum state. More precisely, each applicant $i$ proves to the server that they know $(\vect{d}_0[i], t_k^{(i)},w^{(i)})$ such that $\CheckHonestTrapdoor_\lambda(\vect{d}_0[i], t_k^{(i)}, k^{(i)}) \land \Auth_i(\vect{d}_0[i],w^{(i)}) = 1$. If the protocol fails with at least one applicant, abort after sending $\bot$ to all applicants (the server can also output if needed the identity of the applicant who were malicious in case other actions should be performed with respect to them, e.g.\ in further runs). Otherwise, the protocol continues.
        \item Server: Compute $k \eqdef (k^{(1)},\dots,k^{(n)})$, run the quantum circuit already described in protocol \blind{}, and for all $i$, send the $i$-th qubit of the second register to applicant $a_i$ together with $(y,b)$.
        \item For each applicant $i$: Compute $v[i] \gets \partialInfoLocal(t_k^{(i)},y)$, and compute via a MPC protocol the function $\CombineAlpha$ which returns a secret share of $\alpha$ between supported applicants. More precisely, it returns to each applicant $i$ a random bit $\hat{\alpha}_i$ such that $\xor_{i \in \cS} \hat{\alpha}_i = \alpha = \bigoplus_i \partialAlpha(i, t_k^{(i)},y,b)$, where $\cS$ is the set of supported applicants (details in \cref{fig:functionComputeMPC}). The reason we use a MPC protocol is that $\partialAlpha(i, t_k^{(i)},y,b)$ can leak\footnote{The attack would be as follows: the malicious server Bob can run the QFactory \cite{CCKW_2019_qfactory} protocol with $k^{(i)}$ which gives him a BB84 state in the basis $\vect{d}_0[i]$: if $\vect{d}_0[i] = 0$ then it gets either $\ket{0}$ or $\ket{1}$, but if $\vect{d}_0[i] = 1$ then it gets the state $\ket{+}$ if $\alpha_p^{(i)}{\alpha}_i = 0$ and a $\ket{-}$ otherwise. So the trick is to measure the state in the Hadamard basis: if the measurement is different from $\alpha_p^{(i)}$ then we know that $\vect{d}_0[i] = 0$, and otherwise the server will randomly guess the value of $\vect{d}_0[i]$. It is easy to see that if $\vect{d}_0[i]$ is chosen uniformly at random, then the server has a non-negligible advantage in guessing $\vect{d}_0[i]$.} information about the bit $\vect{d}_0[i]$, so this bit should not be revealed directly.
        \item For each applicant $i$: If the outcome $\hat{\alpha}_i$ of the MPC is $\bot$, abort. Otherwise, similarly to the last step of the \blindCanSup protocol: if $v[i] \in \{0,1\}$, apply the correction $X^{v[i]}Z^{(\hat{\alpha}_i)}$ on the qubit, else discard the qubit.
      \end{compressedList}
    \end{protocol}
  \end{minipage}
\end{figure}

We will now prove the correctness and security of the \authBlindCanDist{} protocol. Note that an honest server can obtain guarantees on the distributed state even in the presence of malicious or noisy applicants. Assuming here an honest server is not absurd, notably when the server wants to use this \GHZ{}, for example to share a quantum state or if a verification is done afterwards. This centralization is also useful in the presence of many noisy clients (a single hardware needs to be noiseless, while in a decentralized MPC computation the protocol is likely to always abort if a single client is noisy). We provide here an informal version of the correctness property, more details can be found in \cref{lem:correctnessCanDist}.
\begin{lemmaE}[Correctness of \authBlindCanDist{} in the presence of malicious applicants, Informal]
  In the presence of malicious applicants, an honest server is guaranteed that with overwhelming probability, the protocol will either abort, or the applicants will obtain a \GHZ{} state, up to some unavoidable local deviation performed by supported malicious applicants on their own parts of the \GHZ{}.
\end{lemmaE}

\inAppendixIfPublished{
  \begin{figure}[htb]
    \centering
    \begin {pcimage}\label{fct:mpcCombineAlpha}
      {\normalfont\game[linenumbering]{$\CombineAlpha((k,y,b),(t_k^{(1)},\vect{d}_0[1],k^{(1)},y^{(1)}, b^{(1)}),\dots,(t_k^{(n)},\vect{d}_0[n],k^{(1)},y^{(n)},b^{(n)}))$}{%
          \pclinecomment{Check if the input are honestly prepared}\\
          \pcif k \neq (k^{(1)},\dots,k^{(n)}) \text{ or }\\
          \t \exists i, y^{(i)} \neq y \text{ or } b^{(i)} \neq b \text{ or } \neg \CheckHonestTrapdoor_\lambda(\vect{d}_0[i], t_k^{(i)}, k^{(i)})\\
          \pcthen \pcreturn \bot^{n+1} \ \pcfi\\
          \pclinecomment{Compute the correction $\alpha$, the set of supported applicants, sample a first version of $\hat{\alpha}$}\\
          \alpha = \bigoplus_i \partialAlpha(i, t_k^{(i)},y,b);
          \quad \cS = \{i \mid \vect{d}_0[i] = 1\};
          \quad \forall i, \hat{\alpha}_i = \bigoplus_l r^{(l)}[i]\\
          \pcif \cS \neq \emptyset \pcthen \pclinecomment{If at least one person is supported, ensure $\xor_{i \in \cS} \hat{\alpha_i} = \alpha$}\\
          \t j = \max_{i \in \cS} i \pclinecomment{Pick an arbitrary $j \in \cS$ to change}\\
          \t \hat{\alpha_j} = \alpha \xor \smashoperator{\bigoplus_{i \in \cS \setminus \{j\}}} \hat{\alpha_i}\\
          \pcfi \pclinecomment{Return $\hat{\alpha}_i$ to applicant $i$ and $\top$ to the server to indicate no problem occurred.}\\
          \pcreturn (\top,\hat{\alpha}_1,\dots,\hat{\alpha}_n)
        }}
    \end{pcimage}%
    \caption{The function to compute in the \authBlindCanDist{} protocol in a MPC way. The first input is the input of the server, and the other inputs are from the applicants (the $y^{(i)}$ and $b^{(i)}$ are supposed to be equal to $y$ and $b$ and are just used to ensure that the server provided coherent inputs in the MPC, and $r^{(i)} \in \{0,1\}^n$ is a string supposed to be sampled uniformly at random).}
    \label{fig:functionComputeMPC}
  \end{figure}
}

\begin{lemmaE}[Correctness of \authBlindCanDist{} in the presence of malicious applicants][all end if published]\label{lem:correctnessCanDist}
  If the server is honest, and if we allow an attacker to corrupt an arbitrary subset $\cM$ of applicants, then with overwhelming probabilities, the protocol either aborts, or the $k$ received by the server belongs to $\cK$, and for all applicant $i$, there exists $w_i$ such that $\Auth_i(\vect{d}_0[i],w_i) = 1$. If the ZK protocol is also a Proof of Knowledge protocol, then the applicant ``knows'' $w_i$, in the sense that if the adversary can pass the test with non-negligible probability, there exists an extractor that can extract $w_i$ given the applicant's circuit with non-negligible probability (this is a direct application of \cref{def:QPoK})\footnote{In particular, if it is impossible to forge with non-negligible probability a $w_i$ such that $\Auth_i(\vect{d}_0[i],w_i)=1$ (for instance because this $w_i$ is a signature coming from an unforgeable signature scheme), then it means that with overwhelming probability the applicant is indeed in possession of $w_i$.}.

  Moreover, if the protocol did not abort before, at the end of the protocol, with probability $1-\delta-\negl[\lambda]$ (i.e.\ overwhelming if $\delta$ is negligible), the protocol will either abort, or a state will be obtained by applicants. In this later case, if we denote by $\rho_{A,\cM}$ the joint state of the honest applicants (register $A$) and of the adversary (register $\cM$) obtained at the end of the protocol, then $\rho_{A,\cM}$ can be written as a Completely Positive Trace Preserving (CPTP) map\footnote{\label{note:prooflemma57}Note that for simplicity, we just require the existence of this CPTP map, and therefore it can depend on any quantity, including $k$, $\vect{d}_0$\dots Therefore we don't require this map to be efficiently computable since it will not be useful for us. However, a similar ``efficient'' version should be derivable if we make sure our ZK protocol is extractable, i.e.\ that the trapdoor of each $k^{(i)}$ can be extrated by a simulator. This is however out of the scope of this paper.} applied on a \GHZ{} state shared among all parties $i$ such that $\vect{d}_0[i]=1$, in such a way that the CPTP map leaves untouched the qubits of the \GHZ{} state owned by honest applicants $i$. In particular, if all supported parties are honest, they all share a \GHZ{} state.
\end{lemmaE}
\begin{proofE}
 The first action of the server (which is assumed to be honest here) is to run a ZK protocol to check that $\forall i, k^{(i)} \in \cK_{\lambda,\texttt{Loc}}$ and $\Auth_i(\vect{d}_0[i],w_i)=1$. Therefore, we can use the soundness property of the ZK protocol to claim that with overwhelming probability $\forall i, k^{(i)} \in \cK_{\lambda,\texttt{Loc}}$ and there exist $w_i$ such that $\Auth_i(\vect{d}_0[i],w_i)=1$ (since the provers are the applicants, they are bounded so we can rely on both computational or statistical soundness). The fact that $w_i$ is actually ``known'' to the applicant comes directly from the fact that the ZK protocol is a Proof of Knowledge and is extractable. So with overwhelming probability, $k \eqdef (k^{(1)},\dots,k^{(n)})$ belongs to $\cK \eqdef \cK_{\lambda,\texttt{Loc}}^n$. Therefore, since the server is honest, with probability $1-\delta$ it will measure a $y$ such that $|f_k^{-1}(y)|=2$, and due to the properties of the family $f_k$, the two preimages $x$ and $x'$ are such that $h(x) \xor h(x') = \vect{d}_0$. So the state sent by the server is $\ket{h(x)} + (-1)^a\ket{h(x')}$ with $a = \langle b, x \xor x' \rangle$. Then, the MPC protocol will be performed. If the MPC aborts, then we are already in the setting of the theorem. If the MPC does not aborts, then, due to the fact that in quantum mechanics, operations performed by two non-communicating parties commute, without any loss of generality we can assume that the honest applicants will apply the correction before the deviation of the malicious party. Moreover, since the honest corrections are unitary, we can also assume without any loss of generality that the first step of the malicious party is to apply the honest correction on the state received from the server and then deviate (eventually by starting to undo the correction). Note that we do not even ask this correction to be efficiently computable by the adversary (see \cref{note:prooflemma57}) since we just claim that such deviation exists. Due to the definition of $\partialInfo$, after applying the $X$ correction, the parties $i$ for which $\vect{d}_0[i]=1$ share a state $\ket{0 \dots 0} + (-1)^\alpha\ket{1 \dots 1}$. Now, we have two cases:
  \begin{enumerate}
    \item If there exists at least one supported applicant which is malicious, then the proof is done: no matter what are the values of $\hat{\alpha}_i$ which will be used by the honest applicant to correct the state, we can always include in the CPTP map a first step that applies $Z^{\alpha \xor\bigoplus_{i \in \cS, i \notin \cM} \hat{\alpha}_i}$ on the qubit of the malicious applicant to map the step back to a \GHZ{} state. Again, this is possible since we just require the existence of the CPTP map. Then, any CPTP deviation can be applied on the state owned by the malicious adversary, including undoing the previous $Z$ and $X$ corrections.
    \item If there exists no malicious supported applicant, and if the probability of having no abort and no malicious supported applicant is non-negligible\footnote{If on the other hand this quantity is negligible, then this second case occurs with negligible probability so it is absorbed in the $\negl[\lambda]$ of the theorem.}, then with overwhelming probability we must have $\bigoplus_{i \in \cS}\hat{\alpha}_i = \alpha$. Indeed, if it is not the case, then it is possible to distinguish the real world from the ideal world of the MPC computation. Therefore, after the $Z$ correction the honest applicants having $\vect{d}_0[i]=1$ will share a canonical \GHZ{}, which ends the proof.
  \end{enumerate}
\end{proofE}

\begin{lemmaE}[Blindness of \authBlindCanDist{} in the presence of malicious applicants][end if published]\label{lem:blindAuthBlindCanDist}
  If the server corrupts a set of applicants, in such a way that at least one supported applicant is not corrupted, or that no supported applicant is corrupted, then the support status of the honest applicants is hidden in the \authBlindCanDist{} protocol, beyond the fact that server knows whether or not they can pass the authorization step. More formally, no adversary can win the game $\INDBLINDCANAUTH$. %(\cref{fig:game_indblindcanauth}).
  % \begin{figure}[htb]
    % \centering
    {\normalfont \game[linenumbering]{$\INDBLINDCANAUTH{}_{\{f_k\}} ^\adv (\lambda) $}{%
        (\cM, \vect{d}_0^{(0)}, \{(i,w_i^{(0)})\}_{i \in [n] \setminus \cM}),\vect{d}_0^{(1)}, \{(i,w_i^{(1)})\}_{i \in [n] \setminus \cM}) \gets \cA_1(1^\lambda)\\
        \pcif \exists i \in \cM, \vect{d}_0^{(0)}[i] \neq \vect{d}_0^{(1)}[i]\pcthen \pcreturn \pcfalse\ \pcfi\\
        \pcif (\exists i \in \cM, \vect{d}_0^{(0)} = 1) \text{ and } ((\forall i \notin \cM, \vect{d}_0^{(0)}[i] = 0) \text{ or } (\forall i \notin \cM, \vect{d}_0^{(1)}[i] = 0))\\
        \pcthen \pcreturn \pcfalse\ \pcfi\pcskipln\\
        \pclinecomment{Check that the adversary did not gave wrong witnesses $w_i$:}\\
        \pcif \exists i \in [n], \exists c \in \{0,1\}, \Auth_i(\vect{d}_0^{(c)}[i],w_i^{(c)}) \neq 1 \pcthen\pcreturn \pcfalse \ \pcfi\\
        c \sample \bin; \vect{d}_0 \eqdef \vect{d}_0^{(c)}; w_i \eqdef w_i^{(c)}\\
        \text{Run with $\cA_2$ the protocol \authBlindCanDist{}.}\\
        \tilde{c} \gets \cA_3\\
        \pcreturn \tilde{c} = c
      }}
    % \caption{Game $\INDBLINDCANAUTH$ defined in \cref{lem:blindAuthBlindCanDist}}
    % \label{fig:game_indblindcanauth}
  % \end{figure}
\end{lemmaE}
\begin{proofE}
  The above game is more formally defined in \cref{fig:gameFormalBlindAuthBlindCanDist}.
  \begin{figure}[htb]
    \centering
    {\normalfont\game[linenumbering]{$\texttt{GAME1}^\adv (\lambda) $}{%
        (\cM, \vect{d}_0^{(0)}, \{(i,w_i^{(0)})\}_{i \in [n] \setminus \cM}),\vect{d}_0^{(1)}, \{(i,w_i^{(1)})\}_{i \in [n] \setminus \cM}) \gets \cA_1(1^\lambda)\\
        \label{my:line:securauth1d00eqd01}\pcif \exists i \in \cM, \vect{d}_0^{(0)}[i] \neq \vect{d}_0^{(1)}[i]\pcthen \pcreturn \pcfalse\ \pcfi\\
        \label{my:line:securauth1CorrptGoodSet}\pcif (\exists i \in \cM, \vect{d}_0^{(0)} = 1) \text{ and } ((\forall i \notin \cM, \vect{d}_0^{(0)}[i] = 0) \text{ or } (\forall i \notin \cM, \vect{d}_0^{(1)}[i] = 0))\\
        \pcthen \pcreturn \pcfalse\ \pcfi\pcskipln\\
        \pclinecomment{Check that the adversary did not gave wrong witnesses $w_i$:}\\
        \label{my:line:securauth1TestAuthOk}\pcif \exists i \in [n], \exists c \in \{0,1\}, \Auth_i(\vect{d}_0^{(c)}[i],w_i^{(c)}) \neq 1 \pcthen\pcreturn \pcfalse \ \pcfi\\
        c \sample \bin; \vect{d}_0 \eqdef \vect{d}_0^{(c)}; w_i \eqdef w_i^{(c)}\\
        \forall i \notin \cM, (k^{(i)}, t_k^{(i)}) \gets \GenLocal(1^\lambda, \vect{d}_0[i])\\
        \label{my:line:secureauth1A2}\cA_2(\{k^{(i)}\}_{i \notin \cM})\\
        \label{my:line:secureauth1ProofZK}\pcfor i \notin \cM \pcdo\\
        \t \text{Prove in ZK with $\cA_{2,i}$ that $\CheckHonestTrapdoor_\lambda(\vect{d}_0[i], t_k^{(i)}, k^{(i)}) \land \Auth_i(\vect{d}_0[i],w_i) = 1$}\\
        \pcendfor\\
        \label{my:line:secureauth1A2aborts}\pcif \text{$\cA_2$ aborts} \pcthen \text{wait for $\tilde{c}$ from $\cA_2$. } \pcreturn \tilde{c} = c \ \pcfi\\
        (y,b) \gets \cA_3\\
        \label{my:line:securauth1MPC}\text{Compute in a MPC way the \CombineAlpha{} function, where $\cA_4$ controls adversaries in $\cM$.}\pcskipln\\
        \pclinecomment{All others operations are not sent to the adversary, and operations}\pcskipln\\
        \pclinecomment{applied on the quantum state do not change anything due to non-signaling.}\\
        \label{my:line:securauth1Guess}\tilde{c} \gets \cA_5\\
        \pcreturn \tilde{c} = c
      }}
    \caption{}
    \label{fig:gameFormalBlindAuthBlindCanDist}
  \end{figure}
  We will prove the above theorem by using a hybrid argument. First, we can easily see that if the adversary corrupts all applicants ($\cM = [n]$), then it cannot win the game with probability better than $\frac{1}{2}$. Indeed, the line \ref{my:line:securauth1d00eqd01} forces $\vect{d}_0^{(0)}=\vect{d}_0^{(1)}$ (and both maps $w_i$ are empty), therefore the view of the adversary is exactly the same for $c=0$ and $c=1$. So we can define a new hybrid game $\texttt{GAME2}$ in which we return $\pcfalse$ if $\cM = [n]$:
  \begin {pcimage}
    {\normalfont\game[skipfirstln,lnstart=1,linenumbering]{ $\texttt{GAME2}^\cA$ }{
        \pclinecomment{Just update the line \ref{my:line:securauth1d00eqd01} of $\texttt{GAME1}$ as follows: }\\
        \pcif \textcolor{green!50!black}{\cM = [n] \text{ or } } \exists i \in \cM, \vect{d}_0^{(0)}[i] \neq \vect{d}_0^{(1)}[i]\pcthen \pcreturn \pcfalse\ \pcfi\pcskipln\\
        \pclinecomment{Rest is like \texttt{GAME1}\dots}
      }}
  \end{pcimage}
  Then, we can turn any adversary $\cA$ winning $\texttt{GAME1}$ with probability $p$ into another adversary $\cA'$ winning $\texttt{GAME2}$ with probability $p$. To do so, $\cA'$ runs first $\cA_1$: if $\cA_1$ returns $\cM \neq [n]$, then $\cA'$ continues normally with $\cA$, otherwise if $\cM = [n]$ then $\cA'$ removes one element of $\cM$ (of course, $[n]$ is assumed to be non empty\dots), and $\cA'$ aborts when $\cA_2$ is supposed to run, and output a random $\tilde{c}$. Therefore, we have:
  \begin{align}
    \max_{\QPT{} \cA} \pr{\texttt{GAME1}^\cA}(\lambda) = \max_{\QPT{} \cA'} \pr{\texttt{ GAME2}^{\cA'}}(\lambda)
  \end{align}
  The second hybrid game that we define is the same as the game \texttt{GAME2}, except that we replace lines \ref{my:line:securauth1MPC} to \ref{my:line:securauth1Guess} with one line $(\tilde{c}, \vec{y}) \gets \Real_{\Uppi,\cA'}(\lambda,\vec{x},\rho_3)$, where $\rho_3$ is the final internal state of the adversary $\cA_3$, $\vec{x}$ contains the honest inputs of the MPC computation for the non-corrupted adversaries and dummy inputs for the corrupted adversaries (we will ignore them anyway), and $\cA'$ is the adversary that outputs the corrupted set $\cM$, that runs $\cA_4(\rho_3)$ followed by $\tilde{c} \gets \cA_5(\rho_4)$, where $\rho_4$ is the final internal state of $\cA_4$, and that finally returns $\tilde{c}$. This defines a new game $\texttt{GAME3}$:
  \begin {pcimage}
    {\normalfont\game[lnstart=13,linenumbering]{ $\texttt{GAME3}^\cA$ }{
        \hcancel[red]{\text{Compute in a MPC way the \CombineAlpha{} function, where $\cA_4$ controls adversaries in $\cM$.}}\\
        \hcancel[red]{\tilde{c} \gets \cA_5}\quad\textcolor{green!50!black}{(\tilde{c}, \vec{y}) \gets \Real_{\Uppi,\cA'}(\lambda,\vec{x},\rho_3)}
      }}
  \end{pcimage}
  Since this is perfectly equivalent from the point of view of the adversary (due to the definition or $\Real$), the probability of winning these two games are exactly the same: $\pr{\texttt{GAME2}^\cA} = \pr{\texttt{GAME3}^\cA}$. Now, because the MPC protocol is secure, there exists a simulator $\Sim$ fitting \cref{def:MPC}. Therefore, we can now define a new game ${\texttt{GAME4}}$, in which we replace the real world with the ideal world:
  \begin {pcimage}
    {\normalfont\game[lnstart=14,linenumbering]{ $\texttt{GAME4}^\cA$ }{
        \hcancel[red]{(\tilde{c}, \vec{y}) \gets \Real_{\Uppi,\cA'}(\lambda,\vec{x},\rho_3)}\quad\textcolor{green!50!black}{(\tilde{c}, \vec{y}) \gets \Ideal_{\CombineAlpha,\Sim}(\lambda,\vec{x},\rho_3)}
      }}
  \end{pcimage}
  Then, we have $\pr{\Real_{\Uppi,\cA'}(\lambda,\vec{x},\rho_3)[0] = c} \leq  \pr{\Ideal_{\CombineAlpha,\Sim}(\lambda,\vec{x},\rho_3)[0] = c} + \negl[\lambda]$ (otherwise we could distinguish between the real and ideal worlds), and therefore $\pr{\texttt{GAME3}^\cA} \leq \pr{\texttt{GAME4}^\cA} + \negl[\lambda]$. Now, we can define $\CombineRandom$, which is an adaptation of $\CombineAlpha$ that does not depend anymore on the secret values of the honest parties:
  \begin {pcimage}\label{fct:combineRandom}
    {\normalfont\game[linenumbering]{$\CombineRandom((k,y,b),(t_k^{(1)},\vect{d}_0[1],r^{(1)},k^{(1)},y^{(1)}, b^{(1)}),\dots,(t_k^{(n)},\vect{d}_0[n],r^{(n)},k^{(1)},y^{(n)},b^{(n)}))$}{%
        \pclinecomment{Check if the input are honestly prepared}\\
        \pcif k \neq (k^{(1)},\dots,k^{(n)}) \text{ or } \exists i, y^{(i)} \neq y \text{ or } b^{(i)} \neq b \text{ or } \exists i \in \cM, \neg \CheckHonestTrapdoor_\lambda(\vect{d}_0[i], t_k^{(i)}, k^{(i)})\\
        \pcthen \pcreturn \bot^{n+1} \ \pcfi\\
        \forall i, \hat{\alpha}_i = \bigoplus_l r^{(l)}[i]\\
        \pcreturn (\top,\hat{\alpha}_1,\dots,\hat{\alpha}_n)
      }}
  \end{pcimage}
  We can now define a new game $\texttt{GAME5}$ in which we substitute the $\CombineAlpha$ function with the $\CombineRandom$ function:
  \begin {pcimage}
    {\normalfont\game[lnstart=14,linenumbering]{ $\texttt{GAME5}^\cA$ }{
        (\tilde{c}, \vec{y}) \gets \Ideal_{\hcancel[red]{\CombineAlpha}\textcolor{green!50!black}{\CombineRandom},\Sim}(\lambda,\vec{x},\rho_3)
      }}
  \end{pcimage}
  Then, we have $\pr{\texttt{GAME4}^\cA} = \pr{\texttt{GAME5}^\cA}$. Indeed, by construction, the inputs of the honest parties always pass the $\CheckHonestTrapdoor$ test, so removing this test for the honest parties cannot help the adversary to distinguish the two games. Moreover, since at least one applicant is honest, the string $\bigoplus_l r^{(l)}$ is indistinguishable from a random string. Therefore, we can use the same trick used already in the proof of \cref{lem:securityBlindCanSup}: the condition line \ref{my:line:securauth1CorrptGoodSet} gives us two cases.
  \begin{itemize}
    \item If all malicious applicants are not supported, then, since the output of honest applicants are never given back the adversary, we don't need to update $\hat{\alpha}_j$
    \item Similarly, if at least one honest applicant is supported, then instead of updating $\hat{\alpha}_j$, we can update the $\hat{\alpha}_j$ or this applicant\dots{} But since the output of honest applicants are never given to the adversary, we don't even need to update it.
  \end{itemize}
  Therefore, $\pr{\texttt{GAME4}^\cA} = \pr{\texttt{GAME5}^\cA}$. In the next hybrid, we are going to remove completely the MPC computation, and the previous line that can now be merged in a single one:
  \begin {pcimage}
    {\normalfont\game[lnstart=11,linenumbering]{ $\texttt{GAME6}^\cA$ }{
        \hcancel[red]{\pcif \text{$\cA_2$ aborts} \pcthen \text{wait for $\tilde{c}$ from $\cA_2$. } \pcreturn \tilde{c} = c \ \pcfi}\\
        \hcancel[red]{(y,b) \gets \cA_3}\setcounter{pclinenumber}{11}\\
        \label{my:line:game6newA3}\hcancel[red]{(\tilde{c}, \vec{y}) \gets \Ideal_{\CombineRandom,\Sim}(\lambda,\vec{x},\rho_3)}\quad\textcolor{green!50!black}{\tilde{c} \gets \cA_3}
      }}
  \end{pcimage}
  The reason is that now, since the $\CombineRandom{}$ function does not depend on any secret own by honest parties, the input of honest parties can be replaced with wrong trapdoors $t_k^{(i)}$ and $\vect{d}_0[i]$. Therefore, since the adversary knows already $k^{(i)}$, it can simulate locally the ideal world. More precisely, from an adversary $\cA$ winning the game $\texttt{GAME5}^\cA$ with probability $p$, we can create another adversary $\cA'$ winning the game $\texttt{GAME6}^\cA$ with the same probability $p$: $\cA'$ will run $\cA_1$ and $\cA_2$ against the challenger, keeping locally the $k^{(i)}$. If $\cA_2$ aborts and sends $\tilde{c}$, then it returns $\tilde{c}$ directly. Otherwise $\cA'$ runs as a blackbox $\cA_3$ to obtain $(y,b)$, and then it locally runs $\Sim$ to compute $\Ideal_{\CombineRandom,\Sim}(\lambda,\vec{x},\rho_3)$, by feeding the input of honest parties input $\vec{x}$ with the $k^{(i)}$ that it got before, $y^{(i)} \eqdef y$, $b^{(i)} \eqdef b$, $r^{(i)} \sample \{0,1\}^n$ and since the values of $t_k^{(i)}$ do not matter anymore, it can put any value here. Finally $\cA'$ outputs the $\tilde{c}$ obtained from the simulation $\Ideal$. Therefore, since we do not change what is done, but who's doing what, we get:
  \begin{align}
    \max_{\QPT{} \cA} \pr{\texttt{GAME5}^\cA}(\lambda) = \max_{\QPT{} \cA'} \pr{\texttt{GAME6}^{\cA'}}(\lambda)
  \end{align}
  So now, the game $\texttt{GAME6}$ is exactly like $\texttt{GAME1}$ except that the lines~\ref{my:line:secureauth1A2aborts}--\ref{my:line:securauth1Guess} are replaced with a single line $\tilde{c} \gets \cA_3$. We will now do a similar strategy to remove the ZK protocol (line~\ref{my:line:secureauth1ProofZK}). The first step is to formalize this line (we will also merge it with the next line). We define $\cR_\lang$ as the relation $(\vect{d}_0[i], t_k^{(i)}, k^{(i)}) \in \cR_\lang(k^{(i)})$ iff $\CheckHonestTrapdoor_\lambda(\vect{d}_0[i], t_k^{(i)}, k^{(i)}) \land \Auth_i(\vect{d}_0[i],w_i) = 1$, and $\P{}$ the honest ZK prover associated with $\cR_\lang$. If we define $\rho_1$ as the internal state at the end of $\cA_1$, and $\V{}^*(\{k^{(i)}\}_{i \notin \cM},\rho_1) \eqdef (\rho_2 \gets \cA_2(\rho_1); \tilde{c} \gets \cA_3(\rho_2))$ then we can merge the line~\ref{my:line:game6newA3} of $\texttt{GAME6}$ with the line~\ref{my:line:secureauth1ProofZK} as follows:
  \begin{pcimage}
    {\normalfont\game[lnstart=7,linenumbering]{ $\texttt{GAME7}^\cA$ }{
        \textcolor{green!50!black}{\rho \gets }\cA_2(\{k^{(i)}\}_{i \notin \cM})\textcolor{green!50!black}{\pclinecomment{We just explicit the internal state after $\cA_2$}}\\
        \pcfor i \notin \cM \pcdo\\
        \t \hcancel[red]{\text{Prove in ZK with $\cA_{2,i}$ that $\CheckHonestTrapdoor_\lambda(\vect{d}_0[i], t_k^{(i)}, k^{(i)}) \land \Auth_i(\vect{d}_0[i],w_i) = 1$}}\pcskipln\\
        \t \textcolor{green!50!black}{\rho \gets \OUT_{\cA_{2,i}}\langle \P{}(\vect{d}_0[i], t_k^{(i)},w_i), \cA_{2,i}(\rho)\rangle(k^{(i)})}\\
      \pcendfor
      }}
  \end{pcimage}
  Since both games are exactly identical (up to the notation), we get $\pr{\texttt{GAME6}^\cA}=\pr{\texttt{GAME7}^\cA}$. Now, due to the fact that the MPC protocol respects the property \emph{Quantum Zero Knowledge} defined in \cref{def:postquantumZK}, there exist for all $i \notin \cM$ a simulator $\Sim_i$ fitting \cref{def:postquantumZK}. To be completely formal, one should define a series of games in which we replace in the loop only one $\OUT$ at a time by the simulated version, and we can then claim that the probability of having $\tilde{c} = c$ in each hybrid game is negligibly close to the probability of having $\tilde{c}=c$ in the first game, otherwise we could distinguish between the real world and the ideal world. This gives us at the end a new game $\texttt{GAME8}$:
  \begin{pcimage}
    {\normalfont\game[lnstart=9,linenumbering]{ $\texttt{GAME8}^\cA$ }{
        \hcancel[red]{\rho \gets \OUT_{\cA_{2,i}}\langle \P{}(\vect{d}_0[i], t_k^{(i)},w_i), \cA_{2,i}(\rho)\rangle(k^{(i)})}\quad\textcolor{green!50!black}{\rho \gets \Sim_i(k^{(i)},\V{}^*,\rho)}
      }}
  \end{pcimage}
  And using the above argument, $\pr{\texttt{GAME7}^\cA} \leq \pr{\texttt{GAME8}^\cA} + \negl[\lambda]$. Now, the simulators $\Sim_i$ can be fully simulated by the adversary since there is no more secret information. So, exactly like we did for the MPC computation, we can move the loop into the adversary:
  \begin{pcimage}
    {\normalfont\game[lnstart=7,linenumbering]{ $\texttt{GAME7}^\cA$ }{
        \hcancel[red]{\rho \gets \cA_2(\{k^{(i)}\}_{i \notin \cM})}\quad\textcolor{green!50!black}{\tilde{c} \gets \cA_2(\{k^{(i)}\}_{i \notin \cM})}\\
        \hcancel[red]{\pcfor i \notin \cM \pcdo}\\
        \t \hcancel[red]{\rho \gets \Sim(k^{(i)},\V{}^*,\rho)}\\
        \hcancel[red]{\pcendfor}\setcounter{pclinenumber}{11}\\
        \hcancel[red]{\tilde{c} \gets \cA_3}
      }}
  \end{pcimage}
  and we get:
  \begin{align}
    \max_{\QPT{} \cA} \pr{\texttt{GAME8}^\cA}(\lambda) = \max_{\QPT{} \cA'} \pr{\texttt{GAME9}^{\cA'}}(\lambda)
  \end{align}
  So now, $\texttt{GAME9}$ is like $\texttt{GAME1}$ except that all lines starting from line~\ref{my:line:secureauth1A2} are replaced with a single line $\tilde{c} \gets \cA_2(\{k^{(i)}\}_{i \notin \cM})$. We see now that the conditions line~\ref{my:line:securauth1TestAuthOk} and line~\ref{my:line:securauth1CorrptGoodSet} can only decrease the probability of winning the game. Therefore we can remove them (as well as $w_i$'s which are not used anymore). This gives us this new game (after removing empty lines):
  {\normalfont\game[linenumbering]{$\texttt{GAME10}^\adv (\lambda) $}{%
      (\cM, \vect{d}_0^{(0)}, \vect{d}_0^{(1)}) \gets \cA_1(1^\lambda)\\
      \pcif \exists i \in \cM, \vect{d}_0^{(0)}[i] \neq \vect{d}_0^{(1)}[i]\pcthen \pcreturn \pcfalse\ \pcfi\\
      c \sample \bin; \vect{d}_0 \eqdef \vect{d}_0^{(c)}\\
      \forall i \notin \cM, (k^{(i)}, t_k^{(i)}) \gets \GenLocal(1^\lambda, \vect{d}_0[i])\\
      \tilde{c} \gets \cA_2(\{k^{(i)}\}_{i \notin \cM})\\
      \pcreturn \tilde{c} = c
    }}
  Since we only increase the probability of success, we have:
  \begin{align}
    \max_{\QPT{} \cA} \pr{\texttt{GAME9}^\cA}(\lambda) \leq \max_{\QPT{} \cA'} \pr{\texttt{GAME10}^{\cA'}}(\lambda)
  \end{align}
  Similarly, we can decide to give more advices to the adversary, by running $\GenLocal$ on all $i \in [n]$ instead of only on the $i \notin \cM$:
  \begin{pcimage}
    {\normalfont\game[lnstart=3,linenumbering]{ $\texttt{GAME11}^\cA$ }{
        \hcancel[red]{\forall i \notin \cM}\textcolor{green!50!black}{\forall i \in [n]}, (k^{(i)}, t_k^{(i)}) \gets \GenLocal(1^\lambda, \vect{d}_0[i])\\
        \tilde{c} \gets \cA_2(\{k^{(i)}\}_{i \hcancel[red]{\notin \cM}\textcolor{green!50!black}{\in [n]}})
      }}
  \end{pcimage}
  Since we give more advices to the adversary, its probability of winning the game can only increase (it can always decide to drop this additional information). Therefore
  \begin{align}
    \max_{\QPT{} \cA} \pr{\texttt{GAME10}^\cA}(\lambda) \leq \max_{\QPT{} \cA'} \pr{\texttt{GAME11}^{\cA'}}(\lambda)
  \end{align}
  However, since $\Gen$ is defined as the concatenation of $\GenLocal$, then this game is actually exactly the $\INDFCT_{\Gen{}} ^\adv (\lambda)$ game defined \cref{game:indfct}, and by assumption, the probability of winning this game is smaller than $\frac{1}{2} + \negl[\lambda]$. Therefore, we also have:
  \begin{align}
    \max_{\QPT{} \cA} \pr{\texttt{GAME1}^\cA}(\lambda) \leq \frac{1}{2} + \negl[\lambda]
  \end{align}
  which ends the proof.
\end{proofE}

% \FloatBarrier

\section{Function construction}\label{sec:functionConstruction}
%% Proof-at-the-end: Useful to move all proofs in this section in the same appendix
\pgfkeys{/prAtEnd/local custom defaults/.style={category=functionConstructions}}

\subsection{Construction of a \AssumpFct{} family}\label{sec:construction_f_superpoly}

In this section, we will explain how to derive a \AssumpFct{} family. See the \cref{subsec:quick_overview} to get an intuitive explanation of our method, which extends the construction introduced in~\cite{CCKW_2019_qfactory} (itself based on~\cite{CCKW18}). In the following, $\MPGen$ and $\MPDec$ are the functions defined in \cite{MP11} (to, respectively, generate a couple public key/trapdoor $(\vect{A},\vect{R})$ and to invert the function $g_{\vect{A}}(\vect{s}, \vect{e}) \eqdef \vect{A} \vect{s} + \vect{e}$). Details are in \cref{sec:MP11construction}.

\begin{definition}\label{def:ourConstruction}
  For the parameters $\cP \eqdef (k,N,\alpha,r_{max},n,\ccX)$ with $(k,N,n,r_{max}) \in \N$, $0<\alpha<1$, $\ccX \subseteq \Z_q^N \times \Z_q^{M+n}$ where $q \eqdef 2^k$ and $M \eqdef N(1+k)$, we define \cref{fig:fct_construction_def} the algorithms $\Gen_\cP$, $\Dec_\cP$, $\Eval_\cP$ (to compute $f_k$) and $h$. We use $\vect{d}_0 \in \{0,1\}^n \subseteq \Z_q^n$ (same for $\vect{d}$), $\vect{s}_0 \in \Z_q^{N}$, $\vect{e}_0 \in \Z_q^{M}$, $(\vect{s},\vect{e}) \in \ccX$, $c \in \{0,1\}$, $\vect{A} \in \Z_q^{(M+n) \times N}$ and $\vect{R}$ is the trapdoor obtained via the \cite{MP11} algorithm.

  \begin{figure}[htb]
    \centering
    \begin{pcvstack}[boxed, center,space=0.3cm]
      \begin{pchstack}[space=0.3cm]
        {\normalfont\procedure[linenumbering]{$\Gen_\cP(1^\lambda, \vect{d}_0)$}{
            (\vect{A}_u, \vect{R}) \gets \MPGen(1^\lambda)\\
            \vect{A}_l \sample \Z_q^n\\
            \vect{A} \eqdef \SmallBlockMatrix{\vect{A}_u}{\vect{A}_l}\\
            \vect{s}_0 \gets \disGaussAQ^N\\
            \vect{e}_0 \gets \disGaussAQ^{M+n}\\
            \vect{y}_0 \eqdef \vect{A} \vect{s}_0 + \vect{e}_0 + \frac{q}{2} \SmallBlockMatrix{\vect{0}^M}{\vect{d}_0}\\
            k \eqdef (\vect{A}, \vect{y}_0)\\
            t_k \eqdef (\vect{R}, \vect{d}_0, \vect{s}_0, \vect{e}_0, \vect{A})\\
            \pcreturn (k, t_k)
          }}
        {\normalfont\procedure[linenumbering]{$\Dec_\cP(t_k \eqdef (\vect{R},\vect{d}_0,\vect{s}_0,\vect{e}_0,\vect{A}), \vect{y})$}{
            \SmallBlockMatrix{\vect{y}_u \in \Z_q^M}{\vect{y}_l \in \Z_q^n} \eqdef \vect{y}; \quad \SmallBlockMatrix{\vect{A}_u}{\vect{A}_l} \eqdef \vect{A}\\
            (\vect{s},\vect{e}_u) \gets \MPDec(\vect{R}, \vect{A}, \vect{y}_u)\\
            \pcif \vect{s} = \bot \pcthen \pcreturn \bot \ \pcfi\\
            \vect{d} \eqdef \RoundMod_q\left(\vect{y}_l - \vect{A}_l \vect{s}\right)\\
            \vect{e} \eqdef \SmallBlockMatrix{\vect{e}_u}{\vect{y}_l - \vect{A}_l \vect{s} - \vect{d}}\\
            \vect{s}' \eqdef \vect{s} - \vect{s}_0;\quad \vect{e}' \eqdef \vect{e} - \vect{s}_0\\
            \pcif (\vect{s},\vect{e}) \notin \ccX \text{ or } (\vect{s}',\vect{e}') \notin \ccX \pcthen\\
            \t \pcreturn \bot \  \pcfi\\
            \pcreturn ((\vect{s},\vect{e},0,\vect{d}), (\vect{s}',\vect{e}',1,\vect{d}'\xor \vect{d}_0))
          }}
      \end{pchstack}
      \begin{pchstack}[space=0.3cm]
        {\normalfont\procedure[linenumbering]{$\Eval_\cP(k \eqdef (\vect{A}, \vect{y}_0), x \eqdef
            (\vect{s},\vect{e},c,\vect{d})) \eqqcolon f_k(x)$}{
            \pcreturn \vect{A} \vect{s} + \vect{e} + \SmallBlockMatrix{\vect{0}^M}{\vect{d}} + c \times \vect{y_0}
          }}
        {\normalfont\procedure[linenumbering]{$h(x \eqdef (\vect{s},\vect{e},c,\vect{d}))$}{
            \pcreturn \vect{d}
          }}
      \end{pchstack}
    \end{pcvstack}
    \caption{Definition of the \AssumpFct{} family}
    \label{fig:fct_construction_def}
\end{figure}
\end{definition}

While most of the correctness and security proofs are similar to \cite{CCKW_2019_qfactory}, we also need a worst-case analysis (for malicious applicants) and ensure that the functions are $\negl[\lambda]$-$2$-to-$1$ for appropriate parameters.

\inAppendixIfPublished{
  We obtain conditions to check that our function is $\delta$-$2$-to-$1$ and have the \XOR{} property explained in \cref{def:fct_requirements}, which is exactly what the $\HCheckHonestTrapdoor$ function (required later in \cref{thm:compiler}) must do (i.e.\ check that the trapdoor is well formed, and that it has the properties given in \cref{lem:conditionsFkDelta2To1}). Note that the $\delta$ given here is a kind of worst-case analysis (which is required if the key is maliciously sampled): on average, if we sample random functions we expect to have a smaller $\delta$.
  \begin{lemmaE}[Conditions for $f_k$ to be $\delta$-$2$-to-$1$]\label{lem:conditionsFkDelta2To1}
    Let $\cP$ be like in \cref{def:ourConstruction}. We define $\rsafe = r_{\max} - \alpha q \sqrt{N+M+n}$, $\ccX+(\hat{\vect{s}}_0,\hat{\vect{e}}_0) \eqdef \{ (\vect{s}+\hat{\vect{s}}_0, \vect{e}+\hat{\vect{e}}_0) \mid (\vect{s},\vect{e}) \in \ccX \}$ and
    \begin{align}
      \delta \eqdef1- \min &\left\{\frac{|\ccX \cap (\ccX + (\hat{\vect{s}}_0,\hat{\vect{e}}_0))|}{|\ccX|}\right. \nonumber\\
                           &\left.\bigm| (\hat{\vect{s}}_0, \hat{\vect{e}}_0) \in \Z_q^N \times \Z_q^{M+n}, \left\|\SmallBlockMatrix{\hat{\vect{s}}_0}{\hat{\vect{e}}_0}\right\|_2 \leq \alpha q \sqrt{N+M+n} \right\}\label{eq:defDeltaInter}
    \end{align}
    Let $\vect{s}_0 \in \Z_q^N$, $\vect{e}_0 \in \Z_q^{M+n}$, $\vect{d}_0 \in \{0,1\}$, $\hat{\vect{A}} \in \Z_q^{N \times N}$, $\vect{A}_l \in \Z_q^{n \times n}$ and $\vect{R} = \HorizBlockMatrix{\vect{R}_1}{\vect{R}_2} \in \Z_q^{Nk \times 2N}$. We define as before:
    \begin{align}
      \vect{A_u} \eqdef \SmallBlockMatrix{\hat{\vect{A}}}{\vect{G} - \vect{R}_2\hat{\vect{A}} - \vect{R}_1}
      &\qquad \vect{A} \eqdef \SmallBlockMatrix{\vect{A}_u}{\vect{A}_l} \in \Z_q^{(M+n)\times N}
    \end{align}
    together with $\vect{y}_0 \eqdef \vect{A}\vect{s}_0+\vect{e}_0+\frac{q}{2} \SmallBlockMatrix{\vect{0}^M}{\vect{d}_0}$ and $k \eqdef (\vect{A}, \vect{y}_0)$.

    If $\sqrt{\sigma_{\max}(\vect{R})^2 +1} \leq \frac{q}{4 r_{\max}}$, $\left\|\SmallBlockMatrix{\vect{s}_0}{\vect{e}_0}\right\|_2 \leq \alpha q \sqrt{N+M+n}$, and
    \begin{align}
      \ccX \subseteq \left\{(\vect{s}, \vect{e}) \in \Z_q^N \times \Z_q^{M+n} \middle| \left\|\SmallBlockMatrix{\vect{s}}{\vect{e}}\right\|_2 \leq \rsafe \right\}\label{eq:X2rsafe}
    \end{align}
    then the function $f_{k}(x)$ described in \cref{def:ourConstruction} is \mbox{$\delta$-$2$-to-$1$}, trapdoor, and for any $y$ having exactly two preimages $x$ and $x'$, we have $x \neq x'$, $h(x) \xor h(x') = \vect{d}_0$.

    On the other hand, if $(k, t_k)$ is sampled according to $\Gen(1^\lambda, \vect{d}_0)$, if \cref{eq:X2rsafe} is true, and if:
    \begin{align}
      \sqrt{\left(C \times \alpha q \times \sqrt{N}(\sqrt{k}+\sqrt{2}+1)\right)^2 + 1} \leq \frac{q}{4 r_{\max}}\label{eq:linkWithRmax2}
    \end{align}
    then with overwhelming probability on $N$, the function $f_k$ is $\delta$-$2$-to-$1$, trapdoor, and for any $y$ having exactly two preimages $x$ and $x'$, we have $x \neq x'$, $h(x) \xor h(x') = \vect{d}_0$.
  \end{lemmaE}
  \begin{proofE}
    Let us first prove that for all $c \in \{0,1\}$, the function $f_k(\cdot,\cdot,c,\cdot)$ is injective. Let $c \in \{0,1\}$, and $\vect{s},\vect{e},\vect{d},\vect{s}',\vect{e}',\vect{d}'$ be such that $f_k(\vect{s},\vect{e},c,\vect{d}) = f_k(\vect{s}',\vect{e}',c,\vect{d}')$. Then, if we consider only the upper part of this equation (and denote $\vect{e}_u$ the upper part of the vector $\vect{e}$), we get $\vect{A}_u \vect{s} + \vect{e}_u + c \times \vect{y}_{0,l} = \vect{A}_u \vect{s}' + \vect{e}'_u + c \times \vect{y}_{0,l}$, i.e.\ $\vect{A}_u \vect{s} + \vect{e}_u = \vect{A}_u \vect{s}' + \vect{e}'_u$. But because $\sqrt{\sigma_{\max}(\vect{R})^2 +1} \leq \frac{q}{4 r_{\max}}$ and due to the condition on $\ccX$ given in \cref{eq:X2rsafe} and the fact that $\rsafe \leq r_{\max}$, according to \cref{lem:MPInjectiveSingularValue} the function $(\vect{s},\vect{e}) \rightarrow \vect{A}_u \vect{s} + \vect{e}$ is injective. So $\vect{s} = \vect{s}'$ and $\vect{e}_u = \vect{e}'_u$. Now, we focus on the upper part of the above equation: we have $\vect{A}_l \vect{s} + \vect{e}_l + \frac{q}{2} \vect{\vect{d}}+ c \times \vect{y}_{0,l} = \vect{A}_l \vect{s'} + \vect{e}'_l + \frac{q}{2} \vect{\vect{d}'}+ c \times \vect{y}_{0,l}$. Because $\vect{s}=\vect{s}'$, we obtain $\vect{e}_l + \frac{q}{2} \vect{\vect{d}} = \vect{e}'_l + \frac{q}{2} \vect{\vect{d}'}$. Because $1 \leq \sqrt{\sigma_{\max}(\vect{R})+1} < \frac{q}{4 r_{\max}}$, we have $r_{\max} < \frac{q}{4}$. Therefore, we get for all $i$, $|\vect{e}_l[i]| < \frac{q}{4}$, so $\RoundMod_q(\vect{e}_l[i] + \frac{q}{2} \vect{d}[i]) = \RoundMod_q(\vect{e}_l'[i] + \frac{q}{2} \vect{d}'[i])$, i.e.\ $\vect{d}[i] = \vect{d}'[i]$. So $\vect{d} = \vect{d}$ and therefore we also get $\vect{e}_l=\vect{e}'_l$: the function $f_k(\cdot, \cdot, c, \cdot)$ is injective.

    Therefore, $f_k$ has at most two preimages, one for $c =0$ and one for $c=1$. We prove now that $f_k(\vect{s},\vect{e},0,\vect{d}) = f_k(\vect{s}',\vect{e}',1,\vect{d}')$, iff $(\vect{s}', \vect{e}',\vect{d}') = (\vect{s}-\vect{s}_0,\vect{e}-\vect{e}_0,\vect{d} \xor \vect{d}_0)$. One implication is trivial: if $(\vect{s}', \vect{e}',\vect{d}') = (\vect{s}-\vect{s}_0,\vect{e}-\vect{e}_0,\vect{d} \xor \vect{d}_0)$ then because $q$ is even, $f_k(\vect{s},\vect{e},0,\vect{d}) = f_k(\vect{s}',\vect{e}',1,\vect{d}')$. We prove now the second implication. By definition of $f_k$, if we consider again the upper part of the equation and replace $y_0$ by its definition, we have $\vect{A}_u \vect{s}+\vect{e}_u = \vect{A}_u( \vect{s}'+\vect{s}_0) + (\vect{e}_u'+\vect{e}_{0,u})$. But the triangle inequality gives:
    \begin{align*}
      \left\|\SmallBlockMatrix{\vect{s}'+\vect{s}_0}{\vect{e}_u'+\vect{e}_{0,u}}\right\|_2 \leq \left\|\SmallBlockMatrix{\vect{s}'}{\vect{e}_u'}\right\|_2+\left\|\SmallBlockMatrix{\vect{s}_0}{\vect{e}_{0,u}}\right\|_2 \leq r_{safe} + \alpha q \sqrt{N+M+n} = r_{\max}
    \end{align*}
    Therefore, we can use again the injectivity property given in \cref{lem:MPInjectiveSingularValue}, which gives $(\vect{s},\vect{e}_u)=(\vect{s}'+\vect{s}_0,\vect{e}'+\vect{e}_{0,u})$. We can now analyse the lower part of the equation: $\vect{A}_l \vect{s}+\vect{e}_l + \frac{q}{2} \vect{d} = \vect{A}_l (\vect{s}'+\vect{s}_0)+(\vect{e}'_l+\vect{e}_{0,l}) + \frac{q}{2} (\vect{d}' + \vect{d}_0)$. Because $\vect{s}=\vect{s}'+\vect{s}_0$ and $q$ is even, we have $\vect{e}_l + \frac{q}{2} \vect{d} = (\vect{e}'_l+\vect{e}_{0,l}) + \frac{q}{2} (\vect{d}'\xor \vect{d}_0)$. Using again the triangle inequality, we prove the same way that $\|\vect{e}'_l+\vect{e}_{0,l}\|_2 < \frac{q}{2}$. As before, by rounding the previous equation using $\RoundMod_q$, we obtain $\vect{d}=\vect{d}'\xor \vect{d}_0$ and $\vect{e}_l = \vect{e}'_l$ which concludes the proof. In particular, if $x$ and $x'$ are the two preimage, we have $h(x) \xor h(x')= \vect{d}\xor \vect{d}' = \vect{d}_0$. We remark that this proof follows exactly the algorithm $\Dec$, therefore the correctness of $\Dec$ follows quite directly and thus the function is trapdoor.

    Now, we prove that the function $f_k$ is $\delta$-$2$-to-$1$. The total number of elements in the domain of $f_k$ is $2|\ccX| \times 2^n$. Let $(\vect{s},\vect{e}) \in \ccX$ and $\vect{d} \in \{0,1\}^n$. Then, using the result proven above, $f_k(\vect{s},\vect{e},0,\vect{d})$ has a second preimages $(\vect{s}-\vect{s}_0, \vect{e}-\vect{e}_0,1,\vect{d} \xor \vect{d}_0)$ iff $(\vect{s}-\vect{s}_0,\vect{e}-\vect{e}_0) \in \ccX$, i.e.\ iff $(\vect{s},\vect{e}) \in \ccX + (\vect{s}_0,\vect{e}_0)$. So the number of elements $x$ such that $|f_k^{-1}(f_k(x))|=2$ is equal to $2 |\ccX \cap (\ccX + (\vect{s}_0,\vect{e}_0))| 2^n$, therefore if we define:
    \begin{align}
      \delta_k \eqdef 1-\frac{2 |\ccX \cap (\ccX + (\vect{s}_0,\vect{e}_0))| 2^n}{2 |\ccX| 2^n} = 1-\frac{|\ccX \cap (\ccX + (\vect{s}_0,\vect{e}_0))|}{|\ccX|}
    \end{align}
    this function is $\delta_k$-$2$-to-$1$. But $\left\|\SmallBlockMatrix{\vect{s}_0}{\vect{e}_0}\right\|_2 \leq \alpha q \sqrt{N + M + n}$, so by definition of $\delta$, $\delta_k \leq \delta$. So $f_k$ is also $\delta$-$2$-to-$1$.

    To prove the last part of the theorem, we use \cref{eq:linkWithRmax2} and \cref{lem:MPInjectiveSingularValue}: with overwhelming probability (on $N$), $\sqrt{\sigma_{\max}(\vect{R})^2 +1 } < \frac{q}{4 r_{\max}}$. Moreover, because $\SmallBlockMatrix{\vect{s}_0}{\vect{e}_0}$ is sampled according to $\disGauss{\alpha}{q}^{N+(M+n)}$, we get, using\footnote{Note that the original lemma applies to Gaussian distributions that are not reduced modulo $q$, but reducing the Gaussian distribution modulo $q$ can only decrease the length of the vector.} \cref{lem:boundGaussianDistrib}, that with overwhelming probability (on $N+M+n$): $\left\|\SmallBlockMatrix{\vect{s}_0}{\vect{e}_0}\right\|_2 \leq \alpha q \sqrt{N+M+n}$. We can now end the proof using the first (already proven) part of the theorem.
  \end{proofE}

  Note that we did not yet give an explicit definition of $\ccX$. The most natural way to define $\ccX$ may be to define it following \cref{eq:X2rsafe} as:
  \begin{align}
    \ccXSphere \eqdef \left\{(\vect{s}, \vect{e}) \in \Z_q^N \times \Z_q^{M+n} \middle| \left\|\SmallBlockMatrix{\vect{s}}{\vect{e}}\right\|_2 \leq \rsafe \right\}
  \end{align}
  However, for our protocol to work, one needs to be able to create quantumly a uniform superposition over all elements in $\ccX$. A first naive method would be a \emph{rejection sampling} method: we create a uniform superposition $\sum_x \ket{x}$ over the hypercube of length $2r_{safe}$ (see later how to do), add an auxiliary qubit, set (in superposition) this qubit to $1$ if $\|x\|_2 \leq r_{safe}$, and to $0$ otherwise, and we finally measure it. If the output is $0$, we discard the state and start again from scratch, otherwise we have the wanted state. Unfortunately, this method is inefficient as the probability of not rejecting is negligible when $N$ tends to the infinity. While some more efficient methods may exist, it will be easier to focus rather on $\cX$ being an hypercube.
  \begin{definition} For any $(\mu,N,M,n) \in \N^4$, we define the hypercube $\ccXCube[\mu]$ as:
    \begin{align}
      \ccXCube[\mu] \eqdef \left\{(\vect{s}, \vect{e}) \in \Z_q^N \times \Z_q^{M+n} \middle| \left\|\SmallBlockMatrix{\vect{s}}{\vect{e}}\right\|_\infty \leq \mu \right\}
    \end{align}
  \end{definition}

  \begin{remark}\label{rq:sampleXCube}
    It is now easy to sample from $\ccXCube[\mu]$: we can do a rejection sampling as explained above, except that we proceed coordinate per coordinate (this is much more efficient than doing a rejection sampling on the final high dimensional state): if we use the binary two's complement notation, we apply Hadamard gates on $\ceil{\log_2(2\mu+1)}$ qubits, and use an auxiliary qubit to check if the state is projected on the superposition of elements having size $\leq 2\mu+1$ (this should happen with probability $\frac{2\mu+1}{2^{\ceil{\log_2(2\mu+1)}}} \geq 1/2$). If the test passes, we add $\ket{0}$ ``significants qubits'' until having $ k = \log_2(q)$ qubits, and run the quantum unitary that substracts $\mu$ modulo $q$. We repeat until having $N+M+n$ successful projections (this require therefore $O(N+M+n)$ samplings). Moreover, we can even get completely rid of the rejection sampling if we slightly change the definition of $\ccXCube[\mu]$ and if we make it less symmetric by asking that there exists $k' \in \N$ such that for all $i$, $\SmallBlockMatrix{\vect{s}}{\vect{e}}[i] \in \left[-2^{k'},2^{k'}-1\right]$. The superposition procedure is the same except that we work on $k'+1$ qubits, and we don't need the rejection sampling. However this notation slightly complicates the computations with no clear benefit (if simplifies slightly the sampling part, but it may decrease the value of $\delta$ since the term $2^{k'}$ must be a power of $2$), so for simplicity we will keep our initial notation. Note also that in the following, for the sake of simplicity we won't try to give tight bounds.
  \end{remark}

  \begin{lemmaE}[]\label{lem:upperBoundDelta}
    Let $(N,M,n,\mu) \in \N^4$, $\cX = \ccXCube[\mu]$, $\alpha \in (0,1)$, $\delta$ be as in \cref{eq:defDeltaInter} and $\mu' \eqdef \floor*{\mu  - \alpha q \sqrt{N+M+n}}$. Then if $\mu' \geq 0$:
    \begin{align}
      \delta \leq 1-\left(\frac{2\mu'+1}{2\mu+1}\right)^{N+M+n} \leq \frac{(\alpha q + 1)(N+M+n)^{3/2}}{\mu+1/2}
    \end{align}
  \end{lemmaE}
  \begin{proofE}
    Let $ (\vect{s},\vect{e}) \in \ccXCube[\mu]$ and $(\hat{\vect{s}}_0,\hat{\vect{e}}_0) \in \Z_q^N \times \Z_q^{M+n}$ such that $\left\|\SmallBlockMatrix{\hat{\vect{s}}_0}{\hat{\vect{e}}_0}\right\|_2 \leq \alpha q \sqrt{N+M+n}$. Then, $(\vect{s},\vect{e})\in \ccXCube[\mu] + (\hat{\vect{s}}_0,\hat{\vect{e}}_0)$ iff $(\vect{s}-\hat{\vect{s}}_0,\vect{e}-\hat{\vect{e}}_0)\in \cX$, i.e.\ iff $\left\|\SmallBlockMatrix{\vect{s}-\hat{\vect{s}}_0}{\vect{e}-\hat{\vect{e}}_0}\right\|_\infty \leq \mu$.
    If we assume that $\left\|\SmallBlockMatrix{\vect{s}}{\vect{e}}\right\|_\infty \leq \mu'$ (this will not be super tight, but it is good enough for our analysis), then
    \begin{align}
      \left\|\SmallBlockMatrix{\vect{s}-\hat{\vect{s}}_0}{\vect{e}-\hat{\vect{e}}_0}\right\|_\infty
      \leq \left\|\SmallBlockMatrix{\vect{s}}{\vect{e}}\right\|_\infty + \left\|\SmallBlockMatrix{\hat{\vect{s}}_0}{\hat{\vect{e}}_0}\right\|_\infty
      \leq \left\|\SmallBlockMatrix{\vect{s}}{\vect{e}}\right\|_\infty + \left\|\SmallBlockMatrix{\hat{\vect{s}}_0}{\hat{\vect{e}}_0}\right\|_2 \leq \mu
    \end{align}
    Therefore, $\ccXCube[\mu'] \subseteq \ccXCube[\mu] \cap (\ccXCube[\mu] + (\hat{\vect{s}}_0,\hat{\vect{e}}_0))$. So $|\ccXCube[\mu] \cap (\ccXCube[\mu] + (\hat{\vect{s}}_0,\hat{\vect{e}}_0))| \geq |\ccXCube[\mu']| = (2\mu'+1)^{N+M+n}$, and because $|\ccXCube[\mu] = (2\mu+1)^{N+M+n}|$, we get:
    \begingroup
    \allowdisplaybreaks
    \begin{align}
      \delta
      &\leq 1-\left(\frac{2\mu'+1}{2\mu+1}\right)^{N+M+n} = 1-\left(1+\left(-1+\frac{2\mu'+1}{2\mu+1}\right)\right)^{N+M+n}\nonumber\\
      &\leq 1-\left(1+(N+M+n)\left(-1+\frac{2\mu'+1}{2\mu+1}\right)\right)\tag{Bernouilli's inequality}\\
      &= (N+M+n)\left(1-\frac{2\mu'+1}{2\mu+1}\right)\nonumber\\
      & \leq (N+M+n)\left(1-\frac{2\mu-2\alpha q \sqrt{N+M+n}+3}{2\mu+1}\right) \tag{Remove $\floor{\cdot}$}\\
      & = (N+M+n)\left(\frac{\alpha q \sqrt{N+M+n}+1}{\mu+1/2}\right)\nonumber\\
      & \leq \frac{(\alpha q + 1)(N+M+n)^{3/2}}{\mu+1/2}
    \end{align}
    \endgroup
  \end{proofE}

  \begin{lemmaE}[Conditions on parameters]\label{th:conditionsParam}
    Let $\lambda \in \N$ be a security parameter and let $(k,n) \in \N$ and $\alpha \in (0,1)$ be parameters that depend on $\lambda$, and $C \approx \frac{1}{\sqrt{2\pi}}$ (see \cref{lem:MPInjectiveSingularValue}). We define $N \eqdef \lambda$, $q=2^k$, $M \eqdef N(1+k)$,
    \begingroup
    \allowdisplaybreaks
    \begin{align}
      r_{\max} &\eqdef \frac{q}{4\sqrt{\left(C \times \alpha q \times \sqrt{N}(\sqrt{k}+\sqrt{2}+1)\right)^2 + 1}}\\
      \rsafe &\eqdef r_{\max} - \alpha q \sqrt{N+M+n}\\
      \mu &\eqdef \floor*{\frac{\rsafe}{\sqrt{N+M+n}}}\\
      \ccX &\eqdef \ccXCube[\mu]\\
      \delta_m &\eqdef \frac{(\alpha q + 1)(N+M+n)^{3/2}}{\mu+1/2}
    \end{align}
    \endgroup
    Then, if $\floor*{\mu  - \alpha q \sqrt{N+M+n}} \geq 0$, the construction given in \cref{def:ourConstruction} is $\delta_m-\AssumpFctNoDelta$ (see \cref{def:fct_requirements}) assuming the security of decision-$\LWE{}_{q,\disGaussAQ}$. Moreover, if we define:
    \begingroup
    \allowdisplaybreaks
    \begin{align}
      \alpha_0 &\eqdef \frac{1}{q} \sqrt{(\alpha q)^2 - \omega(\sqrt{\log N})^2}\\
      \gamma &\eqdef \tilde{O}\left(\frac{N}{\alpha_0}\right) \tag{See constants in \cite{PRS17_PseudorandomnessRingLWEAny}}
    \end{align}
    \endgroup
    and if $\alpha_0 q > 2\sqrt{N}$, then the construction is secure if $\GapSVP_\gamma$ is hard. In particular, we are interesting in the regime in which $\delta_m$ is negligible (correctness) and in which $\gamma = \tilde{O}(2^{N^\eps})$ for some $\eps \in (0,1/2)$ (security).
  \end{lemmaE}
  \begin{proofE}
    For the first part of the theorem, the efficient generation and computation properties are trivial to check. The fact that the function is trapdoor and the property on the \XOR{} is a direct consequence of \cref{lem:conditionsFkDelta2To1}. The $\delta_m$-$2$-to-$1$ property comes from \cref{lem:conditionsFkDelta2To1} and \cref{lem:upperBoundDelta}. The method to efficiently create a uniform superposition of elements in $\ccX$ is given in \cref{rq:sampleXCube}.

    To prove the indistinguishability property on game $\INDFCT$, we assume that there exists an adversary $\cA$ that can win this game with a non-negligible advantage. Because $\cA$ has only access to $\left(\vect{A}, \vect{y}_0 = \vect{A} \vect{s}_0 + \vect{e}_0 + \frac{q}{2} \SmallBlockMatrix{\vect{0}^M}{\vect{d}_0^{(c)}}\right)$, we can use $\cA$ to break the decision-\LWE{} problem: given a challenge $(\vect{A}', \vect{y})$, we run the adversary and send to $\cA$ the couple $(\vect{A}', \vect{y}+ \frac{q}{2} \SmallBlockMatrix{\vect{0}^M}{\vect{d}_0^{(c)}})$. If the guess $\tilde{c}$ of $\cA$ equals $c$, we guess that we get the non-uniform distribution normal-$A_{\vect{s},\cD_{\Z,\alpha q}}$ (i.e.\ the distribution where $\vect{s}$ is also sampled according to $\cD_{\Z,\alpha q}$), otherwise we guess that we get the uniform distribution $U$. We remark that if $\vect{y}$ is a vector chosen uniformly at random, then the distribution of $\vect{y}+ \frac{q}{2} \SmallBlockMatrix{\vect{0}^M}{\vect{d}_0^{(c)}}$ is statistically uniform. So in that case, $\cA$ cannot guess $c$ with probability better than $1/2$. Now, if $y$ is sampled from normal-$A_{\vect{s},\cD_{\Z,\alpha q}}$, then $\cA$ must guess correctly the value of $c$ with non-negligible advantage, otherwise it means that we can distinguish a matrix obtained by $\MPGen$ from a uniform matrix (we already know it is not possible, see \cref{lem:MPGenIndistinguishableMatrix}). Therefore, the probability $p$ of guessing the correct distribution is:
    \begin{align}
      p &\geq \frac{1}{2} \times \frac{1}{2} + \frac{1}{2} \left(\frac{1}{2} + \negl[\lambda]\right) = \frac{1}{2} + \negl[\lambda]
    \end{align}
    which is absurd since we assumed the hardness of \LWE{}. Note that we don't exactly have an instance of decision-$\LWE$, because it is an instance of the normal version of decision-$\LWE$ ($\vect{s}_0$ is sampled according to the same distribution as $\vect{e}_0$). However, \cref{lem:normalLWEreduction} shows that the normal problem is harder, and keeping only $K$ samples can also only make the problem harder. Therefore, no adversary can win $\INDFCT$ with non-negligible advantage.

    For the second part, if we assume $\GapSVP_\gamma$ to be hard, then using \cref{lem:GapSVPtoLWE} we get that decision-$\LWE_{q, \cD_{\alpha_0 q}}$ is also hard ($\alpha_0 q > 2\sqrt{N}$, and $0 < \alpha_0 \leq \alpha < 1$). We can now discretize the distribution using \cref{cor:continuousToDiscrete} ($\lambda = N$) to obtain that decision-$\LWE_{q, \cD_{\Z, \alpha q}}$ is hard. Indeed, if we assume the existence of an adversary $\cA$ that can distinguish with non-negligible advantage the distribution $U$ from $A_{\vect{s},\cD_{\Z, \alpha q}}$ for a vector $\vect{s}$ chosen uniformly at random, then $\cA$ has also a non-negligible advantage in distinguishing $\cA_{\vect{s},\chi}$ where $\chi$ is the marginal distribution of $\vect{e}$ in \cref{cor:continuousToDiscrete} (otherwise we could use $\cA$ to distinguish $\chi$ from $\cD_{\Z, \alpha q}$, which is impossible because they are statistically negligibly close). But it also means that $\cA$ can also be used to break $\LWE_{q,\cD_{\alpha_0 q}}$ by first discretizing $\cD_{\alpha_0 q}$ (it works because the transformation given in \cref{cor:continuousToDiscrete} also maps the uniform distribution on itself). Since we already prove the security when we assume the hardness of decision-$\LWE_{q, \cD_{\Z, \alpha q}}$ above, the proof is finished.
  \end{proofE}
}

\publishedVsArxiv{
  \begin{theoremE}[{Existence of a \mbox{$\negl[\lambda]$-\AssumpFctNoDelta{}} family}]
    Let $\eps \in (0,\frac{1}{2})$ be a constant, and $\lambda \in \N$ be a security parameter. Let $n = \poly(\lambda) \in \N$ and $N \eqdef \lambda$. If we assume the hardness of the $\GapSVP_\gamma$ problem for any $\gamma = \tilde{O}(2^{N^\eps})$, then there exists a $\negl[\lambda]$-\AssumpFctNoDelta{} family of functions.
  \end{theoremE}
  The precise instantiation of the parameters and the proof require a more in depth analysis and can be found in \cref{appendix:fctConstruction} (notably in \cref{th:conditionsParam,thm:ExistanceAssumpFctNoDeltaAdvanced}). We show that it is possible to find a regime in which $\alpha$ is small enough to provide correctness, and big enough to provide security.
}{}

\inAppendixIfPublished{
  We prove now that there exists an instantiation that fulfills the requirements of \cref{th:conditionsParam}. Note that for simplicity, we only verify the properties asymptotically. Moreover, we do not attempt to give any particularly optimized construction (note that there is a tradeoff between security, correctness, and simplicity of the quantum superposition preparation circuit).
  \begin{theoremE}[{Existence of a \mbox{$\negl[\lambda]$-\AssumpFctNoDelta{}} family}]\label{thm:ExistanceAssumpFctNoDeltaAdvanced}
    Let $\eps \in (0,\frac{1}{2})$ be a constant, and $\lambda \in \N$ be a security parameter. Let $n = \poly[\lambda] \in \N$ and $N \eqdef \lambda$. If we assume the hardness of the $\GapSVP_\gamma$ problem for any $\gamma = \tilde{O}(2^{N^\eps})$, then there exists a $\negl[\lambda]$-\AssumpFctNoDelta{} family of functions. More precisely, if we define the fixed function $\omega(\sqrt{\log N}) \eqdef \log N$, $k \eqdef \floor{N^\eps}$, $q \eqdef 2^k$,
    \begin{align}
      \alpha \eqdef \frac{\sqrt{4N+\omega(\sqrt{\log N})^2+1}}{q}\label{eq:alphaOmegaN}
    \end{align}
    and $M, r_{\max}, \rsafe, \mu, \ccX, \delta_m$ as in \cref{th:conditionsParam}, the construction given in \cref{def:ourConstruction} is $\delta_m$-\AssumpFctNoDelta{} (for sufficiently large\footnote{The function may not be well defined for too small $\lambda$ because the input set may be empty. For example, with this instantiation, if we take $\eps = \frac{1}{3}$, the function is well defined for $N \geq 7 \times 10^5$. Moreover, when $N = 6 \times 10^6$, we get $k = 181$ and $\delta_m < 2^{-80}$. There is surely place for optimisation, but only existence matters here.} $\lambda$), with $\delta_m = \negl[\lambda]$.
  \end{theoremE}
  \begin{proofE}
    We just need to check that for sufficiently large $\lambda$ the properties of \cref{th:conditionsParam} are respected. Because $1/q = \negl[N]$, it is easy to see that for sufficiently large $\lambda$, $a \in (0,1)$ since $\alpha = \poly[N]/q$. Then, $\alpha q > \sqrt{4N + \omega(\sqrt{\log N})^2}$, so using notation from \cref{th:conditionsParam}, we directly get $\alpha_0 q > 2\sqrt{N}$. Moreover, multiplying \cref{eq:alphaOmegaN} by $q$, we get $\alpha q = \poly[N]$, so it means that $\alpha_0 = \poly[N]/q = \negl[\lambda]$. Therefore $\gamma = \tilde{O}(N/\alpha_0) = \tilde{O}(\sqrt{N}q) = \tilde{O}(\sqrt{N}2^{N^\eps}) = \tilde{O}(2^{N^\eps})$ which is assumed to be hard. Next, let us study $\mu$ and $\delta_m$. Because $\alpha q = \poly[n]$ and $q = 1/\negl[\lambda]$ is superpolynomial, that $r_{\max}$ is also superpolynomial, and same for $r_{safe}$, $\mu$ and $\floor{\mu-\alpha q \sqrt{N+M+N}}$ (we only substract or divide by terms that are $\poly[N]$). Therefore, for a large enough $\lambda$ ($=N$), we have $\floor{\mu-\alpha q \sqrt{N+M+N}} \geq 0$. Finally, $\delta_m = \poly[N]/(\mu + 1/2)$. But we showed that $\mu$ is superpolynomial, so $\frac{1}{\mu}$ is negligible, and therefore $\delta_m$ is also negligible, which ends the proof.
  \end{proofE}
}

\subsection{Generic construction to create distributable \AssumpFctCanPrime{} primitives from \AssumpFct{} primitives}\label{sec:compilerAssumpFctCan}

We prove in this section that we can create a distributable \AssumpFctCanPrime{} family of functions from a \AssumpFct{} family having a small assumption.

\begin{theoremE}[][end if published]\label{thm:compiler}
  Let $\delta \in [0,1]$, and $\{f_k\}_{k \in \cK}$ be a \AssumpFct{} family\footnote{In fact we only require this function to work when $\vect{d}_0$ is a single bit.} of functions, such that $f_k(x)$ can be written as $f_k((c,\bar{x}))$ with $c \in \{0,1\}$ a bit labelling the preimage\footnote{This is quite similar to the concept of claw-free functions used in \cite{mahadev2017classical}.}, i.e.\ such that when a given $y$ has exactly two preimages, one preimage has the form $(0,\bar{x})$ and the other $(1,\bar{x}')$. Then there exists a family $\{f_k'\}_{k \in \cK'}$ which is a distributable \AssumpFctCanPrime{} family with $\delta'=1-(1-\delta)^n \leq \delta n$. In particular, if $\delta$ is negligible (and $n$ polynomial) then $\delta'$ is negligible. Moreover, if the family $\{f_k\}$ admits a circuit $\HCheckHonestTrapdoor_\lambda(\vect{d}_0, t_k, k)$ that returns $1$ iff $t_k$ is the trapdoor of $k$, $k \in \cK$ and $\vect{d}_0 = \vect{d}_0(t_k)$, then there exists a function $\CheckHonestTrapdoor$ for $\{f_k'\}$ having the properties from \cref{def:GHZdistCapable}.

  The family $\{f'_k\}_{k \in \cK'}$ can be obtained by generating $n$ independent functions in $\{f_k\}$ (one for each $\vect{d}_0[i]$). \publishedVsArxiv{Here is}{More precisely, we define in \cref{fig:constructionFromSimpleGHZ}} the precise construction (where $\HGen, \HEnc, \HDec$ and $\HEval$ are coming from the family $\{f_k\}$)\publishedVsArxiv{:}{.}
  \begin{figure}[htb]
    \centering
    \begin{pchstack}[boxed, center,space=0.3cm]
      \begin{pcvstack}[space=0.3cm]
        {\normalfont\procedure[linenumbering]{$\GenLocal(1^\lambda, \vect{d}_0[i])$}{
            \publishedVsArxiv{
              \pcreturn (k^{(i)}, t_k^{(i)}) \gets \HGen(1^\lambda, \vect{d}_0[i])
            }{%
              (k^{(i)}, t_k^{(i)}) \gets \HGen(1^\lambda, \vect{d}_0[i])
              \pcreturn (k^{(i)}, t_k^{(i)})%
            }
          }}
        {\normalfont\procedure[linenumbering]{$\Eval_\cP((k^{(1)},\dots,k^{(n)}), (c,\bar{x}^{(1)},\dots,\bar{x}^{(n)}))$}{
            \pcreturn (\HEval(k^{(1)},(c,\bar{x}^{(1)})),\\
            \t \dots,\HEval(k^{(n)},(c,\bar{x}^{(n)})))
          }}
        {\normalfont\procedure[linenumbering]{$\partialInfoLocal{}(t_k^{(i)}, y)$}{
            \pcif \vect{d}_0(t_k^{(i)}) = 0 \pcthen \pcreturn \cross \  \pcfi\\
            \{(0,\bar{x}),(1,\bar{x}')\} \gets \HDec(t_k^{(i),y})\\
            \pcif \bar{x} = \bot \text{ or } \bar{x}' = \bot\  \pcthen \pcreturn \bot\\
            \pcreturn \Hh((0,\bar{x}))
          }}
      \end{pcvstack}
      \begin{pcvstack}[space=0.3cm]
        {\normalfont\procedure[linenumbering]{$h((x^{(1)},\dots,x^{(n)}))$}{
            \pcreturn h(x^{1}) | \dots | h(x^{n})
          }}
        {\normalfont\procedure[linenumbering]{$\partialAlpha(i, t_k^{(i)},y ,b)$}{
            (y^{(1)},\dots,y^{(n)}) \eqdef y; \quad (b_c, b^{(1), \dots, b^{(n)}}) \eqdef b\\
            \{(0,\bar{x}),(1,\bar{x}')\} \gets \HDec(t_k^{(i),y})\\
            \pcif \bar{x} = \bot \text{ or } \bar{x}' = \bot \pcthen \pcreturn \bot \ \pcfi\\
            \pcif i = 1 \pcthen \pcreturn b_c \xor \langle b^{(i)}, \bar{x}\xor\bar{x'}\rangle\\
            \pcelse \pcreturn \langle b^{(i)}, \bar{x}\xor\bar{x'}\rangle \ \pcfi
          }}
        {\normalfont\procedure[linenumbering]{$\CheckHonestTrapdoor_\lambda(\vect{d}_0[i], t_k^{(i)}, k^{(i)})$}{
            \pcreturn \HCheckHonestTrapdoor_\lambda(\vect{d}_0[i], t_k^{(i)},k^{(i)})
          }}
      \end{pcvstack}
    \end{pchstack}
    \publishedVsArxiv{}{ % Save space for one more reference :-D
      \caption{Construction distributable \AssumpFctCanPrime{} family}
      \label{fig:constructionFromSimpleGHZ}
    }%
  \end{figure}
\end{theoremE}
\begin{proofE}
  Most of the properties are simple to check. We just precise the $\delta'$-$2$-to-$1$ proof and blindness. First, we show that the function $f_k'$ are $\delta'$-$2$-to-$1$ with $\delta = 1-(1-\delta)^n$. Let $\diese_2(f)$ be the number of images having exactly $2$ preimages by $f$ (we will call this kind of preimages ``twin''), and $|\cK|$ the number of elements in $\cK$. Then by definition, for all $k$, $1- \delta \leq \diese_2(f_k)/|\cX|$. Let $k' = (k^{(1)},\dots,k^{(n)}) \in \cK^n = \cK'$, we want to show that $1-\delta'\eqdef (1-\delta)^n \leq \diese_2(f_{k'}')/(|\cX'|)$. First, we compute $\diese_2(f_{k'})$. Because of the assumption of the shape $(0,\bar{x})$ and $(1,\bar{x}')$ of all the couples of preimages, we can define for any $k^{(i)}$ the sets
  \begin{align}
    A_0^{(i)} \eqdef \{\bar{x} \mid f_{k^{(i)}}^{-1}(f_{k^{(i)}}(x)) = \{(0,\bar{x}), (1,\bar{x}')\}\}
  \end{align}
  and
  \begin{align}
    A_1^{(i)} \eqdef \{\bar{x}' \mid f_{k^{(i)}}^{-1}(f_{k^{(i)}}(x)) = \{(0,\bar{x}), (1,\bar{x}')\}\}
  \end{align}
  Moreover, due to this same condition, we have $|A_0^{(i)}| = |A_1^{(i)}| = \tfrac12 \diese_2(f_{k^{(i)}})$. Now, we compute a lower bound on the number of twin preimages of $f_{k'}$. Let $(\bar{x}^{(1)}, \dots, \bar{x}^{(n)}) \in A_0^{(1)} \times \dots \times A_0^{(n)}$: then for all $i$ there exists a unique $\bar{x}^{(1)\prime} \in A_1^{(i)}$ such that $f_{k^{(i)}}((0,\bar{x}^{(i)})) = f_{k^{(i)}}((1,\bar{x}^{(i)})$. So
  \begin{align}
    f_{k'}(0,\bar{x}^{(1)},\dots,\bar{x}^{(n)})
    &= f_{k^{(1)}}(0,\bar{x}^{(1)}) | \dots |f_{k^{(n)}}(0,\bar{x}^{(n)})\\
    &= f_{k^{(1)}}(1,\bar{x}^{(1)\prime}) | \dots |f_{k^{(n)}}(1,\bar{x}^{(n)\prime})\\
    &= f_{k'}(1,\bar{x}^{(1)\prime},\dots,\bar{x}^{(n)\prime})
  \end{align}

  So we found at least one different preimage, and due to the uniqueness of the above $\bar{x}^{(1)\prime}$ it is the only second preimage. Therefore:
  \begin{align}
    \frac{\diese_2(f_{k'})}{|\cX'|}
    &= \frac{2 \times |A_0^{(1)}| \times \dots \times |A_0^{(n)}|}{2 \left(\frac{|\cX|}{2}\right)^n}\\
    &= \frac{\diese_2(f_{k^{(1)}}) \times \dots \times \diese_2(f_{k^{(n)}})}{(|\cX|)^n}\\
    &\geq (1-\delta)^n
  \end{align}
  Which concludes the proof.

  To prove the inequality, we use the Bernoulli's inequality: since $\delta \in [0,1]$ and $n$ is a non-negative integer, we get: $(1-\delta)^n \geq 1-\delta n$, so
  \begin{align}
    \delta' = 1-(1-\delta)^n \leq 1-(1-\delta n) = \delta n
  \end{align}

  Since the keys of $\{f_k'\}$ are keys of $\{f_k'\}$ (except that $\vect{d}_0$ is a single bit), the properties of $\CheckHonestTrapdoor$ come directly from the properties of $\HCheckHonestTrapdoor$. All the other correctness properties are true by construction.

  The security is quite intuitive: since all trapdoors are independently sampled, if one can learn information about the $d_0$ sampled by another party, then it can break the $\INDFCT_{\Gen{}} ^\adv (\lambda) $ game of $f_k$. More formally, because of the properties on $\GenLocal$, the game $\INDPARTIAL_{\Gen{},\partialInfo{}} ^\adv (\lambda) $ can equivalently be rewritten as follows:
  \begin {pcimage}
    {\normalfont\game[linenumbering]{$\texttt{GAME1}$}{%
        (\cM, \vect{d}_0^{(0)},\vect{d}_0^{(1)}) \gets \cA_1(1^\lambda)\\
        \label{my:line:proofRedGame1Cond}\pcif \exists i \in \cM, \vect{d}_0^{(0)}[i] \neq \vect{d}_0^{(1)}[i]: \pcreturn \pcfalse\ \pcfi\\
        c \sample \bin\\
        \label{my:line:proofRedGame1GenLocal}\forall i, (k^{(i)}, t_k^{(i)}) \gets \HGen(1^\lambda, \vect{d}_0^{(c)}[i])\\
        y \gets \cA_2(k^{(1)},\dots,k^{(n)})\\
        \forall i, v[i] \gets \partialInfoLocal(i, t_k^{(i)}, y)\\
        \tilde{c} \gets \cA_3(\{(i,v[i])\}_{i \in \cM})\\
        \pcreturn \tilde{c} = c
      }}
  \end{pcimage}
  Then, we can define the following game in which the sampling of malicious $t_k^{(i)}$, the computing of $\partialInfoLocal$ and the initial condition are removed:
  \begin {pcimage}
    {\normalfont\game[linenumbering]{$\texttt{GAME2}$}{%
        (\cM, \vect{d}_0^{(0)},\vect{d}_0^{(1)}) \gets \cA_1(1^\lambda)\\
        c \sample \bin\\
        \label{my:line:proofRedGame2hgen}\forall i \in \cM, (k^{(i)}, t_k^{(i)}) \gets \HGen(1^\lambda, \vect{d}_0^{(c)}[i])\\
        \tilde{c} \gets \cA_2(\{k^{(i)}\}_{i \in \cM})\\
        \pcreturn \tilde{c} = c
      }}
  \end{pcimage}
  Then, because $\cA_2$ in \texttt{GAME2} can do itself the sampling and computing done in \texttt{GAME1}, and because the removing the condition line \ref{my:line:proofRedGame1Cond} can only increase the probability of winning the game, we have:
  \begin{align}
    \max_{\QPT{} \cA} \pr{\texttt{GAME1}^\cA}(\lambda) \leq \max_{\QPT{} \cA'} \pr{\texttt{GAME2}^{\cA'}}(\lambda)
  \end{align}
  Then, we define a series a hybrid games in which we gradually replace the $\HGen(1^\lambda,d_0^{(c)}[i])$'s with $\HGen(1^\lambda,0)$, which $\cZ$ starting from $\emptyset$ until $\cZ = \cM$.
  \begin {pcimage}
    {\normalfont\game[linenumbering,lnstart=\getrefnumber{my:line:proofRedGame2hgen}-1]{$\texttt{GAME3}_\cZ$}{%
        \hcancel[red]{\forall i \in \cM, (k^{(i)}, t_k^{(i)}) \gets \HGen(1^\lambda, \vect{d}_0^{(c)}[i])}\pcskipln\\
        \textcolor{green!50!black}{\forall i \in \cM, \pcif i \in \cZ \pcthen x \eqdef 0 \ \pcelse x \eqdef \vect{d}_0^{(c)}[i] \ \pcfi; (k^{(i)}, t_k^{(i)}) \gets \HGen(1^\lambda, x)}
      }}
  \end{pcimage}
  This is possible because the game $\INDFCT$ of $\HGen$ can be seen as a \CPA{} secure encryption (where $\Gen$ is the concatenation of the key generation and encryption of $\vect{d}_0$), itself equivalent to semantic security: i.e.\ the ciphertext does not give any advice on the clear text. Therefore we can replace the clear text with $0$ for example, which gives for all $j$:
  \begin{align}
    \max_{\QPT{} \cA} \pr{\texttt{GAME3}_{\cZ}^\cA}(\lambda) \leq \max_{\QPT{} \cA'} \pr{\texttt{GAME3}_{\cZ \cup \{j\}}^{\cA'}}(\lambda)
  \end{align}
  At the end of the hybrid series, when $\cZ = \cM$, we get a final game where no information about $c$ are given to the adversary: it is therefore impossible to win this game $\texttt{GAME3}_\cM$ with probability better than $\frac{1}{2}$. Therefore we get:
  \begin{align}
    \max_{\QPT{} \cA} \pr{\INDPARTIAL_{\Gen{},\partialInfo{}} ^\adv}(\lambda) \leq \frac{1}{2} + \negl[\lambda]
  \end{align}
  Which ends the proof.

  Finally, the security of the game $\INDPARTIAL$ implies the security of $\INDFCT$ since when $\cM = \emptyset$, both games are equivalent.
\end{proofE}

\arxivOnly{
  \section{Acknowledgement}
  We thank AirFrance for the unscheduled one night delay of a flight which was instrumental in starting this line of research. We are also very grateful to Alex Grilo, Chris Peikert, Florian Bourse, Geoffroy Couteau, the StackExchange user \href{https://crypto.stackexchange.com/users/45690/mark}{Mark}, Michael Reichle, Thomas Vidick and Romain Gay for very valuable discussions. This work has been supported in part by the French ANR Project ANR-18-CE39-0015 (CryptiQ), the French ANR project ANR-18-CE47-0010 (QUDATA), the EU Flagship Quantum Internet Alliance (QIA) project and the European Union as H2020 Programme under grant agreement number ERC-669891. EK acknowledges support from the EPSRC Verification of Quantum Technology grant (EP/N003829/1), the EPSRC Hub in Quantum Computing and Simulation (EP/T001062/1), and the UK Quantum Technology Hub: NQIT grant (EP/M013243/1).
}

% For the published version, we write F. Name instead of Firstname Name. Do not save more than 10 lines
% on 3 pages... But we need for the short version, otherwise the last figure appears in the middle of
% the references.
%\publishedVsArxiv{\bibliographystyle{alphaabbrv}}{\bibliographystyle{alpha}}
% For arxiv, the bib file must match the main file name...
% \bibliography{main}
\publishedVsArxiv{\printbibliography[segment=0,heading=bibintoc]}{\printbibliography[heading=bibintoc]}
%\publishedOnly{
  \newpage
  \appendix
  \newrefsegment % <==== useful to have separate references in appendix, this increases the refsegment number (default is 0)
  \begin{center}
    \Huge{\textsc{Supplementary Material}}
  \end{center}
  \publishedOnly{\printbibliography[title=Appendix References,filter=appendixOnlyFilter]}

@inproceedings{Regev2005,
  doi = {10.1145/1060590.1060603},
  url = {https://doi.org/10.1145/1060590.1060603},
  year = {2005},
  publisher = {{ACM} Press},
  author = {Oded Regev},
  title = {On lattices,  learning with errors,  random linear codes,  and cryptography},
  booktitle = {Proceedings of the thirty-seventh annual {ACM} symposium on Theory of computing  - {STOC} {\textquotesingle}05}
}

@book{nielsen2002quantum,
  title={Quantum {C}omputation and {Q}uantum {I}nformation},
  author={Nielsen, Michael A and Chuang, Isaac},
  year={2000},
  publisher={Cambridge University Press}
}

@article{peikert2016decade,
  author = {Peikert, Chris},
  title = {A Decade of Lattice Cryptography},
  year = {2016},
  issue_date = {03 2016},
  publisher = {Now Publishers Inc.},
  address = {Hanover, MA, USA},
  volume = {10},
  number = {4},
  issn = {1551-305X},
  url = {https://doi.org/10.1561/0400000074},
  doi = {10.1561/0400000074},
  abstract = {Lattice-based cryptography is the use of conjectured hard problems on point lattices in Rn as the foundation for secure cryptographic systems. Attractive features of lattice cryptography include apparent resistance to quantum attacks in contrast with most number-theoretic cryptography, high asymptotic efficiency and parallelism, security under worst-case intractability assumptions, and solutions to long-standing open problems in cryptography. This work surveys most of the major developments in lattice cryptography over the past ten years. The main focus is on the foundational short integer solution SIS and learning with errors LWE problems and their more efficient ring-based variants, their provable hardness assuming the worst-case intractability of standard lattice problems, and their many cryptographic applications.},
  journal = {Found. Trends Theor. Comput. Sci.},
  month = mar,
  pages = {283–424},
  numpages = {142}
}

@inproceedings{mahadev2017classical,
  author    = {Urmila Mahadev},
  editor    = {Mikkel Thorup},
  title     = {Classical Homomorphic Encryption for Quantum Circuits},
  booktitle = {59th {IEEE} Annual Symposium on Foundations of Computer Science, {FOCS}
               2018, Paris, France, October 7-9, 2018},
  pages     = {332--338},
  publisher = {{IEEE} Computer Society},
  year      = {2018}
}

@inproceedings{mahadev2018classical,
  author    = {Urmila Mahadev},
  editor    = {Mikkel Thorup},
  title     = {Classical Verification of Quantum Computations},
  booktitle = {59th {IEEE} Annual Symposium on Foundations of Computer Science, {FOCS}
               2018, Paris, France, October 7-9, 2018},
  pages     = {259--267},
  publisher = {{IEEE} Computer Society},
  year      = {2018}
  }

@InProceedings{CCKW_2019_qfactory,
author="Cojocaru, Alexandru
and Colisson, L{\'e}o
and Kashefi, Elham
and Wallden, Petros",
editor="Galbraith, Steven D.
and Moriai, Shiho",
title="QFactory: Classically-Instructed Remote Secret Qubits Preparation",
booktitle="Advances in Cryptology -- ASIACRYPT 2019",
year="2019",
publisher="Springer International Publishing",
pages="615--645"
}

@Article{MP11,
  author       = {Micciancio, Daniele and Peikert, Chris},
  title        = {Trapdoors for Lattices: Simpler, Tighter, Faster,
                  Smaller},
  year         = 2012,
  pages        = {700–718},
  issn         = {1611-3349},
  doi          = {10.1007/978-3-642-29011-4_41},
  url          = {http://dx.doi.org/10.1007/978-3-642-29011-4_41},
  isbn         = 9783642290114,
  journal      = {Lecture Notes in Computer Science},
  publisher    = {Springer Berlin Heidelberg}
}

@article{CCKW18,
  title={On the possibility of classical client blind quantum computing},
  author={Cojocaru, Alexandru and Colisson, L{\'e}o and Kashefi, Elham and Wallden, Petros},
  journal={Cryptography},
  volume={5},
  number={1},
  pages={3},
  year={2021},
  publisher={Multidisciplinary Digital Publishing Institute}
}

@inproceedings{ABG+21_PostQuantumMultiPartyComputation,
  title = {Post-{{Quantum Multi}}-{{Party Computation}}},
  booktitle = {Advances in {{Cryptology}} \textendash{} {{EUROCRYPT}} 2021},
  author = {Agarwal, Amit and Bartusek, James and Goyal, Vipul and Khurana, Dakshita and Malavolta, Giulio},
  editor = {Canteaut, Anne and Standaert, Fran{\c c}ois-Xavier},
  year = {2021},
  series = {Lecture {{Notes}} in {{Computer Science}}},
  pages = {435--464},
  publisher = {{Springer International Publishing}},
  address = {{Cham}},
  doi = {10.1007/978-3-030-77870-5_16},
  abstract = {We initiate the study of multi-party computation for classical functionalities in the plain model, with security against malicious quantum adversaries. We observe that existing techniques readily give a polynomial-round protocol, but our main result is a construction of constant-round post-quantum multi-party computation. We assume mildly super-polynomial quantum hardness of learning with errors (LWE), and quantum polynomial hardness of an LWE-based circular security assumption. Along the way, we develop the following cryptographic primitives that may be of independent interest: A spooky encryption scheme for relations computable by quantum circuits, from the quantum hardness of (a circular variant of) the LWE problem. This immediately yields the first quantum multi-key fully-homomorphic encryption scheme with classical keys. A constant-round post-quantum non-malleable commitment scheme, from the mildly super-polynomial quantum hardness of LWE. To prove the security of our protocol, we develop a new straight-line non-black-box simulation technique against parallel sessions that does not clone the adversary's state. This technique may also be relevant to the classical setting.},
  isbn = {978-3-030-77870-5},
  language = {en},
  file = {/home/leo/Zotero/storage/E9E96PYE/Agarwal et al. - 2021 - Post-Quantum Multi-Party Computation.pdf}
}

@inproceedings{BS_2019_ZK_Ct_rounds,
  title={Post-quantum zero knowledge in constant rounds},
  author={Bitansky, Nir and Shmueli, Omri},
  booktitle={Proceedings of the 52nd Annual ACM SIGACT Symposium on Theory of Computing},
  pages={269--279},
  year={2020}
}

@InProceedings{Unruh_2012_Proof_of_Knowledge,
    author="Unruh, Dominique",
    editor="Pointcheval, David
    and Johansson, Thomas",
    title="Quantum Proofs of Knowledge",
    booktitle="Advances in Cryptology -- EUROCRYPT 2012",
    year="2012",
    publisher="Springer Berlin Heidelberg",
    address="Berlin, Heidelberg",
    pages="135--152",
    abstract="We motivate, define and construct quantum proofs of knowledge, proofs of knowledge secure against quantum adversaries. Our constructions are based on a new quantum rewinding technique that allows us to extract witnesses in many classical proofs of knowledge. We give criteria under which a classical proof of knowledge is a quantum proof of knowledge. Combining our results with Watrous' results on quantum zero-knowledge, we show that there are zero-knowledge quantum proofs of knowledge for all languages in NP (assuming quantum 1-1 one-way functions).",
    isbn="978-3-642-29011-4"
}

@inproceedings{GCV_2008_Trapdoors,
  author = {Gentry, Craig and Peikert, Chris and Vaikuntanathan, Vinod},
  title = {Trapdoors for Hard Lattices and New Cryptographic Constructions},
  year = {2008},
  isbn = {9781605580470},
  publisher = {Association for Computing Machinery},
  address = {New York, NY, USA},
  url = {https://doi.org/10.1145/1374376.1374407},
  doi = {10.1145/1374376.1374407},
  abstract = {We show how to construct a variety of "trapdoor" cryptographic tools assuming the worst-case hardness of standard lattice problems (such as approximating the length of the shortest nonzero vector to within certain polynomial factors). Our contributions include a new notion of trapdoor function with preimage sampling, simple and efficient "hash-and-sign" digital signature schemes, and identity-based encryption. A core technical component of our constructions is an efficient algorithm that, given a basis of an arbitrary lattice, samples lattice points from a discrete Gaussian probability distribution whose standard deviation is essentially the length of the longest Gram-Schmidt vector of the basis. A crucial security property is that the output distribution of the algorithm is oblivious to the particular geometry of the given basis.},
  booktitle = {Proceedings of the Fortieth Annual ACM Symposium on Theory of Computing},
  pages = {197–206},
  numpages = {10},
  keywords = {trapdoor functions, lattice-based cryptography},
  location = {Victoria, British Columbia, Canada},
  series = {STOC '08}
}

@inproceedings{ACPS_2009_Fast_crypto_primitives_circular_secure_enc,
  title={Fast Cryptographic Primitives and Circular-Secure Encryption Based on Hard Learning Problems},
  booktitle={Advances in Cryptology - CRYPTO 2009, 29th Annual International Cryptology Conference},
  series={Lecture Notes in Computer Science},
  publisher={Springer},
  volume={5677},
  pages={595-618},
  url={https://www.iacr.org/archive/crypto2009/56770585/56770585.pdf},
  doi={10.1007/978-3-642-03356-8_35},
  author={Benny Applebaum and David Cash and Chris Peikert and Amit Sahai},
  year=2009
}

@article{APS_2015_Concrete_hardness_LWE,
  author = {Albrecht, Martin and Player, Rachel and Scott, Sam},
  year = {2015},
  month = {10},
  pages = {},
  title = {On the concrete hardness of Learning with Errors},
  volume = {9},
  journal = {Journal of Mathematical Cryptology},
  doi = {10.1515/jmc-2015-0016}
}

@article{Schnorr_1987_Hierarchy,
  author = {Schnorr, C.P.},
  title = {A Hierarchy of Polynomial Time Lattice Basis Reduction Algorithms},
  year = {1987},
  issue_date = {June 1987},
  publisher = {Elsevier Science Publishers Ltd.},
  address = {GBR},
  volume = {53},
  number = {2},
  issn = {0304-3975},
  abstract = {We present a hierarchy of polynomial time lattice basis reduction algorithms that stretch from Lenstra, Lenstra, Lov\'{a}sz reduction to Korkine Zolotareff reduction. Let (L) be the length of a shortest nonzero element of a lattice L. We present an algorithm which for k N finds a nonzero lattice vector b so that |b|2 (6k2)nk (L)2. This algorithm uses O(n2(kk+o(k))+n2)log B) arithmetic operations on O(n log B)-bit integers. This holds provided that the given basis vectors b1, ,bn Zn are integral and have the length bound B. This algorithm successively applies Korkine Zolotareff reduction to blocks of length k of the lattice basis. We also improve Kannan's algorithm for Korkine-Zolotareff reduction.},
  journal = {Theor. Comput. Sci.},
  month = jun,
  pages = {201–224},
  numpages = {24}
}

@InProceedings{BGGHNSVV_2014_FHE_ABE,
  author="Boneh, Dan
  and Gentry, Craig
  and Gorbunov, Sergey
  and Halevi, Shai
  and Nikolaenko, Valeria
  and Segev, Gil
  and Vaikuntanathan, Vinod
  and Vinayagamurthy, Dhinakaran",
  editor="Nguyen, Phong Q.
  and Oswald, Elisabeth",
  title="Fully Key-Homomorphic Encryption, Arithmetic Circuit ABE and Compact Garbled Circuits",
  booktitle="Advances in Cryptology -- EUROCRYPT 2014",
  year="2014",
  publisher="Springer Berlin Heidelberg",
  address="Berlin, Heidelberg",
  pages="533--556",
  abstract="We construct the first (key-policy) attribute-based encryption (ABE) system with short secret keys: the size of keys in our system depends only on the depth of the policy circuit, not its size. Our constructions extend naturally to arithmetic circuits with arbitrary fan-in gates thereby further reducing the circuit depth. Building on this ABE system we obtain the first reusable circuit garbling scheme that produces garbled circuits whose size is the same as the original circuit plus an additive poly($\lambda$,d) bits, where $\lambda$ is the security parameter and d is the circuit depth. All previous constructions incurred a multiplicativepoly($\lambda$) blowup.",
  isbn="978-3-642-55220-5"
}

@article{Watrous_2005_ZK_against_q_attacks,
  title={Zero-knowledge against quantum attacks},
  author={Watrous, John},
  journal={SIAM Journal on Computing},
  volume={39},
  number={1},
  pages={25--58},
  year={2009},
  publisher={SIAM}
}

@incollection{GHZ_1989_going,
  title={Going beyond {B}ell’s theorem},
  author={Greenberger, Daniel M and Horne, Michael A and Zeilinger, Anton},
  booktitle={Bell’s theorem, quantum theory and conceptions of the universe},
  pages={69--72},
  year={1989},
  publisher={Springer}
}

@article{hillery1999quantum,
  title={Quantum secret sharing},
  author={Hillery, Mark and Bu{\v{z}}ek, Vladim\'{i}r and Berthiaume, Andr{\'e}},
  journal={Physical Review A},
  volume={59},
  number={3},
  pages={1829},
  year={1999},
  publisher={APS}
}

@inproceedings{CW_2005_quantum_anonymous_transmissions,
  title={Quantum anonymous transmissions},
  author={Christandl, Matthias and Wehner, Stephanie},
  booktitle={International Conference on the Theory and Application of Cryptology and Information Security},
  pages={217--235},
  year={2005},
  organization={Springer}
}

@book{Goldreich_2004_Foundation_cryptography,
  author = {Goldreich, Oded},
  year = {2004},
  title = {Foundations of cryptography. {II}: Basic applications},
  isbn = {9780521830843},
  doi = {10.1017/CBO9780511721656}
}

@article{GMR_1989_Knowledge_interactive_proofs,
  author = {Goldwasser, Shafi and Micali, Silvio and Rackoff, Charles},
  title = {The Knowledge Complexity of Interactive Proof Systems},
  journal = {SIAM Journal on Computing},
  volume = {18},
  number = {1},
  pages = {186-208},
  year = {1989},
  doi = {10.1137/0218012},
  URL = {https://doi.org/10.1137/0218012},
  eprint = {https://doi.org/10.1137/0218012}
}

@article{BM_1988_Arthur_Merlin,
title = {Arthur-Merlin games: A randomized proof system, and a hierarchy of complexity classes},
journal = {Journal of Computer and System Sciences},
volume = {36},
number = {2},
pages = {254-276},
year = {1988},
issn = {0022-0000},
doi = {https://doi.org/10.1016/0022-0000(88)90028-1},
url = {https://www.sciencedirect.com/science/article/pii/0022000088900281},
author = {László Babai and Shlomo Moran}
}

@article{GMW_1991_Proof_yield_nothing_but_validity,
author = {Goldreich, Oded and Micali, Silvio and Wigderson, Avi},
title = {Proofs That Yield Nothing but Their Validity or All Languages in NP Have Zero-Knowledge Proof Systems},
year = {1991},
issue_date = {July 1991},
publisher = {Association for Computing Machinery},
address = {New York, NY, USA},
volume = {38},
number = {3},
issn = {0004-5411},
url = {https://doi.org/10.1145/116825.116852},
doi = {10.1145/116825.116852},
journal = {J. ACM},
month = jul,
pages = {690–728},
numpages = {39},
keywords = {cryptographic protocols, proof systems, zero-knowledge, NP, one-way functions, methodological design of protocols, graph isomorphism, fault tolerant distributed computing, interactive proofs}
}

@article{IY_1987_Direct_minimum_knowledge_computations,
  author = {Impagliazzo, Russell and Yung, Moti},
  year = {1987},
  month = {08},
  pages = {40-51},
  title = {Direct Minimum-Knowledge Computations.},
  volume = {293},
  journal = {Lecture Notes in Computer Science},
  doi = {10.1007/3-540-48184-2_4}
}

@article{VW_2016_quantum_proofs,
  doi = {10.1561/0400000068},
  year = {2016},
  volume = {11},
  journal = {Foundations and Trends in Theoretical Computer Science},
  title = {Quantum Proofs},
  doi = {10.1561/0400000068},
  issn = {1551-305X},
  number = {1-2},
  pages = {1-215},
  author = {Thomas Vidick and John Watrous},
  eprint={1610.01664}
}

@article{VZ_2020_classical_ZK_for_Quantum_comp,
  title={Classical zero-knowledge arguments for quantum computations},
  author={Vidick, Thomas and Zhang, Tina},
  journal={Quantum},
  volume={4},
  pages={266},
  year={2020},
  publisher={Verein zur F{\"o}rderung des Open Access Publizierens in den Quantenwissenschaften}
}

@inproceedings{BW_2016_zk_qma,
  title={Zero-knowledge proof systems for QMA},
  author={Broadbent, Anne and Ji, Zhengfeng and Song, Fang and Watrous, John},
  booktitle={2016 IEEE 57th Annual Symposium on Foundations of Computer Science (FOCS)},
  pages={31--40},
  year={2016},
  organization={IEEE}
}

@inproceedings{GSY_2019_perfect_zk_multiprovers,
  title={Perfect zero knowledge for quantum multiprover interactive proofs},
  author={Grilo, Alex B and Slofstra, William and Yuen, Henry},
  booktitle={2019 IEEE 60th Annual Symposium on Foundations of Computer Science (FOCS)},
  pages={611--635},
  year={2019},
  organization={IEEE}
}

@inproceedings{CVZ_2020_non_interactive_zk_qma,
  title={Non-interactive zero-knowledge arguments for QMA, with preprocessing},
  author={Coladangelo, Andrea and Vidick, Thomas and Zhang, Tina},
  booktitle={Annual International Cryptology Conference},
  pages={799--828},
  year={2020},
  organization={Springer}
}

@inproceedings{BCMVV_2018_crypto_test_quantumness,
  title={A cryptographic test of quantumness and certifiable randomness from a single quantum device},
  author={Brakerski, Zvika and Christiano, Paul and Mahadev, Urmila and Vazirani, Umesh and Vidick, Thomas},
  booktitle={2018 IEEE 59th Annual Symposium on Foundations of Computer Science (FOCS)},
  pages={320--331},
  year={2018},
  organization={IEEE}
}

@article{MDCA_2020_device_independent_QKD_comp_assump,
  title={Device-independent quantum key distribution from computational assumptions},
  author={Metger, Tony and Dulek, Yfke and Coladangelo, Andrea and Arnon-Friedman, Rotem},
  journal={arXiv preprint arXiv:2010.04175},
  year={2020}
}

@inproceedings{brakerski_2018_quantum_FHE,
  title={Quantum FHE (almost) as secure as classical},
  author={Brakerski, Zvika},
  booktitle={Annual International Cryptology Conference},
  pages={67--95},
  year={2018},
  organization={Springer}
}

@inproceedings{BCCKLMW_2020_security_limitations,
  title={Security limitations of classical-client delegated quantum computing},
  author={Badertscher, Christian and Cojocaru, Alexandru and Colisson, L{\'e}o and Kashefi, Elham and Leichtle, Dominik and Mantri, Atul and Wallden, Petros},
  booktitle={International Conference on the Theory and Application of Cryptology and Information Security},
  pages={667--696},
  year={2020},
  organization={Springer}
}

@article{BBGCJPW_1993_teleporting,
  title={Teleporting an unknown quantum state via dual classical and Einstein-Podolsky-Rosen channels},
  author={Bennett, Charles H and Brassard, Gilles and Cr{\'e}peau, Claude and Jozsa, Richard and Peres, Asher and Wootters, William K},
  journal={Physical Review Letters},
  volume={70},
  number={13},
  pages={1895},
  year={1993},
  publisher={APS}
}

@article{BBPS_1996_concentrating_partial_entanglement,
  title={Concentrating partial entanglement by local operations},
  author={Bennett, Charles H and Bernstein, Herbert J and Popescu, Sandu and Schumacher, Benjamin},
  journal={Physical Review A},
  volume={53},
  number={4},
  pages={2046},
  year={1996},
  publisher={APS}
}

@article{BDSW_1996_mixed_state_entanglement_error_code,
  title={Mixed-state entanglement and quantum error correction},
  author={Bennett, Charles H and DiVincenzo, David P and Smolin, John A and Wootters, William K},
  journal={Physical Review A},
  volume={54},
  number={5},
  pages={3824},
  year={1996},
  publisher={APS}
}

@article{PWD_2018_Modular,
  title = {Modular architectures for quantum networks},
  author = {Priker, A. and Walln\"ofer, J. and D\"ur, W.},
  journal = {New Journal of Physics},
  volume = {20},
  year = {2018},
  pages = {053054},
  month = {May},
  publisher = {IOP},
  doi = {10.1088/1367-2630/aac2aa}, 
  }

@article{MT_2020_trusted_center_rsp,
  title={Trusted center verification model and classical channel remote state preparation},
  author={Morimae, Tomoyuki and Takeuchi, Yuki},
  journal={arXiv preprint arXiv:2008.05033},
  year={2020}
}

@InProceedings{ACPS_2009_Fast_crypto_circular_enc,
  author="Applebaum, Benny
  and Cash, David
  and Peikert, Chris
  and Sahai, Amit",
  editor="Halevi, Shai",
  title="Fast Cryptographic Primitives and Circular-Secure Encryption Based on Hard Learning Problems",
  booktitle="Advances in Cryptology - CRYPTO 2009",
  year="2009",
  publisher="Springer Berlin Heidelberg",
  address="Berlin, Heidelberg",
  pages="595--618",
  abstract="The well-studied task of learning a linear function with errors is a seemingly hard problem and the basis for several cryptographic schemes. Here we demonstrate additional applications that enjoy strong security properties and a high level of efficiency. Namely, we construct: 1Public-key and symmetric-key cryptosystems that provide security for key-dependent messages and enjoy circular security. Our schemes are highly efficient: in both cases the ciphertext is only a constant factor larger than the plaintext, and the cost of encryption and decryption is only n{\textperiodcentered}polylog(n) bit operations per message symbol in the public-key case, and polylog(n) bit operations in the symmetric-case.1Two efficient pseudorandom objects: a ``weak randomized pseudorandom function'' --- a relaxation of standard PRF --- that can be computed obliviously via a simple protocol, and a length-doubling pseudorandom generator that can be computed by a circuit of n {\textperiodcentered}polylog(n) size. The complexity of our pseudorandom generator almost matches the complexity of the fastest known construction (Applebaum et al., RANDOM 2006), which runs in linear time at the expense of relying on a nonstandard intractability assumption.",
  isbn="978-3-642-03356-8"
}

@incollection{MR_2009_lattice_based_crypto,
  title={Lattice-based cryptography},
  author={Micciancio, Daniele and Regev, Oded},
  booktitle={Post-quantum cryptography},
  pages={147--191},
  year={2009},
  publisher={Springer}
}

@article{PW_2011_lossy_trapdoor,
  title={Lossy trapdoor functions and their applications},
  author={Peikert, Chris and Waters, Brent},
  journal={SIAM Journal on Computing},
  volume={40},
  number={6},
  pages={1803--1844},
  year={2011},
  publisher={SIAM}
}

@article{Ban_1993,
  title={New bounds in some transference theorems in the geometry of numbers},
  author={Banaszczyk, Wojciech},
  journal={Mathematische Annalen},
  volume={296},
  number={1},
  pages={625--635},
  year={1993},
  publisher={Springer-Verlag}
}

@incollection{Pei10_EfficientParallelGaussian,
  title = {An {{Efficient}} and {{Parallel Gaussian Sampler}} for {{Lattices}}},
  booktitle = {Advances in {{Cryptology}} \textendash{} {{CRYPTO}} 2010},
  author = {Peikert, Chris},
  editor = {Hutchison, David and Kanade, Takeo and Kittler, Josef and Kleinberg, Jon M. and Mattern, Friedemann and Mitchell, John C. and Naor, Moni and Nierstrasz, Oscar and Pandu Rangan, C. and Steffen, Bernhard and Sudan, Madhu and Terzopoulos, Demetri and Tygar, Doug and Vardi, Moshe Y. and Weikum, Gerhard and Rabin, Tal},
  year = {2010},
  volume = {6223},
  pages = {80--97},
  publisher = {{Springer Berlin Heidelberg}},
  address = {{Berlin, Heidelberg}},
  doi = {10.1007/978-3-642-14623-7_5},
  abstract = {At the heart of many recent lattice-based cryptographic schemes is a polynomial-time algorithm that, given a `high-quality' basis, generates a lattice point according to a Gaussian-like distribution. Unlike most other operations in lattice-based cryptography, however, the known algorithm for this task (due to Gentry, Peikert, and Vaikuntanathan; STOC 2008) is rather inefficient, and is inherently sequential.},
  file = {/home/leo/Zotero/storage/ZFCY5NNL/Peikert - 2010 - An Efficient and Parallel Gaussian Sampler for Lat.pdf},
  isbn = {978-3-642-14622-0 978-3-642-14623-7},
  keywords = {Cryptography,Lattices,LWE},
  language = {en}
}

@inproceedings{BG20_QMAhardnessConsistencyLocal,
  title = {{{QMA}}-Hardness of {{Consistency}} of {{Local Density Matrices}} with {{Applications}} to {{Quantum Zero}}-{{Knowledge}}},
  booktitle = {2020 {{IEEE}} 61st {{Annual Symposium}} on {{Foundations}} of {{Computer Science}} ({{FOCS}})},
  author = {Broadbent, A. and Grilo, A. B.},
  year = {2020},
  month = nov,
  pages = {196--205},
  issn = {2575-8454},
  doi = {10.1109/FOCS46700.2020.00027},
  abstract = {We provide several advances to the understanding of the class of Quantum Merlin-Arthur proof systems (QMA), the quantum analogue of NP. Our central contribution is proving a longstanding conjecture that the Consistency of Local Density Matrices (CLDM) problem is QMA-hard under Karp reductions. The input of CLDM consists of local reduced density matrices on sets of at most k qubits, and the problem asks if there is an n-qubit global quantum state that is locally consistent with all of the k-qubit local density matrices. The containment of this problem in QMA and the QMA-hardness under Turing reductions were proved by Liu [APPROX-RANDOM 2006]. Liu also conjectured that CLDM is QMA-hard under Karp reductions, which is desirable for applications, and we finally prove this conjecture. We establish this result using the techniques of simulatable codes of Grilo, Slofstra, and Yuen [FOCS 2019], simplifying their proofs and tailoring them to the context of OMA. In order to develop applications of CLDM, we propose a framework that we call locally simulatable proofs for QMA: this provides QMA proofs that can be efficiently verified by probing only k qubits and, furthermore, the reduced density matrix of any k-qubit subsystem of a good witness can be computed in polynomial time, independently of the witness. Within this framework, we show several advances in zero-knowledge in the quantum setting. We show for the first time a commit-and-open computational zero-knowledge proof system for all of QMA, as a quantum analogue of a ``sigma'' protocol. We then define a Proof of Quantum Knowledge, which guarantees that a prover is effectively in possession of a quantum witness in an interactive proof, and show that our zero-knowledge proof system satisfies this definition. Finally, we show that our proof system can be used to establish that QMA has a quantum non-interactive zero-knowledge proof system in the secret parameter setting. 11The full version of this work can be found in https://arxiv.org/abs/1911.07782.},
  file = {/home/leo/Zotero/storage/32E79QKP/Broadbent et Grilo - 2020 - QMA-hardness of Consistency of Local Density Matri.pdf;/home/leo/Zotero/storage/STRSFV5Q/9317977.html},
  keywords = {Complexity theory,Computational modeling,Cryptography,Probabilistic logic,Protocols,Published,Quantum complexity theory,Quantum entanglement,Quantum proofs,Quantum Zero-knowledge,Qubit}
}

@article{BCKM20_ComplexityTwoPartyQuantum,
  title = {On {{The Round Complexity}} of {{Two}}-{{Party Quantum Computation}}},
  author = {Bartusek, James and Coladangelo, Andrea and Khurana, Dakshita and Ma, Fermi},
  year = {2020},
  month = dec,
  abstract = {We investigate the round complexity of maliciously-secure two-party quantum computation (2PQC) with setup, and obtain the following results: - A three-message protocol (two-message if only one party receives output) in the common random string (CRS) model assuming classical two-message oblivious transfer (OT) with post-quantum malicious security. This round complexity is optimal for the sequential communication setting. Under the additional assumption of reusable malicious designated-verifier non-interactive zero-knowledge (MDV-NIZK) arguments for NP, our techniques give an MDV-NIZK for QMA. Each of the assumptions mentioned above is known from the quantum hardness of learning with errors (QLWE). - A protocol with two simultaneous rounds of communication, in a quantum preprocessing model, assuming sub-exponential QLWE. In fact, we construct a three-round protocol in the CRS model with only two rounds of online communication, which implies the above result. Along the way, we develop a new delayed simulation technique that we call "simulation via teleportation," which may be useful in other settings. In addition, we perform a preliminary investigation into barriers and possible approaches for two-round 2PQC in the CRS model, including an impossibility result for a natural class of simulators, and a proof-of-concept construction from a strong form of quantum virtual black-box (VBB) obfuscation. Prior to our work, maliciously-secure 2PQC required round complexity linear in the size of the quantum circuit.},
  archiveprefix = {arXiv},
  eprint = {2011.11212},
  eprinttype = {arxiv},
  file = {/home/leo/Zotero/storage/9R9AVX7P/Bartusek et al. - 2020 - On The Round Complexity of Two-Party Quantum Compu.pdf;/home/leo/Zotero/storage/KZPKUH3W/2011.html},
  journal = {arXiv:2011.11212 [quant-ph]},
  keywords = {Computer Science - Cryptography and Security,Not published,QMA,Quantum Physics,Zero-Knowledge},
  primaryclass = {quant-ph}
}

@inproceedings{ACGH20_NoninteractiveClassicalVerification,
  title = {Non-Interactive {{Classical Verification}} of {{Quantum Computation}}},
  booktitle = {Theory of {{Cryptography}}},
  author = {Alagic, Gorjan and Childs, Andrew M. and Grilo, Alex B. and Hung, Shih-Han},
  editor = {Pass, Rafael and Pietrzak, Krzysztof},
  year = {2020},
  pages = {153--180},
  publisher = {{Springer International Publishing}},
  address = {{Cham}},
  doi = {10.1007/978-3-030-64381-2_6},
  abstract = {In a recent breakthrough, Mahadev constructed an interactive protocol that enables a purely classical party to delegate any quantum computation to an untrusted quantum prover. We show that this same task can in fact be performed non-interactively (with setup) and in zero-knowledge.Our protocols result from a sequence of significant improvements to the original four-message protocol of Mahadev. We begin by making the first message instance-independent and moving it to an offline setup phase. We then establish a parallel repetition theorem for the resulting three-message protocol, with an asymptotically optimal rate. This, in turn, enables an application of the Fiat-Shamir heuristic, eliminating the second message and giving a non-interactive protocol. Finally, we employ classical non-interactive zero-knowledge (NIZK) arguments and classical fully homomorphic encryption (FHE) to give a zero-knowledge variant of this construction. This yields the first purely classical NIZK argument system for QMAQMA\textbackslash mathsf \{QMA\}, a quantum analogue of NPNP\textbackslash mathsf \{NP\} .We establish the security of our protocols under standard assumptions in quantum-secure cryptography. Specifically, our protocols are secure in the Quantum Random Oracle Model, under the assumption that Learning with Errors is quantumly hard. The NIZK construction also requires circuit-private FHE.},
  file = {/home/leo/Zotero/storage/NMDHRD8E/Alagic et al. - 2020 - Non-interactive Classical Verification of Quantum .pdf},
  isbn = {978-3-030-64381-2},
  keywords = {NIZK,Published,QMA,Quantum,Quantum complexity theory,Zero-Knowledge},
  language = {en},
  series = {Lecture {{Notes}} in {{Computer Science}}}
}

@article{Shm20_MultitheoremMaliciousDesignatedVerifier,
  title = {Multi-Theorem ({{Malicious}}) {{Designated}}-{{Verifier NIZK}} for {{QMA}}},
  author = {Shmueli, Omri},
  year = {2020},
  month = jul,
  abstract = {We present the first non-interactive zero-knowledge argument system for QMA with multi-theorem security. Our protocol setup constitutes an additional improvement and is constructed in the malicious designated-verifier (MDV-NIZK) model (Quach, Rothblum, and Wichs, EUROCRYPT 2019), where the setup consists of a trusted part that includes only a common uniformly random string and an untrusted part of classical public and secret verification keys, which even if sampled maliciously by the verifier, the zero knowledge property still holds. The security of our protocol is established under the Learning with Errors Assumption. Our main technical contribution is showing a general transformation that compiles any sigma protocol into a reusable MDV-NIZK protocol, using NIZK for NP. Our technique is classical but works for quantum protocols and allows the construction of a reusable MDV-NIZK for QMA.},
  archiveprefix = {arXiv},
  eprint = {2007.12923},
  eprinttype = {arxiv},
  file = {/home/leo/Zotero/storage/PW7PDANV/Shmueli - 2020 - Multi-theorem (Malicious) Designated-Verifier NIZK.pdf;/home/leo/Zotero/storage/R8JVR2C5/2007.html},
  journal = {arXiv:2007.12923 [quant-ph]},
  keywords = {Computer Science - Cryptography and Security,NIZK,Not published,QMA,Quantum Physics},
  primaryclass = {quant-ph}
}

@article{MY21_ClassicallyVerifiableDualMode,
  title = {Classically {{Verifiable}} ({{Dual}}-{{Mode}}) {{NIZK}} for {{QMA}} with {{Preprocessing}}},
  author = {Morimae, Tomoyuki and Yamakawa, Takashi},
  year = {2021},
  month = feb,
  abstract = {We propose three constructions of classically verifiable non-interactive proofs (CV-NIP) and non-interactive zero-knowledge proofs and arguments (CV-NIZK) for QMA in various preprocessing models. - We construct an information theoretically sound CV-NIP for QMA in the secret parameter model where a trusted party generates a quantum proving key and classical verification key and gives them to the corresponding parties while keeping it secret from the other party. Alternatively, we can think of the protocol as one in a model where the verifier sends an instance-independent quantum message to the prover as preprocessing. - We construct a CV-NIZK for QMA in the secret parameter model. It is information theoretically sound and zero-knowledge. - Assuming the quantum hardness of the leaning with errors problem, we construct a CV-NIZK for QMA in a model where a trusted party generates a CRS and the verifier sends an instance-independent quantum message to the prover as preprocessing. This model is the same as one considered in the recent work by Coladangelo, Vidick, and Zhang (CRYPTO '20). Our construction has the so-called dual-mode property, which means that there are two computationally indistinguishable modes of generating CRS, and we have information theoretical soundness in one mode and information theoretical zero-knowledge property in the other. This answers an open problem left by Coladangelo et al, which is to achieve either of soundness or zero-knowledge information theoretically. To the best of our knowledge, ours is the first dual-mode NIZK for QMA in any kind of model.},
  archiveprefix = {arXiv},
  eprint = {2102.09149},
  eprinttype = {arxiv},
  file = {/home/leo/Documents/Cours/2019-2022_-_These/Owncloud-lip6/Documents/Reviews/CRYPTO_2021/crypto2021-paper70.pdf;/home/leo/Zotero/storage/SMBYBZCJ/Morimae et Yamakawa - 2021 - Classically Verifiable (Dual-Mode) NIZK for QMA wi.pdf;/home/leo/Zotero/storage/U3LSZTNH/2102.html},
  journal = {arXiv:2102.09149 [quant-ph]},
  keywords = {Complexity Theory,Computer Science - Computational Complexity,Computer Science - Cryptography and Security,NIZK,Not published,QMA,Quantum,Quantum Physics,Zero-Knowledge},
  primaryclass = {quant-ph}
}

@inproceedings{GMW87_HowPlayANY,
  title = {How to Play {{ANY}} Mental Game},
  booktitle = {Proceedings of the Nineteenth Annual {{ACM}} Symposium on {{Theory}} of Computing},
  author = {Goldreich, O. and Micali, S. and Wigderson, A.},
  year = {1987},
  month = jan,
  pages = {218--229},
  publisher = {{Association for Computing Machinery}},
  address = {{New York, NY, USA}},
  doi = {10.1145/28395.28420},
  abstract = {We present a polynomial-time algorithm that, given as a input the description of a game with incomplete information and any number of players, produces a protocol for playing the game that leaks no partial information, provided the majority of the players is honest. Our algorithm automatically solves all the multi-party protocol problems addressed in complexity-based cryptography during the last 10 years. It actually is a completeness theorem for the class of distributed protocols with honest majority. Such completeness theorem is optimal in the sense that, if the majority of the players is not honest, some protocol problems have no efficient solution [C].},
  isbn = {978-0-89791-221-1},
  keywords = {Cryptography,MPC},
  series = {{{STOC}} '87}
}

@inproceedings{DNS12_ActivelySecureTwoParty,
  title = {Actively {{Secure Two}}-{{Party Evaluation}} of {{Any Quantum Operation}}},
  booktitle = {Advances in {{Cryptology}} \textendash{} {{CRYPTO}} 2012},
  author = {Dupuis, Fr{\'e}d{\'e}ric and Nielsen, Jesper Buus and Salvail, Louis},
  editor = {{Safavi-Naini}, Reihaneh and Canetti, Ran},
  year = {2012},
  series = {Lecture {{Notes}} in {{Computer Science}}},
  pages = {794--811},
  publisher = {{Springer}},
  address = {{Berlin, Heidelberg}},
  doi = {10.1007/978-3-642-32009-5_46},
  abstract = {We provide the first two-party protocol allowing Alice and Bob to evaluate privately even against active adversaries any completely positive, trace-preserving map F{$\in$}L(Ain{$\otimes$}Bin)\textrightarrow F{$\in$}L(Ain{$\otimes$}Bin)\textrightarrow\textbackslash mathscr \{F\} \textbackslash in \textbackslash mathrm \{L\}(\textbackslash mathcal \{A\}\_\{\{\{\textbackslash mathrm\{in\}\}\}\} \textbackslash otimes \textbackslash mathcal \{B\}\_\{\{\{\textbackslash mathrm\{in\}\}\}\}) \textbackslash rightarrow L(Aout{$\otimes$}Bout)L(Aout{$\otimes$}Bout)\textbackslash mathrm \{L\}(\textbackslash mathcal \{A\}\_\{\{\{\textbackslash mathrm\{out\}\}\}\} \textbackslash otimes \textbackslash mathcal \{B\}\_\{\{\{\textbackslash mathrm\{out\}\}\}\}), given as a quantum circuit, upon their joint quantum input state {$\rho$}in{$\in$}D(Ain{$\otimes$}Bin){$\rho$}in{$\in$}D(Ain{$\otimes$}Bin)\textbackslash rho \_\{\textbackslash mathrm \{in\}\}\textbackslash in \textbackslash mathrm\{D\}(\{\textbackslash mathcal \{A\}\_\{\{\{\textbackslash mathrm\{in\}\}\}\} \textbackslash otimes \textbackslash mathcal \{B\}\_\{\{\{\textbackslash mathrm\{in\}\}\}\}\}). Our protocol leaks no more to any active adversary than an ideal functionality for FF\textbackslash mathscr \{F\} provided Alice and Bob have the cryptographic resources for active secure two-party classical computation. Our protocol is constructed from the protocol for the same task secure against specious adversaries presented in [4].},
  isbn = {978-3-642-32009-5},
  language = {en},
  keywords = {Active Security,Authentication Code,Ideal Functionality,QMPC,Quantum Circuit,Quantum Operation},
  file = {/home/leo/Zotero/storage/7WTY4KY7/Dupuis et al. - 2012 - Actively Secure Two-Party Evaluation of Any Quantu.pdf}
}

@inproceedings{FS87_HowProveYourself,
  title = {How {{To Prove Yourself}}: {{Practical Solutions}} to {{Identification}} and {{Signature Problems}}},
  shorttitle = {How {{To Prove Yourself}}},
  booktitle = {Advances in {{Cryptology}} \textemdash{} {{CRYPTO}}' 86},
  author = {Fiat, Amos and Shamir, Adi},
  editor = {Odlyzko, Andrew M.},
  year = {1987},
  series = {Lecture {{Notes}} in {{Computer Science}}},
  pages = {186--194},
  publisher = {{Springer}},
  address = {{Berlin, Heidelberg}},
  doi = {10.1007/3-540-47721-7_12},
  abstract = {In this paper we describe simple identification and signature schemes which enable any user to prove his identity and the authenticity of his messages to any other user without shared or public keys. The schemes are provably secure against any known or chosen message attack if factoring is difficult, and typical implementations require only 1\% to 4\% of the number of modular multiplications required by the RSA scheme. Due to their simplicity, security and speed, these schemes are ideally suited for microprocessor-based devices such as smart cards, personal computers, and remote control systems.},
  isbn = {978-3-540-47721-1},
  language = {en},
  keywords = {Authentication Scheme,Credit Card,Security Level,Signature Scheme,Smart Card},
  file = {/home/leo/Zotero/storage/DB3MA39N/Fiat et Shamir - 1987 - How To Prove Yourself Practical Solutions to Iden.pdf}
}

@article{GO94_DefinitionsPropertiesZeroknowledge,
  title = {Definitions and Properties of Zero-Knowledge Proof Systems},
  author = {Goldreich, Oded and Oren, Yair},
  year = {1994},
  month = dec,
  journal = {Journal of Cryptology},
  volume = {7},
  number = {1},
  pages = {1--32},
  issn = {1432-1378},
  doi = {10.1007/BF00195207},
  url = {https://doi.org/10.1007/BF00195207},
  urldate = {2021-09-21},
  abstract = {In this paper we investigate some properties of zero-knowledge proofs, a notion introduced by Goldwasser, Micali, and Rackoff. We introduce and classify two definitions of zero-knowledge: auxiliary-input zero-knowledge and blackbox-simulation zero-knowledge. We explain why auxiliary-input zero-knowledge is a definition more suitable for cryptographic applications than the original [GMR1] definition. In particular, we show that any protocol solely composed of subprotocols which are auxiliary-input zero-knowledge is itself auxiliary-input zero-knowledge. We show that blackbox-simulation zero-knowledge implies auxiliary-input zero-knowledge (which in turn implies the [GMR1] definition). We argue that all known zero-knowledge proofs are in fact blackbox-simulation zero-knowledge (i.e., we proved zero-knowledge using blackbox-simulation of the verifier). As a result, all known zero-knowledge proof systems are shown to be auxiliary-input zero-knowledge and can be used for cryptographic applications such as those in [GMW2].},
  language = {en},
  file = {/home/leo/Zotero/storage/SJLYBFUQ/Goldreich et Oren - 1994 - Definitions and properties of zero-knowledge proof.pdf}
}

@inproceedings{BFM88_NoninteractiveZeroknowledgeIts,
  title = {Non-Interactive Zero-Knowledge and Its Applications},
  booktitle = {Proceedings of the Twentieth Annual {{ACM}} Symposium on {{Theory}} of Computing},
  author = {Blum, Manuel and Feldman, Paul and Micali, Silvio},
  year = {1988},
  month = jan,
  series = {{{STOC}} '88},
  pages = {103--112},
  publisher = {{Association for Computing Machinery}},
  address = {{New York, NY, USA}},
  doi = {10.1145/62212.62222},
  url = {https://doi.org/10.1145/62212.62222},
  urldate = {2021-09-21},
  abstract = {We show that interaction in any zero-knowledge proof can be replaced by sharing a common, short, random string. We use this result to construct the first public-key cryptosystem secure against chosen ciphertext attack.},
  isbn = {978-0-89791-264-8}
}

@inproceedings{DFMS19_SecurityFiatShamirTransformation,
  title = {Security of the {{Fiat}}-{{Shamir Transformation}} in the {{Quantum Random}}-{{Oracle Model}}},
  booktitle = {Advances in {{Cryptology}} \textendash{} {{CRYPTO}} 2019},
  author = {Don, Jelle and Fehr, Serge and Majenz, Christian and Schaffner, Christian},
  editor = {Boldyreva, Alexandra and Micciancio, Daniele},
  year = {2019},
  series = {Lecture {{Notes}} in {{Computer Science}}},
  pages = {356--383},
  publisher = {{Springer International Publishing}},
  address = {{Cham}},
  doi = {10.1007/978-3-030-26951-7_13},
  abstract = {The famous Fiat-Shamir transformation turns any public-coin three-round interactive proof, i.e., any so-called {$\Sigma$}-protocol{$\Sigma$}-protocol\textbackslash Sigma \{\textbackslash text \{-protocol\}\}, into a non-interactive proof in the random-oracle model. We study this transformation in the setting of a quantum adversary that in particular may query the random oracle in quantum superposition.Our main result is a generic reduction that transforms any quantum dishonest prover attacking the Fiat-Shamir transformation in the quantum random-oracle model into a similarly successful quantum dishonest prover attacking the underlying {$\Sigma$}-protocol{$\Sigma$}-protocol\textbackslash Sigma \{\textbackslash text \{-protocol\}\} (in the standard model). Applied to the standard soundness and proof-of-knowledge definitions, our reduction implies that both these security properties, in both the computational and the statistical variant, are preserved under the Fiat-Shamir transformation even when allowing quantum attacks. Our result improves and completes the partial results that have been known so far, but it also proves wrong certain claims made in the literature.In the context of post-quantum secure signature schemes, our results imply that for any {$\Sigma$}-protocol{$\Sigma$}-protocol\textbackslash Sigma \{\textbackslash text \{-protocol\}\} that is a proof-of-knowledge against quantum dishonest provers (and that satisfies some additional natural properties), the corresponding Fiat-Shamir signature scheme is secure in the quantum random-oracle model. For example, we can conclude that the non-optimized version of Fish, which is the bare Fiat-Shamir variant of the NIST candidate Picnic, is secure in the quantum random-oracle model.},
  isbn = {978-3-030-26951-7},
  language = {en},
  file = {/home/leo/Zotero/storage/76PD6JCB/Don et al. - 2019 - Security of the Fiat-Shamir Transformation in the .pdf}
}

@inproceedings{LZ19_RevisitingPostquantumFiatShamir,
  title = {Revisiting {{Post}}-Quantum {{Fiat}}-{{Shamir}}},
  booktitle = {Advances in {{Cryptology}} \textendash{} {{CRYPTO}} 2019},
  author = {Liu, Qipeng and Zhandry, Mark},
  editor = {Boldyreva, Alexandra and Micciancio, Daniele},
  year = {2019},
  series = {Lecture {{Notes}} in {{Computer Science}}},
  pages = {326--355},
  publisher = {{Springer International Publishing}},
  address = {{Cham}},
  doi = {10.1007/978-3-030-26951-7_12},
  abstract = {The Fiat-Shamir transformation is a useful approach to building non-interactive arguments (of knowledge) in the random oracle model. Unfortunately, existing proof techniques are incapable of proving the security of Fiat-Shamir in the quantum setting. The problem stems from (1) the difficulty of quantum rewinding, and (2) the inability of current techniques to adaptively program random oracles in the quantum setting. In this work, we show how to overcome the limitations above in many settings. In particular, we give mild conditions under which Fiat-Shamir is secure in the quantum setting. As an application, we show that existing lattice signatures based on Fiat-Shamir are secure without any modifications.},
  isbn = {978-3-030-26951-7},
  language = {en}
}

@inproceedings{PRS17_PseudorandomnessRingLWEAny,
  title = {Pseudorandomness of Ring-{{LWE}} for Any Ring and Modulus},
  booktitle = {Proceedings of the 49th {{Annual ACM SIGACT Symposium}} on {{Theory}} of {{Computing}}},
  author = {Peikert, Chris and Regev, Oded and {Stephens-Davidowitz}, Noah},
  year = {2017},
  month = jun,
  series = {{{STOC}} 2017},
  pages = {461--473},
  publisher = {{Association for Computing Machinery}},
  address = {{New York, NY, USA}},
  doi = {10.1145/3055399.3055489},
  url = {https://doi.org/10.1145/3055399.3055489},
  urldate = {2021-09-24},
  abstract = {We give a polynomial-time quantum reduction from worst-case (ideal) lattice problems directly to decision (Ring-)LWE. This extends to decision all the worst-case hardness results that were previously known for the search version, for the same or even better parameters and with no algebraic restrictions on the modulus or number field. Indeed, our reduction is the first that works for decision Ring-LWE with any number field and any modulus.},
  isbn = {978-1-4503-4528-6},
  keywords = {lattices,Learning with Errors},
  file = {/home/leo/Zotero/storage/WXBBFFAB/Peikert et al. - 2017 - Pseudorandomness of ring-LWE for any ring and modu.pdf}
}

@inproceedings{DGJ+20_SecureMultipartyQuantum,
  title = {Secure {{Multi-party Quantum Computation}} with a {{Dishonest Majority}}},
  booktitle = {Advances in {{Cryptology}} – {{EUROCRYPT}} 2020},
  author = {Dulek, Yfke and Grilo, Alex B. and Jeffery, Stacey and Majenz, Christian and Schaffner, Christian},
  editor = {Canteaut, Anne and Ishai, Yuval},
  date = {2020},
  series = {Lecture {{Notes}} in {{Computer Science}}},
  pages = {729--758},
  publisher = {{Springer International Publishing}},
  location = {{Cham}},
  doi = {10.1007/978-3-030-45727-3_25},
  abstract = {The cryptographic task of secure multi-party (classical) computation has received a lot of attention in the last decades. Even in the extreme case where a computation is performed between k mutually distrustful players, and security is required even for the single honest player if all other players are colluding adversaries, secure protocols are known. For quantum computation, on the other hand, protocols allowing arbitrary dishonest majority have only been proven for k=2k=2k=2. In this work, we generalize the approach taken by Dupuis, Nielsen and Salvail (CRYPTO 2012) in the two-party setting to devise a secure, efficient protocol for multi-party quantum computation for any number of players k, and prove security against up to k−1k−1k-1 colluding adversaries. The quantum round complexity of the protocol for computing a quantum circuit of \{CNOT,T\}\{CNOT,T\}\textbackslash\{\textbackslash mathsf \{CNOT\}, \textbackslash mathsf \{T\} \textbackslash\} depth d is O(k⋅(d+logn))O(k⋅(d+log⁡n))O(k \textbackslash cdot (d + \textbackslash log n)), where n is the security parameter. To achieve efficiency, we develop a novel public verification protocol for the Clifford authentication code, and a testing protocol for magic-state inputs, both using classical multi-party computation.},
  isbn = {978-3-030-45727-3},
  langid = {english},
  file = {/home/leo/Zotero/storage/7YFVF6BR/Dulek et al. - 2020 - Secure Multi-party Quantum Computation with a Dish.pdf}
}

@inproceedings{GV19_ComputationallySecureComposableRemote,
  title = {Computationally-{{Secure}} and {{Composable Remote State Preparation}}},
  booktitle = {2019 {{IEEE}} 60th {{Annual Symposium}} on {{Foundations}} of {{Computer Science}} ({{FOCS}})},
  author = {Gheorghiu, A. and Vidick, T.},
  date = {2019-11},
  pages = {1024--1033},
  issn = {2575-8454},
  doi = {10.1109/FOCS.2019.00066},
  abstract = {We introduce a protocol between a classical polynomial-time verifier and a quantum polynomial-time prover that allows the verifier to securely delegate to the prover the preparation of certain single-qubit quantum states The prover is unaware of which state he received and moreover, the verifier can check with high confidence whether the preparation was successful. The delegated preparation of single-qubit states is an elementary building block in many quantum cryptographic protocols. We expect our implementation of "random remote state preparation with verification", a functionality first defined in (Dunjko and Kashefi 2014), to be useful for removing the need for quantum communication in such protocols while keeping functionality. The main application that we detail is to a protocol for blind and verifiable delegated quantum computation (DQC) that builds on the work of (Fitzsimons and Kashefi 2018), who provided such a protocol with quantum communication. Recently, both blind an verifiable DQC were shown to be possible, under computational assumptions, with a classical polynomial-time client (Mahadev 2017, Mahadev 2018). Compared to the work of Mahadev, our protocol is more modular, applies to the measurement-based model of computation (instead of the Hamiltonian model) and is composable. Our proof of security builds on ideas introduced in (Brakerski et al. 2018).},
  eventtitle = {2019 {{IEEE}} 60th {{Annual Symposium}} on {{Foundations}} of {{Computer Science}} ({{FOCS}})},
  keywords = {blind an verifiable DQC,blind delegated quantum computation,classical polynomial-time client,classical polynomial-time verifier,Classical-Client,composable remote state preparation,composable security,computational assumptions,computational complexity,computationally-secure,cryptographic protocols,cryptography,Cryptography,delegated preparation,Eigenvalues and eigenfunctions,elementary building block,learning with errors,LWE,Mahadev 2017,Mahadev 2018,polynomials,Protocols,Published,quantum communication,Quantum communication,quantum computing,quantum cryptographic protocols,quantum cryptography,quantum entanglement,quantum polynomial-time prover,quantum random access code,Qubit,random remote state preparation,RSP,Self-Testing,single-qubit quantum states,single-qubit states,verifiable delegated quantum computation,verifiable quantum computation},
  file = {/home/leo/Zotero/storage/QASUWJU8/Gheorghiu et Vidick - 2019 - Computationally-Secure and Composable Remote State.pdf;/home/leo/Zotero/storage/38ATEB3C/8948627.html}
}

@inproceedings{MY98_QuantumCryptographyImperfect,
  title = {Quantum Cryptography with Imperfect Apparatus},
  booktitle = {Proceedings 39th {{Annual Symposium}} on {{Foundations}} of {{Computer Science}} ({{Cat}}. {{No}}.{{98CB36280}})},
  author = {Mayers, D. and Yao, A.},
  date = {1998-11},
  pages = {503--509},
  issn = {0272-5428},
  doi = {10.1109/SFCS.1998.743501},
  abstract = {Quantum key distribution, first proposed by C.H. Bennett and G. Brassard (1984), provides a possible key distribution scheme whose security depends only on the quantum laws of physics. So far the protocol has been proved secure even under channel noise and detector faults of the receiver but is vulnerable if the photon source used is imperfect. In this paper we propose and give a concrete design for a new concept, self-checking source, which requires the manufacturer of the photon source to provide certain tests; these tests are designed such that, if passed, the source is guaranteed to be adequate for the security of the quantum key distribution protocol, even though the testing devices may not be built to the original specification. The main mathematical result is a structural theorem which states that, for any state in a Hilbert space, if certain EPR-type equations are satisfied, the state must be essentially the orthogonal sum of EPR pairs.},
  eventtitle = {Proceedings 39th {{Annual Symposium}} on {{Foundations}} of {{Computer Science}} ({{Cat}}. {{No}}.{{98CB36280}})},
  keywords = {Automatic testing,Concrete,Cryptographic protocols,Cryptography,Detectors,Fault detection,Optical receivers,Physics,Pulp manufacturing,Security},
  file = {/home/leo/Zotero/storage/M7QXM956/Mayers et Yao - 1998 - Quantum cryptography with imperfect apparatus.pdf;/home/leo/Zotero/storage/YIN2IBBW/743501.html}
}

@online{KKMO21_DelegatingMultiPartyQuantum,
  title = {Delegating {{Multi-Party Quantum Computations}} vs. {{Dishonest Majority}} in {{Two Quantum Rounds}}},
  author = {Kapourniotis, Theodoros and Kashefi, Elham and Music, Luka and Ollivier, Harold},
  date = {2021-09-09},
  eprint = {2102.12949},
  eprinttype = {arxiv},
  primaryclass = {quant-ph},
  url = {http://arxiv.org/abs/2102.12949},
  urldate = {2021-10-11},
  abstract = {Multi-Party Quantum Computation (MPQC) has attracted a lot of attention as a potential killer-app for quantum networks through it's ability to preserve privacy and integrity of the highly valuable computations they would enable. Contributing to the latest challenges in this field, we present a composable protocol achieving blindness and verifiability even in the case of a single honest client. The security of our protocol is reduced, in an information-theoretically secure way, to that of a classical composable Secure Multi-Party Computation (SMPC) used to coordinate the various parties. Our scheme thus provides a statistically secure upgrade of such classical scheme to a quantum one with the same level of security. In addition, (i) the clients can delegate their computation to a powerful fully fault-tolerant server and only need to perform single qubit operations to unlock the full potential of multi-party quantum computation; (ii) the amount of quantum communication with the server is reduced to sending quantum states at the beginning of the computation and receiving the output states at the end, which is optimal and removes the need for interactive quantum communication; and (iii) it has a low constant multiplicative qubit overhead compared to the single-client delegated protocol it is built upon. The main technical ingredient of our paper is the bootstraping of the MPQC construction by Double Blind Quantum Computation, a new composable resource for blind multiparty quantum computation, that demonstrates the surprising fact that the full protocol does not require verifiability of all components to achieve security.},
  archiveprefix = {arXiv},
  keywords = {Quantum Physics},
  file = {/home/leo/Zotero/storage/2L4MXWV5/Kapourniotis et al. - 2021 - Delegating Multi-Party Quantum Computations vs. Di.pdf;/home/leo/Zotero/storage/ZV4SKTEC/2102.html}
}

  \publishedVsArxiv{\section{Details on MPC}\label{appendix:introMPC}
  \textEnd[category=introMPC]{}
  \printProofs[introMPC]}{}

  \section{Details on Zero-Knowledge}\label{appendix:introZK}

  \textEnd[category=introZK]{}
  \printProofs[introZK]

  \section{Details on \LWE{}}\label{appendix:introLWE}
  \textEnd[category=introLWE]{}
  \printProofs[introLWE]

  \publishedVsArxiv{
    \section{Detailed proofs for the different protocols}

    \subsection{Proofs of protocol \blind{}}\label{cref:appendixBlind}
    We present in this section the detailed proofs of \cref{subsec:blind}.

    \textEnd[category=proofsBlind]{}
    \printProofs[proofsBlind]

    \subsection{Proofs of protocol \blindSup{}}
    We present in this section the detailed proofs of \cref{subsec:blindSup}.
    Its correctness is guaranteed by \cref{lem:correctnessBlind}, proven above.

    \textEnd[category=proofsBlindSup]{}
    \printProofs[proofsBlindSup]

    \subsection{Proof of the impossibility of a protocol \blindCan{}}\label{subsec:blindCan}
    We present in this section the detailed proofs of \cref{subsec:blindCan}.

    \textEnd[category=proofsBlindCan]{}
    \printProofs[proofsBlindCan]

    \subsection{Proofs of protocol \blindCanSup{}}
    We present in this section the detailed proofs of \cref{subsec:blindCanSup}.

    \textEnd[category=proofsBlindCanSup]{}
    \printProofs[proofsBlindCanSup]

    \subsection{Proofs of protocol \authBlindCanDist{}}\label{appendix:authBlindCanDist}
    We present in this section the detailed proofs of \cref{subsec:authBlindCanDist}.

    \textEnd[category=proofsAuthBlindCanDist]{}
    \printProofs[proofsAuthBlindCanDist]

    \section{Proofs of the NIZKoQS}
    \textEnd[category=proofsNIZKoQS]{}
    \printProofs[proofsNIZKoQS]

    \section{Proofs of the function constructions}\label{appendix:fctConstruction}
    \textEnd[category=functionConstructions]{}
    \printProofs[functionConstructions]
  }
%}

\end{document}